\def \N {\mathbb{N}}
\def \G {\mathbb{G}}
\def \M {\mathbb{M}}
\def \ie {i.e.}
\def \eg {e.g.}
\def \etal {et al.}
\def \etc {\textit{etc.}}
\def \vs {\widehat{v}}
\def \F {F}
\def \Add {A^*}
\def \ps {\pi}
\def \dist {\lambda}
\newtheorem{definition}{Definition}
\newtheorem{theorem}{Theorem}
\newtheorem{observation}{Observation}
\DeclareMathOperator*{\argmax}{arg\,max}
\DeclareMathOperator*{\argmin}{arg\,min}
\begin{document}

\title{Hiding in Multilayer Networks}

\author[a,b]{Marcin Waniek}
\author[b,*]{Tomasz P. Michalak}
\author[a,*]{Talal Rahwan}

\renewcommand*{\Affilfont}{\normalsize}

\affil[a]{Computer Science, New York University, Abu Dhabi, UAE}
\affil[b]{Institute of Informatics, University of Warsaw, Warsaw, Poland}
\affil[*]{To whom correspondence should be addressed:  tpm@mimuw.edu.pl, tr72@nyu.edu}

\date{}

\maketitle

\begin{abstract}
Multilayer networks allow for modeling  complex relationships, where individuals are embedded in multiple social networks at the same time. Given the ubiquity of such relationships, these networks have been increasingly gaining attention in the literature. This paper presents the first analysis of the robustness of centrality measures against strategic manipulation in multilayer networks.
More specifically, we consider an ``evader'' who strategically chooses which connections to form in a multilayer network in order to obtain a low centrality-based ranking---thereby reducing the chance of being highlighted as a key figure in the network---while ensuring that she remains connected to a certain group of people. We prove that determining an optimal way to ``hide'' is NP-complete and hard to approximate for most centrality measures considered in our study. Moreover, we empirically evaluate a number of heuristics that the evader can use. Our results suggest that the centrality measures that are functions of the entire network topology are more robust to such a strategic evader than their counterparts which consider each layer separately.
\end{abstract}

\section{Introduction}
\label{sec:introduction}
Owing to several incidents in the past few years, most notably those concerning the American presidential elections of 2016, the general public has become increasingly concerned with the privacy and security of their online activities \cite{persily20172016}. Experts, however, had been warning about such potential risks long ago. For instance,  Mislove~\etal~\cite{Mislove:2010} famously showed that, by coupling
the social network of a given Facebook user with publicly-known attributes of some other users, it is possible to infer otherwise-private information about that user.
Worryingly, this is true not only for typically innocuous data, but also for potentially-sensitive confidential information such as political preferences (as demonstrated in the case of Cambridge Analytica), or even sexual orientation \cite{kitchin2016ethics}.

Various proposals on how to deal with such privacy challenges have already been put forward. Among those proposals is the
General Data Protection Regulation, implemented in May 2018, which is perhaps the most well-known attempt to use state-enforced, legal instruments \cite{EU:2016:gdpr}. On the other hand, there have been a plethora of algorithmic solutions for privacy protection \cite{Lane:et:al:2014,Kearns:et:al:2016}. Perhaps the most well-known such solutions come from the network anonymization and de-anonymization literature \cite{Zhou:2008,Narayanan:Shmatikov:2009,Kayes:Iamnitchi:2015}, which studies the problem faced by a \textit{data trustee} who publishes anonymized network data to be analyzed for various purposes. In this literature, the responsibility of protecting the privacy of the network members lies solely on the shoulders of the data trustee, while the network members are implicitly assumed to be passive in this regard. In contrast, a recent body of work studies ways in which the network members can themselves protect their own privacy by acting strategically to evade various tools from the social network analysis toolkit \cite{Michalak:et:al:2017}. In this context, three fundamental classes of tools have been considered: (1) centrality measures, (2) community detection algorithms; and (3) link prediction algorithms. More specifically, Waniek~\etal~\cite{waniek2018hiding,waniek2017construction} studied how key individuals in a social network could rewire the network to avoid being highlighted by centrality measures while maintaining their own influence within the network. The authors also studied how a group of individuals could avoid being identified by community detection algorithms. Furthermore, Yu~\etal~\cite{yu2018target}, Waniek~\etal~\cite{Waniek:et:al:2019}, and Zhou~\etal~\cite{zhou2019attacking} studied how to hide one's sensitive relationships from link prediction algorithms. 

The aforementioned literature on the strategic behaviour of network members demonstrates that it is indeed possible to develop reasonably effective heuristics to escape detection by fundamental network analysis tools. However, the main limitation of this literature is that it focuses only on standard, single-layered networks. In contract, people often interact with each other via a complicated pattern of relationships, thereby creating multiple subsystems, or ``layers'', of connectivity. This is even more so nowadays when many of us belong to multiple social media platforms simultaneously. Furthermore, multilayer networks are increasingly being recognized not only in the context of human interactions, but also in many natural and engineered systems~\cite{DeDomenico:et:al:2013a}. For instance, to travel from one point to another in many urban transportation networks, one can choose between a road subnetwork (car or taxis), bus or tram subnetwork, subway subnetwork, local train subnetwork, bike subnetwork, footpath subnetwork, or any combination thereof. Each such subnetwork has its own distinct characteristics, which become difficult, or even impossible, to account for if modelled as a single layer due to the interdependencies between the different layers.
The theoretical and empirical analysis of multilayer networks has recently attracted significant attention (see the work by Kivel{\"a}~\etal~\cite{kivela2014multilayer} for a comprehensive review). This new body of research is primarily driven by the fact that, due to the much more complex nature of multilayer networks, many results for singlelayer networks become obsolete.
 
Motivated by these observations, we present in this paper the first analysis of how to protect ones' privacy against centrality measures in multilayer networks. Specifically, we consider an evader who wishes to connect to a certain group of individuals, without being highlighted by centrality measures as a key member in the multilayer network. To this end, the evader has to strategically choose at which layer(s) to connect to those individuals. We prove that the corresponding optimization problem is NP-complete and hard to approximate for most centrality measures considered in our study. Furthermore, we empirically evaluate a number of heuristic algorithms that the evader can use. The results of this evaluation suggest that the centrality measures that are functions of the entire network topology are more robust to such a strategic evader than their counterparts which consider each layer separately.

\section{Preliminaries}
\label{sec:preliminaries}


\subsection{Basic Network Notation and Definitions}

Let $G =(V, E) \in \G$ denote a \textit{simple (single-layer) network}, where $V$ is the set of $n$ nodes and $E \subseteq V \times V$ the set of edges.
We denote an edge between nodes $v$ and $w$ by~$(v,w)$.

In this paper we consider \textit{multilayer networks}, \ie, networks where edges can represent different types of relations.
We will denote a multilayer network by $M=(V_L,E_L,V,L) \in \M$, where $V$ is the set of nodes, $L$ is the set of layers (\ie, types of relations), $V_L \subseteq V \times L$ is the set of occurrences of nodes in layers (e.g., having $(v,\alpha)\in V_L$ means that node $v$ appears in layer $\alpha$), and $E_L \subseteq V_L \times V_L$ is the set of edges.
We will denote an occurrence of node $v$ in layer $\alpha$ by $v^\alpha$.
Note that $V = \{v : \exists_{\alpha \in L} v^\alpha \in V_L\}$.
Let $V^\alpha$ be the set of nodes occurring in layer $\alpha$, \ie, $V^\alpha=\{v \in V : v^\alpha \in V_L\}$, and let $G^\alpha$ denote the simple network consisting of all the nodes and edges in layer $\alpha$, \ie, $G^\alpha = (V^\alpha, \{(v,w): (v^\alpha,w^\alpha) \in E_L\})$. 

We focus on \textit{undirected} networks, \ie, we do not discern between edges $(v^\alpha,w^\beta)$ and $(w^\beta,v^\alpha)$. Moreover, we do not consider self-loops, \ie, $\forall_{v^\alpha \in V_L}(v^\alpha,v^\alpha) \notin E_L$.
Multilayer network allow for \textit{inter-layer edges}, which are edges between two layers; they may connect two different nodes, or may connect two occurrences of the same node. We restrict our attention to networks with \textit{diagonal couplings}, \ie, networks where every inter-layer edge connects two occurrences of the same node, i.e., $\forall_{(v^\alpha,w^\beta) \in E_L} \alpha \neq \beta \rightarrow v = w$.

Notice that, in some literature, multilayer networks with diagonal couplings are called \textit{multiplex} networks. However, it is also typically assumed that the multiplex networks are node-aligned (\ie, every node occurs in every layer), which is not the case in our setting. Hence, we will use the more general term ``\textit{multilayer networks}''. For a comprehensive discussion of the nomenclature, see Kivel{\"a}~\etal~\cite{kivela2014multilayer}.

A path in a simple network is an ordered sequence of nodes in which every two consecutive nodes are connected by an edge.
A path in a multilayer network is an ordered sequence of node occurrences in which every two consecutive occurrences are connected by an edge.
The length of a path is the number of edges in that path.
The set of all shortest paths between a pair of nodes, $v,w \in V$ will be denoted by $\ps_G(v,w)$.
The distance between a pair of nodes $v,w \in V$ is the length of a shortest path between them, and is denoted by $\dist_G(v,w)$.
We assume that if there does not exist a path between $v$ and $w$ then $\dist_G(v,w)=\infty$.
In a multilayer network we consider distance between $v$ and $w$ to be the shortest distance between an occurrence of $v$ in any layer $\alpha$ and an occurrence of $w$ in any layer $\beta$ (possibly $\alpha \neq \beta$).

For any node, $v\in V$, in a simple network, $G$, we denote by $N_G(v) = \{w \in V : (v,w) \in E\}$ the set of neighbors of $v$ in $G$. Similarly, given a multilayer network $M$, we write $N_M(v) = \{w \in V : (v^\alpha,w^\beta) \in E_L\}$.
Finally, we denote by $N^\alpha_M(v)$ the set of neighbors of $v$ in layer $\alpha$, \ie, $N^\alpha_M(v) = \{w \in V : (v^\alpha,w^\alpha) \in E_L\}$.
We will often omit the network itself from the notation whenever it is clear from the context, \eg, by writing $\dist(v,w)$ instead of $\dist_G(v,w)$.

\subsection{Centrality Measures}

A centrality measure~\cite{bavelas1948mathematical} is a function that expresses the importance of a given node in a given network.
Arguably, the best-known centrality measures are \textit{degree}, \textit{closeness} and \textit{betweenness}.

\textit{Degree centrality}~\cite{shaw1954group} assumes that the importance of a node is proportional to the number of its neighbors, i.e., the degree centrality of node $v$ in network $G$ is:
$$
	c_{degr}(G,v) = |N_G(v)|.
$$

\textit{Closeness centrality}~\cite{beauchamp1965improved} quantifies the importance of a node in terms of shortest distances from this node to all other nodes in the network. Formally, the closeness centrality of node $v$ in network $G$ can be expressed as:
$$
	c_{clos}(G,v) = \sum_{w \in V \setminus \{v\}}\frac{1}{\dist_G(v,w)}.
$$

\textit{Betweenness centrality}~\cite{anthonisse1971rush,freeman1977set} states that, if we consider all the shortest paths in the network, then the more such paths traverse through a given node (it is often stated that the node \textit{controls} such paths), the more important the role of that node in the network. More formally, the betweenness centrality of node $v \in V$ in network $G$ is:
$$
	c_{betw}(G,v) =
		\sum_{w,u \in V \setminus \{v\}}
			\frac {|\{p\in\ps_G(w,u) : v\in p\}|} {|\ps_G(w,u)|}.
$$

The definitions of degree and closeness centrality can be generalized to multilayer networks using the definitions of neighbors and distance for multilayer networks (see above).
As for the betweenness centrality of node $v$ in a multilayer network $M$, it grows with the number of occurrences of $v$ on the shortest paths between pairs of other nodes:
$$
	c_{betw}(M,v) =
		\sum_{w,u \in V \setminus \{v\}}
			\frac {|\{(v^\alpha,p): v^\alpha \in p, p \in \ps_M(w,u)\}|} {|\ps_M(w,u)|}
$$
To avoid any potential confusion, the measures that are designed for simple networks will be referred to as ``\textit{local} centrality measures'', since they can be applied to only a single layer. Conversely, the measures that are designed for multilayer networks will be referred to as a  ``\textit{global} centrality measures'', since they take all layers into consideration. 

\section{Theoretical Analysis}\label{sec:definition-theoretical}

In this section we formally define our computational problems and then move on to analyse them.

\subsection{Definitions of Computational Problems}

We  define the decision problems before defining the corresponding optimization problems. Here, the ``group of contacts'' is the set of individuals to whom the evader wishes to connect while remaining hidden from centrality measures. 

\subsection{Decision Problems} We will define two different decision versions of this problem, starting with the global version.

\begin{definition}[Multilayer Global Hiding]
This problem is defined by a tuple, $(M,\vs,\F,c,d)$, where $M=(V_L, E_L, V, L)$ is a multilayer network, $\vs \in V$ is the evader, $\F \subset V$ is the group of contacts, $c$ is a centrality measure, and $d \in \N$ is a safety margin.
The goal is to identify a set of edges to be added to the network, $\Add \subseteq \{(\vs^\alpha,v^\alpha) : v \in \F \land \vs^\alpha \in V_L \land v^\alpha \in V_L\}$, such that in the resulting network $\widehat{M}=(V_L, E_L \cup \Add, V, L)$ the evader is connected with every contact in at least one layer and there are at least $d$ nodes with a centrality score greater than that of the evader, \ie:
$$
\forall_{v \in \F} \exists_{\alpha \in L} (\vs^\alpha, v^\alpha) \in \Add,
$$
$$
\exists_{W \subset V} \left( |W| \geq d \land \forall_{v \in W} c(\widehat{M},v) > c(\widehat{M},\vs) \right).
$$
\end{definition}

We say that ``$\vs$ is hidden'' when there are at least $d$ nodes whose centrality is greater than that of $\vs$.

\begin{definition}[Multilayer Local Hiding]
This problem is defined by a tuple, $(M,\vs,\F,c,\left(d^\alpha\right)_{\alpha \in L})$, where $M=(V_L, E_L, V, L)$ is a multilayer network, $\vs \in V$ is the evader, $\F \subset V$ is the group of contacts, $c$ is a centrality measure, and $d^\alpha \in \N$ is a safety margin for layer $\alpha \in L$.
The goal is to identify a set of edges to add, $\Add \subseteq \{(\vs^\alpha,v^\alpha) : v \in \F \land \vs^\alpha \in V_L \land v^\alpha \in V_L\}$, such that in the resulting network $\widehat{M}=(V_L, E_L \cup \Add, V, L)$ the evader is connected with every contact in at least one layer and for each layer $\alpha$ the network $G^\alpha$ contains at least $d^\alpha$ nodes with a centrality score greater than that of the evader, \ie:
$$
\forall_{v \in \F} \exists_{\alpha \in L} (\vs^\alpha, v^\alpha) \in \Add,
$$
$$
\forall_{\alpha \in L} \exists_{W \subset V^{\alpha}} \left( |W| \geq d^\alpha \land \forall_{v \in W} c(\widehat{M}^{\alpha},v) > c(\widehat{M}^{\alpha},\vs) \right).
$$
\end{definition}

We say that ``$\vs$ is hidden in $\alpha$'' if there are at least $d^\alpha$ nodes with centrality in layer $\alpha$ greater than that of $\vs$ in $\alpha$.

In the global version of the problem we assume that the seeker is able to observe and analyze the entire multilayer network using centrality measures, hence the evader's goal is to minimize her centrality ranking in the network as a whole.
On the other hand, the local version of the problem models situations where the seeker analyzes only one of the layers, \eg, if the seeker gains access to the email communication network, but not to the phone-call network. In such situations, the evader's goal is to attain an adequate level of safety in each layer separately.

The approach to hiding represented by the two problems differs from the one developed for simple networks by Waniek~\etal~\cite{waniek2017construction,waniek2018hiding}.
Their hiding algorithms focus on choosing which edge(s) to add or remove from the single layer, often causing the evader to lose the direct connection with some of the neighbors.
The algorithms presented in our paper focus on choosing the layer in which to maintain the connection, and allow the evader to keep direct links with all contacts.
Notice that this approach cannot be applied to simple networks, as there is only one way to have a direct link between the evader and every contact in a single layer.

\subsection{Optimization Problems}
We now define the corresponding optimization problems. They take into consideration a situation when it is impossible to connect the evader with all the contacts.

\begin{definition}[Maximum Multilayer Global Hiding]
This problem is defined by a tuple, $(M,\vs,\F,c,d)$, where $M=(V_L, E_L, V, L)$ is a multilayer network, $\vs \in V$ is the evader, $\F \subset V$ is the group of contacts, $c$ is a centrality measure, and $d \in \N$ is a safety margin.
The goal is then to identify a set of edges to be added to the network, $\Add \subseteq \{(\vs^\alpha,v^\alpha) : v \in \F \land \vs^\alpha \in V_L \land v^\alpha \in V_L\}$, such that in the resulting network $\widehat{M}=(V_L, E_L \cup \Add, V, L)$ the evader is connected with as many contacts as possible, while there are at least $d$ nodes with a centrality score greater than that of the evader.
\end{definition}

\begin{definition}[Maximum Multilayer Local Hiding]
This problem is defined by a tuple, $(M,\vs,\F,c,\left(d^\alpha\right)_{\alpha \in L})$, where $M=(V_L, E_L, V, L)$ is a multilayer network, $\vs \in V$ is the evader, $\F \subset V$ is the group of contacts, $c$ is a centrality measure, and $d^\alpha \in \N$ is a safety margin for layer $\alpha \in L$.
The goal is then to identify a set of edges to be added to the network, $\Add \subseteq \{(\vs^\alpha,v^\alpha) : v \in \F \land \vs^\alpha \in V_L \land v^\alpha \in V_L\}$, such that in the resulting network $\widehat{M}=(V_L, E_L \cup \Add, V, L)$ the evader is connected with as many contacts as possible, while for each layer $\alpha$ the network $G^\alpha$ contains at least $d^\alpha$ nodes with a centrality score greater than that of the evader.
\end{definition}

Intuitively, the goal is to connect the evader with as many contacts as possible, while keeping the evader hidden.

\begin{table}[t]
\caption{Summary of our computational complexity results.}
\smallskip
\centering
\smallskip
\begin{tabular}{lcc}
Centrality	& Multilayer Global Hiding	& Multilayer Local Hiding \\
\hline
Degree		& P							& NP-complete	\\
Closeness	& NP-complete				& NP-complete	\\
Betweenness	& NP-complete				& NP-complete	\\
\end{tabular}
\label{tab:complexity}
\end{table}

\subsection{Complexity Analysis}

The complexity results for both the global and local versions of the problem are listed below (see Table~\ref{tab:complexity} for a summary).

\begin{observation}
\label{thrm:npc-degree-global}
The problem of Multilayer Global Hiding is in P given the degree centrality measure.
In fact, for a given problem instance either any $\Add$ that connects $\vs$ with all contacts is a solution, or there are no solutions at all.
\end{observation}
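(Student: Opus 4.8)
The plan is to show that the post‑manipulation centrality of \emph{every} node is already fixed by the problem instance — it does not depend on which feasible $\Add$ the evader selects — from which the stated dichotomy is immediate. First I would record the structural constraints: any $\Add$ that connects $\vs$ to all of $\F$ satisfies $\forall_{v\in\F}\exists_{\alpha\in L}(\vs^\alpha,v^\alpha)\in\Add$, and by definition $\Add$ may only contain edges of the form $(\vs^\alpha,v^\alpha)$ with $v\in\F$. Since the (global) degree centrality counts \emph{distinct} neighbours over all layers, this forces the neighbour set of $\vs$ in $\widehat{M}=(V_L,E_L\cup\Add,V,L)$ to be exactly $N_M(\vs)\cup\F$; hence $c_{degr}(\widehat{M},\vs)=|N_M(\vs)\cup\F|$, a quantity that is completely independent of $\Add$ (in particular, it is irrelevant in how many, or which, layers a given contact is reached).

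Next I would apply the same reasoning to the remaining nodes. For each contact $v\in\F$, the only edges incident to $v$ that $\Add$ can contain go to $\vs$, and at least one is present, so the neighbour set of $v$ in $\widehat{M}$ is $N_M(v)\cup\{\vs\}$ and $c_{degr}(\widehat{M},v)=|N_M(v)\cup\{\vs\}|$, again independent of $\Add$. For every other node $u\in V\setminus(\F\cup\{\vs\})$ no incident edge is ever added, so $c_{degr}(\widehat{M},u)=c_{degr}(M,u)$. Therefore the whole profile $\bigl(c_{degr}(\widehat{M},u)\bigr)_{u\in V}$, and in particular the number $d^\star$ of nodes whose centrality strictly exceeds $c_{degr}(\widehat{M},\vs)$, is the same for every feasible $\Add$.

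The dichotomy now follows. A feasible $\Add$ exists iff for every $v\in\F$ there is a layer $\alpha$ with $\vs^\alpha\in V_L$ and $v^\alpha\in V_L$; every solution is feasible in this sense. If such an $\Add$ exists and $d^\star\ge d$, then by the previous paragraph \emph{every} feasible $\Add$ is a solution; otherwise — either because no feasible $\Add$ exists, or because $d^\star<d$ — there is no solution. This characterisation is checkable in polynomial time: test the shared‑layer condition for each contact, build one feasible $\Add$ (connecting $\vs$ to each contact in one common layer), recompute all degrees in $\widehat{M}$, count how many exceed $|N_M(\vs)\cup\F|$, and compare with $d$.

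I do not anticipate a real obstacle: the argument is driven entirely by the fact that multilayer degree is a function of the \emph{set} of neighbours rather than of the per‑layer adjacency, so the evader's only genuine choice (the layers) has no effect. The one point requiring care is the boundary case in which the shared‑layer feasibility condition fails, which must be folded cleanly into the ``there are no solutions at all'' branch.
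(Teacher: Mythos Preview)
Your proposal is correct and follows essentially the same approach as the paper: both argue that because global degree counts distinct neighbours, the entire degree profile after adding any feasible $\Add$ is independent of which layers are used, so either every feasible $\Add$ works or none does. Your version is in fact more careful than the paper's (you write $|N_M(\vs)\cup\F|$ and $|N_M(v)\cup\{\vs\}|$ rather than assuming the evader starts isolated, and you explicitly handle the shared-layer feasibility check), but the key idea is identical.
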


\begin{proof}
Any valid solution to the problem $\Add$ must connect the evader $\vs$ with all contacts.
Therefore, after the addition of $\Add$ the degree centrality of $\vs$ is $|\F|$, while the degree centrality of every contact increases by $1$.
Hence, the degree centrality ranking in the network does not depend on the choice of layers in which $\vs$ gets connected with its contacts.
\end{proof}

\begin{theorem}
\label{thrm:npc-closeness-global}
The problem of Multilayer Global Hiding is NP-complete given the closeness centrality measure.
\end{theorem}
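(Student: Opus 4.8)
This part is routine: given a candidate edge set $\Add$, we can verify in polynomial time that it connects $\vs$ to every contact in at least one layer, compute all pairwise distances in $\widehat{M}$ (e.g.\ by BFS from every node occurrence), evaluate $c_{clos}$ for every node, and count how many beat $\vs$.

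**The reduction.** I'd reduce from a known NP-hard problem — Set Cover is the natural candidate here, because the combinatorial core of the problem is exactly "pick, for each contact, one layer in which to attach, so that some global covering constraint is met." Given a Set Cover instance with universe $U = \{u_1,\dots,u_m\}$, sets $S_1,\dots,S_k$, and target $t$, I'd build a multilayer network where the layers correspond (roughly) to the sets $S_j$, the contacts in $\F$ correspond to the elements $u_i$ (contact $u_i$ occurs only in those layers $\alpha_j$ with $u_i \in S_j$, forcing the evader's edge to $u_i$ to "use" one of the sets covering $u_i$), and the safety margin $d$ is tuned so that $\vs$ is hidden iff the layers actually used form a cover of size $\le t$. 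The delicate design issue is the closeness arithmetic: attaching $\vs$ to contacts *shortens* $\vs$'s distances and hence *raises* $c_{clos}(\widehat M,\vs)$, which works against hiding, so the gadget must contain a large population of "high-closeness" bystander nodes whose scores sit just above the value $\vs$ would attain. I would place, in each layer, a dense core (say a clique or near-clique) of dummy nodes that always have very high closeness, plus a controlled set of "pendant" structures whose closeness relative to $\vs$ flips precisely when $\vs$ becomes connected through a previously-unused layer — so that the number of nodes beating $\vs$ crosses the threshold $d$ exactly at cover size $t$.

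**Key steps, in order.** First, fix the layer set $L = \{\alpha_1,\dots,\alpha_k\}$ and node set: the evader $\vs$ occurring in every layer, one contact node per universe element occurring only in the layers of the sets containing it, and per-layer dummy gadgets. Second, wire the dummies so that (i) in any layer, the closeness ranking among dummies and $\vs$ is analyzable in closed form, and (ii) whether $\vs$ "spends" a given layer $\alpha_j$ (i.e.\ uses at least one edge in it) changes $\vs$'s total closeness — and the count of nodes above her — by a fixed, predictable amount. Third, compute $c_{clos}(\widehat M,\vs)$ as a function only of the set $J \subseteq \{1,\dots,k\}$ of layers used, and likewise compute, for each node, whether it beats $\vs$; arrange the constants so that $\big|\{v : c_{clos}(\widehat M,v) > c_{clos}(\widehat M,\vs)\}\big| \ge d$ iff $|J| \le t$. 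Fourth, observe that a feasible $\Add$ exists with $|J|\le t$ iff the chosen layers cover $U$, completing both directions of the reduction.

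**Main obstacle.** The hard part will be controlling the closeness function globally rather than layer-by-layer: because closeness in a multilayer network aggregates shortest distances across \emph{all} layers simultaneously (via the inter-layer diagonal couplings), using a layer for one contact can inadvertently create short paths helping or hurting the scores of nodes associated with other contacts, coupling the "covering" decisions in unintended ways. To neutralize this I would make the contact nodes leaves of degree exactly one before $\Add$ (and keep the per-layer dummy cores mutually far apart, e.g.\ by making inter-layer couplings the only cross-layer links and routing them through $\vs$ only), so that attaching $\vs$ to a contact adds exactly one new length-one edge and perturbs the global distance matrix in a way that decomposes additively over the used layers. Getting this additive decomposition to hold exactly — and pinning down the integer constants ($d$, the dummy counts, the gadget sizes) so the threshold lands on the nose — is where the real bookkeeping lies; the NP-hardness itself then follows immediately from Set Cover.
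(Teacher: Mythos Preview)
Your high-level strategy---reduce from a set-cover variant, let layers encode the sets, and let the contacts encode the universe elements so that choosing a layer for a contact ``spends'' a set---is exactly the route the paper takes. So at the level of \emph{which NP-hard problem and what the layers/contacts mean}, you are aligned with the paper. What you have written, however, is a plan rather than a proof: the construction is never actually given, and the mechanism by which the number of used layers $|J|$ governs the closeness comparison is only gestured at (``pendant structures whose closeness flips,'' ``dummy cores,'' etc.). That is the gap.

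Concretely, two choices in your outline make the bookkeeping harder than it needs to be and are worth replacing by the paper's devices. First, you aim to engineer many bystander nodes so that the \emph{count} of nodes beating $\vs$ slides across the threshold $d$ as $|J|$ grows; the paper instead sets $d=1$ and builds a single comparison node $v'$ living in its own separate layer $\beta$ (disconnected from everything $\vs$ touches), with a fixed, exactly computed closeness. Hiding then reduces to the single inequality $c_{clos}(\vs) < c_{clos}(v')$. Second, you need a clean way to make $c_{clos}(\vs)$ a monotone function of $|J|$ alone. The paper achieves this not through the contacts (whose contribution to $\vs$'s closeness is constant once all are attached) but through auxiliary nodes $a_i$, one per layer $\alpha_i$, wired so that $\lambda(\vs,a_i)=2$ if $\vs$ has any edge in layer $\alpha_i$ and $\lambda(\vs,a_i)=3$ otherwise; this yields $c_{clos}(\vs) = \text{const} + \tfrac{x}{2} + \tfrac{m-x}{3}$ where $x=|J|$, which is linear in $x$. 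The paper also uses \emph{Exact} 3-Set Cover (all sets of size 3, cover size exactly $k$) rather than general Set Cover, together with forced contacts $w_{i,j}$ appearing only in layer $\alpha_i$; this uniformity is what makes the arithmetic close up so that the threshold lands precisely at $x\le k$. Your general Set Cover with variable set sizes would make this calibration messier. The ``additive decomposition over used layers'' you flag as the main obstacle is real, and the $a_i$-gadget plus the isolated comparison layer $\beta$ is the concrete answer to it.
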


\begin{proof}

The problem is trivially in NP, since after the addition of a given $\Add$ the closeness centrality ranking can be computed in polynomial time.

Next, we prove that the problem is NP-hard.
To this end, we show a reduction from the NP-complete problem of \textit{Exact 3-Set Cover}.
The decision version of this problem is defined by a set of subsets $S=\{S_1, \ldots, S_{m}\}$ of universe $U=\{u_1,\ldots,u_{3k}\}$, such that $\forall_i |S_i|=3$.
The goal is to determine whether there exist $k$ pairwise disjoint elements of $S$ the sum of which equals $U$.

Given an instance of the problem of \textit{Exact 3-Set Cover}, let us construct a multilayer network, $M=(V_L, E_L, V', L)$, as follows (Figure~\ref{fig:npc-closeness-global} depicts an instance of this network):

\begin{itemize}[leftmargin=*]
\item \textbf{The set of nodes $V'$:}
For every $u_i \in U$ we create a node $u_i$, as well as $3$ nodes $w_{i,1},w_{i,2},w_{i,3}$.
We will denote the set of all nodes $u_i$ by $U$, and the set of all nodes $w_{i,j}$ by $W$.
We also create the evader node $\vs$, the node $v'$, and the following four sets of nodes:
\begin{enumerate}
\item $A=\{a_1,\ldots,a_{m}\}$; \item $B=\{b_1,\ldots, b_{2k+2m}\}$; \item $B'=\{b'_1,\ldots, b'_{k+2m+1}\}$;
\item $B''=\{b''_1,\ldots, b''_{2k+m-1}\}$.
\end{enumerate}

\item \textbf{The set of layers $L$:}
For every $S_i \in S$ we create a layer $\alpha_i$.
We also create an additional layer $\beta$.

\item \textbf{The set of occurrences of nodes in layers $V_L$:}
Node $u_j \in U$ appears in layer $\alpha_i$ if and only if $u_j \in S_i$.
Node $w_{i,j} \in U$ appears only in layer $\alpha_i$.
The evader $\vs$, as well as all nodes in $A$ appear in every layer $\alpha_i$.
Node $v'$, as well as all nodes in $B$, $B'$, and $B''$ appear only in layer $\beta$.

\item \textbf{The set of edges $E_L$:}
For every node that appears in multiple layers, we connect all occurrences of this node in a clique.
For node $u_j$ in layer $\alpha_i$ we connect it with node $a_i$.
In every layer $\alpha_i$ we connect all nodes in $A$ into a clique. Moreover, we connect every node $b_i$ with node $v'$, and connect every node $b'_i$ with node $b_i$. Finally, we connect every node $b''_i$ with node $b'_i$.
\end{itemize}

\begin{figure*}[t]
\centering
\includegraphics[width=.8\linewidth]{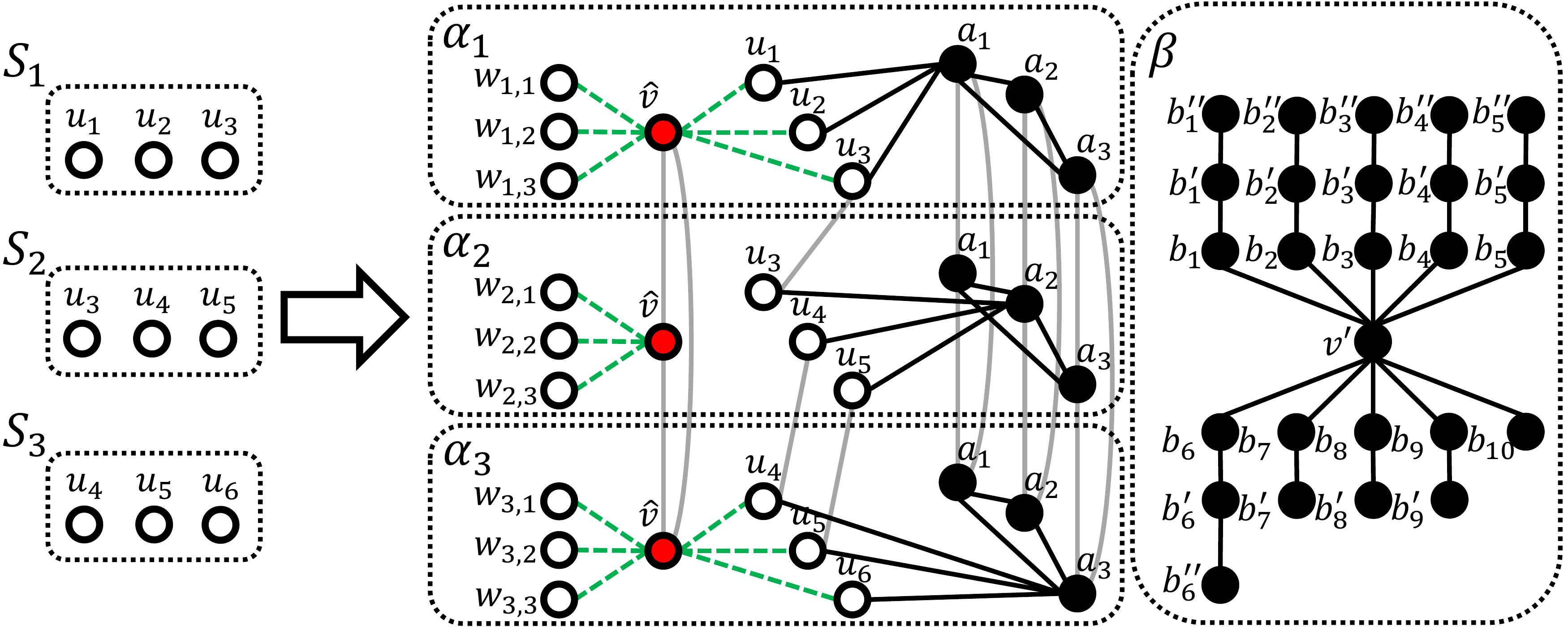}
\caption{An illustration of the network used in the proof of Theorem~\ref{thrm:npc-closeness-global}.
Edges connecting occurrences of the same node in different layers are highlighted in grey.
The red node represents the evader, while the white nodes represent the contacts.
Dashed (green) edges represent the solution to this problem instance.}
\label{fig:npc-closeness-global}
\end{figure*}

Now, consider the following instance of the problem of Multilayer Global Hiding, $(M,\vs,\F,c,d)$, where:

\begin{itemize}[leftmargin=*]
\item $M$ is the multilayer network we just constructed;
\item $\vs$ is the evader;
\item $\F=U \cup W$ is the set of contacts;
\item $c$ is the closeness centrality measure;
\item $d = 1$.
\end{itemize}

Next, let us analyze the closeness centrality values of nodes in the network.
Notice that every node $w_{i,j}$ appears only in a single layer $\alpha_i$, hence $\vs$ has to connect with $w_{i,j}$ in layer $\alpha_i$.
Assume that the evader $\vs$ has connections with nodes in $U$ in exactly $x$ layers, \ie, $x=|\{\alpha_i \in L : \exists u_j (\vs^{\alpha_i},u_j^{\alpha_i}) \in \Add\}|$.
We then have:
\begin{itemize}[leftmargin=*]
\item $c_{clos}(\vs)= 3k + 3m + \frac{x}{2} + \frac{m-x}{3} \geq 3k + 3\frac{1}{3}m$ as $\vs$ is a neighbor of $3k$ nodes in $U$ and $3m$ nodes in $W$, while for any $a_i \in A$ the distance between $a_i$ and $\vs$ is $2$ if $\vs$ is connected with any $u_j$ in layer $\alpha_i$ and $3$ otherwise;
\item $c_{clos}(u_i) \leq 1 + m + \frac{3k-1}{2} + \frac{3m}{2} = 1\frac{1}{2}k + 2\frac{1}{2}m + \frac{1}{2} < c_{clos}(\vs)$ as $u_i$ is a neighbor of $\vs$ and at most $m$ nodes in $A$, while the distance to all other nodes is at least $2$;
\item $c_{clos}(a_i) \leq 3 + m - 1 + \frac{1}{2} + \frac{3k-3}{2} + \frac{3m}{3} = 1\frac{1}{2}k + 2m +1 < c_{clos}(\vs)$ as $a_i$ is a neighbor of $3$ nodes from $U$ and all other $m-1$ nodes in $A$, while the distance to $\vs$ and all other nodes in $U$ is $2$, and the distance to all nodes is $W$ is at least $3$;
\item $c_{clos}(w_{i,j}) < c_{clos}(\vs)$ as for any other node $v$ we have $\lambda(w_{i,j},v)= \lambda(\vs,v)+1$, since the shortest paths between $w_{i,j}$ and all other nodes go through $\vs$;
\item $c_{clos}(v') = 2k + 2m + \frac{k + 2m + 1}{2} + \frac{2k+m-1}{3} = 3k + 3m + \frac{k}{2} + \frac{m-k}{3} + \frac{1}{6}$ as $v'$ is a neighbor of all $2k+2m$ nodes in $B$, the distance to all $k+2m+1$ nodes in $B'$ is $2$, while the distance to all $m-k+1$ nodes in $B''$ is $3$.
\end{itemize}

We have shown that all nodes in $A$, $U$, and $W$ have smaller closeness centrality than $\vs$.
It is easy to check that $v'$ has greater closeness centrality than all other nodes occurring in layer $\beta$.
Hence, $\vs$ is hidden if and only if $v'$ has greater closeness centrality than $\vs$.
This is true when:
$$
3k + 3m + \frac{x}{2} + \frac{m-x}{3} < 3k + 3m + \frac{k}{2} + \frac{m-k}{3} + \frac{1}{6}
$$
which can be simplified to $x < k + 1$.
Since both $x$ and $k$ are in $\N$ this is equivalent to $x \leq k$.
Therefore, $\vs$ is hidden if and only if it has connections with nodes in $U$ in at most $k$ layers.

Now we will show that if there exists a solution to the given instance of the Exact 3-Set Cover problem, then there also exists a solution to the constructed instance of the Multilayer Global Hiding problem.
Let $S^*$ be an exact cover of $U$.
In layer $\alpha_i$ we connect $\vs$ with all nodes from $W$ that occurr in this layer.
For every $S_i \in S^*$ we connect $\vs$ with $u_j \in S_i$ in layer $\alpha_i$.
This way, $\vs$ becomes connected to all $3k$ contacts from $U$, since all the sets in $S^*$ are pairwise disjoint.

To complete the proof, we have to show that if there exists a solution $\Add$ to the constructed instance of the Multilayer Global Hiding problem, then there also exists a solution to the given instance of the Exact 3-Set Cover problem.
We have shown above that if $\vs$ is hidden, then it is connected to nodes in $U$ in at most $k$ layers from $\{\alpha_1, \ldots, \alpha_m\}$.
However, since $\vs$ must be connected with all $3k$ nodes in $U$ in order for $\Add$ to be a correct solution, then $\{S_i: \exists u_j (\vs^{\alpha_i},u_j^{\alpha_i}) \in \Add\}$ is a solution to the given instance of the Exact 3-Set Cover problem.
This concludes the proof.
\end{proof}

\begin{figure*}[t]
\centering
\includegraphics[width=.8\linewidth]{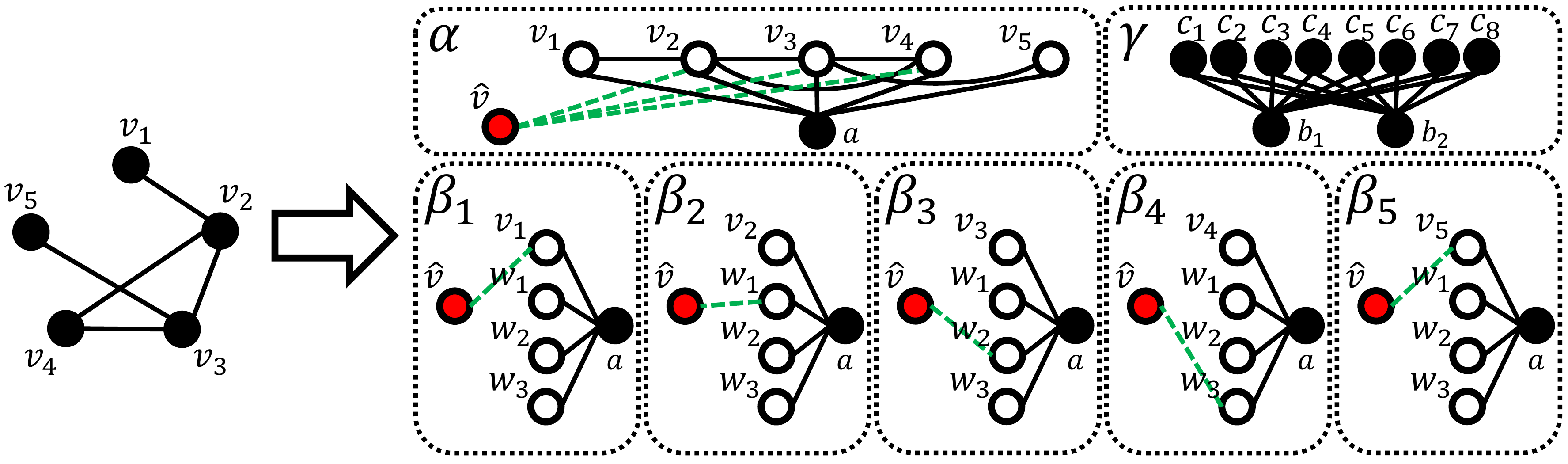}
\caption{An illustration of the network used in the proof of Theorem~\ref{thrm:npc-betweenness-global}.
The red node represents the evader, while the white nodes represent the contacts.
Dashed (green) edges represent the solution to this problem instance.}
\label{fig:npc-betweenness-global}
\end{figure*}

\begin{theorem}
\label{thrm:npc-betweenness-global}
The problem of Multilayer Global Hiding is NP-complete given the betweenness centrality measure.
\end{theorem}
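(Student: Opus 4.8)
The plan is to mirror the proof of Theorem~\ref{thrm:npc-closeness-global}. Membership in NP is immediate: given a candidate edge set $\Add$ one builds $\widehat{M}$, computes the betweenness centrality of every node in polynomial time, and checks that $\vs$ is joined to every contact and that at least $d$ nodes outrank her. For NP-hardness I would again reduce from \emph{Exact 3-Set Cover}, with universe $U=\{u_1,\dots,u_{3k}\}$ and sets $S=\{S_1,\dots,S_m\}$, $|S_i|=3$.

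First I would reuse the backbone of the previous construction: one layer $\alpha_i$ per set $S_i$; a node $u_j$ for each $u_j\in U$, placed in layer $\alpha_i$ iff $u_j\in S_i$; three private leaf nodes $w_{i,1},w_{i,2},w_{i,3}$ living only in layer $\alpha_i$; the evader $\vs$ (and a set $A=\{a_1,\dots,a_m\}$ of anchors) occurring in every layer $\alpha_i$, with all occurrences of each such node joined into an inter-layer clique; plus the within-layer wiring attaching the $u$- and $w$-nodes to $a_i$ in layer $\alpha_i$. Taking $\F=U\cup W$, the leaf nodes again force $\vs$ to attach to each $w_{i,j}$ in layer $\alpha_i$, while for every $u_j$ she is free to pick one layer $\alpha_i$ with $u_j\in S_i$ in which to attach to it; let $x$ be the number of layers in which $\vs$ attaches to some $U$-node. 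The new ingredient is a separate ``yardstick'' gadget in an extra layer $\beta$: a single node $v'$ sitting astride a fan of simple chains of carefully chosen lengths (analogous to the $B,B',B''$ chains), engineered so that (i) $c_{betw}(\widehat{M},v')$ is independent of $\Add$, and (ii) $v'$ strictly dominates every other node living in layer $\beta$.

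The heart of the argument is a monotonicity lemma: $c_{betw}(\widehat{M},\vs)$ is a strictly increasing function of $x$. The inter-layer clique makes $\vs$ the unique short bridge between the ``active'' layers, so each additional active layer $\alpha_i$ creates a fresh family of pairs — between the $U$- and $A$-nodes attached in $\alpha_i$ and those attached in the other active layers — whose shortest paths are forced through $\vs$, raising her betweenness by a positive, computable increment; the leaves $w_{i,j}$ meanwhile contribute a fixed amount, since every shortest path from $w_{i,j}$ to anything outside $\alpha_i$ must pass through $\vs$. Tuning the chain-length profile of the $v'$ fan so that the crossover lands exactly between $x=k$ and $x=k+1$, and setting $d=1$, yields: $\vs$ is hidden iff $c_{betw}(\widehat{M},\vs)<c_{betw}(\widehat{M},v')$ iff $x\le k$. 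The two directions then close as before: an exact cover $S^*$ gives a solution with $x=|S^*|=k$ (attach $\vs$ to the leaves and, for each $S_i\in S^*$, to the three $U$-nodes of $S_i$ in $\alpha_i$; disjointness of $S^*$ covers all of $U$), hence $\vs$ is hidden; conversely any valid $\Add$ connects $\vs$ to all $3k$ nodes of $U$ and has $x\le k$, and since each active layer supplies at most three $U$-nodes while $|U|=3k$, the chosen sets are exactly $k$ pairwise disjoint sets covering $U$.

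The step I expect to be the main obstacle is the exact betweenness bookkeeping behind the monotonicity lemma and the yardstick. Because betweenness sums a ratio of shortest-path counts over \emph{all} ordered pairs, I would need to verify that (a) adding an edge $(\vs^{\alpha_i},u_j^{\alpha_i})$ never shortens a path in a way that \emph{removes} $\vs$ from a pair she already controlled, (b) the per-layer increments are uniform enough to produce a clean ``$x\le k$'' threshold rather than a ragged one, and (c) the scores of the potential competitors — $v'$, but also the $u_j$, $a_i$, and $w_{i,j}$ — are either constant in $\Add$ or safely below $c_{betw}(\widehat{M},v')$ for every choice of $\Add$. Keeping the $w$-nodes true leaves and the $v'$ fan a disjoint union of simple chains is precisely what makes these counts tractable.
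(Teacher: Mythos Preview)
Your plan diverges from the paper's proof and, as written, carries a real obstacle that you flag but do not resolve. The paper does \emph{not} reduce from Exact 3-Set Cover here; it reduces from \textsc{$k$-Clique} and avoids any calibration of $c_{betw}(\vs)$ altogether. In the paper's construction the safety margin is set to $|V'|-1$, and a small gadget (nodes $c_1,\dots,c_{n+k}$ each joined to two hubs $b_1,b_2$) guarantees that each $c_i$ has betweenness exactly $\tfrac{1}{n+k}$. The punchline is that if $\vs$ ever acquires two non-adjacent neighbours, she picks up betweenness at least $\tfrac{1}{n+k-1}>\tfrac{1}{n+k}$; hence she is hidden only if her betweenness is zero, which forces her neighbours in the ``graph'' layer $\alpha$ to form a clique and forces at most one neighbour per auxiliary layer $\beta_i$. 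A pigeonhole using $k$ extra contacts $w_1,\dots,w_k$ that live only in the $\beta_i$ then pushes at least $k$ of the $V$-contacts into layer $\alpha$, yielding a $k$-clique. No threshold tuning, no explicit betweenness formula for $\vs$.

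The gap in your plan is precisely obstacle (b). With the closeness backbone you propose, $c_{betw}(\widehat M,\vs)$ is \emph{not} a function of the single integer $x$; it depends on which $u_j$'s share a layer with which. For instance, for a pair $u_j,u_{j'}$, the shortest $u_j$--$u_{j'}$ route through $\vs$ has length $2$ if $\vs$ attached to both in the same layer and length $3$ otherwise (because of the inter-layer hop $\vs^{\alpha_i}\!\to\!\vs^{\alpha_{i'}}$), while alternative length-$2$ routes through the $A$-clique exist exactly when $u_j,u_{j'}$ lie in a common $S_i$ --- an instance-dependent quantity. Consequently two assignments with the same $x$ can give $\vs$ different betweenness, and the ranges for $x=k$ and $x=k+1$ may overlap, destroying any single yardstick value for $v'$. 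To make your route work you would need substantial extra gadgetry to force the per-layer increment to be uniform and to decouple it from the co-occurrence pattern of the $u_j$'s in the sets $S_i$; none of that is in the proposal. The paper's ``force betweenness to zero'' trick sidesteps this entirely and is the idea you are missing.
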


\begin{proof}

The problem is trivially in NP, since after the addition of a given $\Add$ the betweenness centrality rankings can be computed in polynomial time.

Next, we prove that the problem is NP-hard.
To this end, we show a reduction from the NP-complete problem of \textit{Finding $k$-Clique}.
The decision version of this problem is defined by a simple network, $G=(V,E)$, and a constant, $k \in \N$.
The goal is to determine whether there exist $k$ nodes in $G$ that form a clique.

Given an instance of the problem of \textit{Finding $k$-Clique}, defined by $k$ and a simple network $G=(V,E)$, let us construct a multilayer network, $M=(V_L, E_L, V', L)$, as follows (Figure~\ref{fig:npc-betweenness-global} depicts an instance of this network):

\begin{itemize}[leftmargin=*]
\item \textbf{The set of nodes $V'$:}
For every node, $v_i \in V$, we create a node $v_i$.
Additionally, we create the evader node $\vs$, node $a$, and the following three sets of nodes:
\begin{enumerate}
\item $B=\{b_1,b_2\}$;
\item $W=\{w_1,\ldots,w_k\}$;
\item $C=\{c_1,\ldots,c_{n+k}\}$.
\end{enumerate}

\item \textbf{The set of layers $L$:}
We create a layer $\alpha$, a layer $\gamma$, as well as $n$ layers $\beta_1, \ldots, \beta_n$.

\item \textbf{The set of occurrences of nodes in layers $V_L$:}
Node $\vs$ and node $a$ appear in layer $\alpha$ and all layers $\{\beta_1, \ldots, \beta_n\}$.
Each node $v_i$ appears in layer $\alpha$ and $\beta_i$.
Nodes in $W$ appear in all layers $\{\beta_1, \ldots, \beta_n\}$.
Nodes in $B$ and $C$ appear only in layer $\gamma$.

\item \textbf{The set of edges $E_L$:}
In layer $\alpha$ we create an edge between two nodes $v_i,v_j \in V$ if and only if this edge was present in $G$.
In every layer where $a$ appears we connect it with all occurring nodes from $V$ and $W$.
Finally, we connect every node $c_i$ with both $b_1$ and $b_2$.
\end{itemize}

Now, consider the following instance of the problem of Multilayer Local Hiding, $(M,\vs,\F,c,\left(d^\alpha\right)_{\alpha \in L})$, where:

\begin{itemize}[leftmargin=*]
\item $M$ is the multilayer network we just constructed;
\item $\vs$ is the evader;
\item $\F= V \cup W$ is the set of contacts;
\item $c$ is the betweenness centrality measure;
\item $d = 2n+2k+3$ is the safety margin.
\end{itemize}

Notice that, since $d = 2n+2k+3$, all other nodes must have greater betweenness centrality than the evader in order for $\vs$ to be hidden.
Notice also that the betweenness centrality of every node $c_i$ is $\frac{1}{n+k}$. Moreover, after adding $\Add$ all nodes other than $\vs$ have non-zero betweenness centrality.
If $\vs$ gets connected to at least two nodes from $\F$ that are not connected to each other, then $\vs$  controls one of at most $n+k-1$ shortest path between them (other paths can only go through nodes in $V\cup W \cup \{a\}$) and thus the betweenness centrality of $\vs$ is at least $\frac{1}{n+k-1}$.
Therefore, in order to get hidden, $\vs$ cannot control any shortest paths in the network.
This implies that, if $\vs$ is hidden then all nodes that are connected to $\vs$ in layer $\alpha$ must form a clique, and also implies that in every layer $\beta$, the evader $\vs$ can be connected to at most one node (otherwise $\vs$ controls one of the shortest paths between its two neighbors without an edges between them).

Now we will show that if there exists a solution to the given instance of the Finding $k$-Clique problem, then there also exists a solution to the constructed instance of the Multilayer Global Hiding problem.
Let $V^*$ be a group of $k$ nodes forming a clique in $G$.
Let us create $\Add$ by connecting $\vs$ to nodes from $V^*$ in layer $\alpha$.
Now, we connect every $v_i \in V \setminus V^*$ to $\vs$ in layer $\beta_i$.
In the remaining layers from $\{\beta_1, \ldots, \beta_n\}$ (corresponding to elements $v_i \in V^*$) we connect $\vs$ to all nodes in $W$. 
As argued above, for such $\Add$, the evader $\vs$ is hidden, hence $\Add$ is a solution to the constructed instance of the Multilayer Global Hiding problem.

To complete the proof we have to show that if there exists a solution $\Add$ to the constructed instance of the Multilayer Global Hiding problem, then there also exists a solution to the given instance of the Finding $k$-Clique problem.
As argued above, in each layer $\beta_i$ the evader $\vs$ can be connected to at most one node.
Since all $k$ nodes from $W$ appear only in layers from $\{\beta_1, \ldots, \beta_n\}$, the evader $\vs$ can be connected to at most $n-k$ nodes from $V$ in layers from $\{\beta_1, \ldots, \beta_n\}$.
Therefore, $\vs$ has to have at least $k$ neighbors from $V$ in layer $\alpha$
As shown above, in order for $\vs$ to be hidden in $\alpha$, all of its neighbors must form a clique.
Hence, the neighbors of $\vs$ in layer $\alpha$ form a clique in $G$.
This concludes the proof.
\end{proof}

\begin{theorem}
\label{thrm:npc-degree-local}
The problem of Multilayer Local Hiding is NP-complete given the degree centrality measure.
\end{theorem}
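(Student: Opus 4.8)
The plan is, first, to note that the problem lies in NP: given a candidate set $\Add$, one checks in polynomial time that $\vs$ is joined to every contact in $\F$ in some layer and, for each layer $\alpha$, computes the degrees of all nodes in $\widehat{M}^\alpha$ and verifies that at least $d^\alpha$ of them exceed the degree of $\vs$.

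For NP-hardness I would reduce from a suitable NP-complete problem; because the gadgets needed here are non-standard, 3-SAT (or another satisfiability/covering problem that admits custom gadgets) is a natural choice. The construction would create one layer per clause together with a pair of layers per variable, with $\vs$ occurring in all of them. The contacts come in two kinds: \emph{choice} contacts, each occurring in exactly two of these layers, so that the layer in which $\vs$ connects to such a contact encodes a Boolean decision; and \emph{padding} contacts, each occurring in a single layer, used only to pin down $\vs$'s baseline degree there. One then adds auxiliary non-contact nodes of prescribed degrees and sets each safety margin $d^\alpha$ so that ``$\vs$ is hidden in layer $\alpha$'' becomes exactly the local consistency condition of the source instance for the choice contacts routed into $\alpha$. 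Correctness then has the usual two directions: a satisfying assignment yields a routing of the choice contacts, hence an $\Add$, under which every layer's constraint holds; conversely, the layer-by-layer constraints force any valid $\Add$ to encode a satisfying assignment.

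The mechanism that makes such gadgets possible --- and the reason the local version is hard whereas the global one is trivial (Observation~\ref{thrm:npc-degree-global}) --- is the following. If $\vs$ has no pre-existing edge in layer $\alpha$ and connects there exactly to a set $C$ of contacts, then the number of nodes outranking $\vs$ in $\widehat{M}^\alpha$ equals
$$
\left| \left\{ w \in V^\alpha \,:\, |N^\alpha_M(w)| \geq |C| + 1 \right\} \right|
\;+\;
\left| \left\{ w \in C \,:\, |N^\alpha_M(w)| = |C| \right\} \right| .
$$
The first term depends only on $|C|$, but the second rewards $\vs$ with an extra outranking node for each contact it connects to whose degree in layer $\alpha$ is exactly one below where $\vs$ itself will land. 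Giving one and the same contact different degrees in the two layers in which it occurs makes it ``cheap'' in one of them and ``expensive'' in the other, and it is this asymmetry that lets a layer react to \emph{which} contacts are placed into it rather than merely to \emph{how many}; translating each clause/variable requirement into such a ``which'' condition is the technical core.

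The main obstacle is precisely this gadget engineering, together with making all the numerical thresholds fit. The point is that if every layer imposed only an upper bound on the number of contacts routed to it, then the problem would amount to finding a bipartite $b$-matching that saturates the contacts --- each contact assigned to a layer in which it (and $\vs$) occurs, each layer respecting its capacity --- which is solvable in polynomial time. In fact one can check from the displayed identity that the family of contact-subsets with which $\vs$ can connect in a given layer while staying hidden there is always downward closed, so the real work is to realise downward-closed families that are genuinely \emph{not} of ``capacity'' type using the degree-bonus above, and then to juggle the degrees of the auxiliary nodes and the values $d^\alpha$ so that the encoding is faithful in both directions.
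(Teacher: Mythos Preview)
Your identification of the key mechanism is correct and matches what drives the paper's argument: the number of nodes outranking $\vs$ in a layer depends not only on how many contacts $\vs$ places there but on \emph{which} ones, via the bonus term $|\{w\in C: |N^\alpha_M(w)|=|C|\}|$. Your side remarks --- that capacity-only constraints would collapse to polynomial $b$-matching, and that the feasible families in each layer are downward closed --- are both correct and to the point.

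Where the proposal falls short is that it stops at the strategy. You name the ``technical core'' (turning clause/variable requirements into which-contacts conditions) but do not carry it out, and the 3-SAT scaffolding you sketch --- two layers per variable, one per clause, choice contacts living in two layers --- does not yet explain how a clause layer can \emph{see} the value chosen in the variable layers. Getting that coupling right with degree gadgets alone is the whole difficulty, and nothing in the outline shows it can be done.

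The paper avoids this by reducing from \textsc{Exact 3-Set Cover} rather than 3-SAT, which hands you the gadget almost for free. There is one layer $\alpha_i$ per set $S_i$; the three elements of $S_i$ occur in $\alpha_i$ as contacts, each pre-wired to three auxiliary non-contact nodes $a_1,a_2,a_3$ so that both the $u_j$'s and the $a_j$'s have degree $3$; and there are $2(m-k)$ additional degree-$0$ contacts $w_j$ present in every layer. With $d^{\alpha_i}=3$, the feasible contact-sets in $\alpha_i$ are precisely: any set of size at most $2$, or the full triple $S_i$ --- a non-capacity downward-closed family exactly of the kind you describe. The global count ($3k+2(m-k)$ contacts to cover, at most $3$ per layer, and $3$ only when they are the whole $S_i$) then forces an exact cover. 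So your diagnosis of \emph{why} the problem is hard is right; what is missing is the construction, and Exact 3-Set Cover supplies one far more directly than 3-SAT would.
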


\begin{proof}

The problem is trivially in NP, since after the addition of a given $\Add$ the degree centrality rankings for all layers can be computed in polynomial time.

Next, we prove that the problem is NP-hard.
To this end, we show a reduction from the NP-complete problem of \textit{Exact 3-Set Cover}.
The decision version of this problem is defined by a set of subsets $S=\{S_1, \ldots, S_{m}\}$ of universe $U=\{u_1,\ldots,u_{3k}\}$, such that $\forall_i |S_i|=3$.
The goal is to determine whether there exist $k$ pairwise disjoint elements of $S$ the sum of which equals $U$.

Given an instance of the problem of \textit{Exact 3-Set Cover}, let us construct a multilayer network, $M=(V_L, E_L, V', L)$, as follows (Figure~\ref{fig:npc-degree-local} depicts an instance of this network):

\begin{itemize}[leftmargin=*]
\item \textbf{The set of nodes $V'$:}
For every element, $u_i \in U$, we create a node $u_i$.
We also create $2(m-k)$ nodes $w_1,\ldots,w_{2(m-k)}$.
Additionally, we create the evader node $\vs$ and three nodes $a_1,a_2,a_3$.
We will denote the set of all nodes $a_i$ as $A$, the set of all nodes $u_i$ as $U$, and the set of all nodes $w_i$ as $W$.
\item \textbf{The set of layers $L$:}
For every $S_i \in S$ we add a layer $\alpha_i$.
\item \textbf{The set of occurrences of nodes in layers $V_L$:}
Node $u_j \in U$ appears in layer $\alpha_i$ if and only if $u_j \in S_i$.
The evader $\vs$, as well as all nodes in $A$ and $W$, appear in all layers.
\item \textbf{The set of edges $E_L$:}
In every layer we connect every node $u_j \in U$ occurring in this layer to every node in $A$.
\end{itemize}

Now, consider the following instance of the problem of Multilayer Local Hiding, $(M,\vs,\F,c,\left(d^\alpha\right)_{\alpha \in L})$, where:

\begin{itemize}[leftmargin=*]
\item $M$ is the multilayer network we just constructed;
\item $\vs$ is the evader;
\item $\F=U \cup W$ is the set of contacts;
\item $c$ is the degree centrality measure;
\item $d^{\alpha_i} = 3$ for every $\alpha_i \in L$.
\end{itemize}

Next, let us consider what are the sets of edges that can be added between the evader $\vs$ and the contacts $\F$ in each layer, so that the evader is hidden.
In every layer $\alpha_i$ the nodes in $A$ as well as the nodes $u_j \in S_i$ have degree $3$, while all other nodes have degree $0$.
We can connect $\vs$ to any two or less contacts and $\vs$ will still be hidden.
If we connect the evader to three contacts, they have to be nodes in $S_i$ (as these are the only nodes that potentially can have degree greater than $3$).
We cannot connect $\vs$ to more than three contacts and still have $\vs$ hidden.

Now we will show that if there exists a solution to the given instance of the Exact 3-Set Cover problem, then there also exists a solution to the constructed instance of the Multilayer Local Hiding problem.
Let $S^*$ be an exact cover of $U$.
For every $S_i \in S^*$ we connect $\vs$ to every $u_j \in S_i$ in layer $\alpha_i$.
This way, $\vs$ becomes connected to all $3k$ contacts from $U$, since all the sets in $S^*$ are pairwise disjoint.
For every $S_i \notin S^*$ we connect $\vs$ to two nodes from $W$ in layer $\alpha_i$ (since there are $m-k$ such layers, we can connect $\vs$ to all $2(m-k)$ contacts from $W$ this way).

To complete the proof we have to show that if there exists a solution to the constructed instance of the Multilayer Local Hiding problem, then there also exists a solution to the given instance of the Exact 3-Set Cover problem.
Let $x$ be the number of layers from $\{\alpha_1, \ldots, \alpha_m\}$ in which $\vs$ has at most two neighbors, and let $m-x$ be the number of layers from $\{\alpha_1, \ldots, \alpha_m\}$ where $\vs$ has exactly three neighbors.
Since $\vs$ has to be connected to all $3k+2(m-k)$ contacts, we have $2x+3(m-x) \geq 3k+2(m-k)$, which gives us $x \leq m-k$.
However, since $\vs$ can connect to nodes from $W$ in layer $\alpha_i$ if and only if it connects to at most two nodes in $\alpha_i$, we also have $2x \geq 2(m-k)$.
Hence, we have $x=m-k$, \ie, $\vs$ is connected with all nodes from $W$ in $m-k$ layers from $\{\alpha_1, \ldots, \alpha_m\}$.
Therefore, in the remaining $k$ layers from $\{\alpha_1, \ldots, \alpha_m\}$, the evader $\vs$ has to connect to all $3k$ nodes from $U$.
Since the evader cannot connect to more than three nodes in any layer $\alpha_i$, all these sets of neighbors from $U$ have to be disjoint, thus forming the solution to the given instance of the Exact 3-Set Cover problem.
This concludes the proof.
\end{proof}

\begin{figure*}[t]
\centering
\includegraphics[width=.7\linewidth]{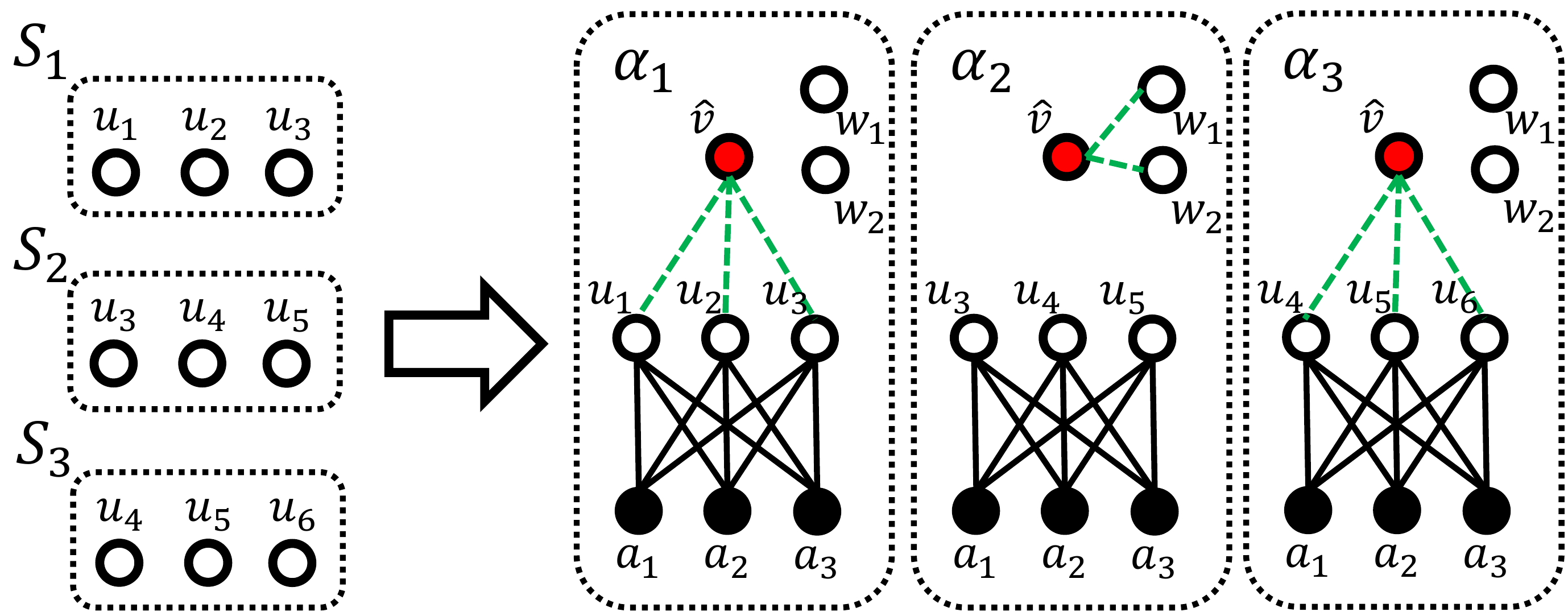}
\caption{An illustration of the network used in the proof of Theorem~\ref{thrm:npc-degree-local}.
The red node represents the evader, while the white nodes represent the contacts.
Dashed (green) edges represent the solution to this problem instance.}
\label{fig:npc-degree-local}
\end{figure*}

\begin{theorem}
\label{thrm:npc-closeness-local}
The problem of Multilayer Local Hiding problem is NP-complete given the closeness centrality measure.
\end{theorem}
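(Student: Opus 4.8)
The plan: membership in NP is immediate --- given a candidate $\Add$ one evaluates the closeness scores in every $\widehat{M}^\alpha$ and checks whether at least $d^\alpha$ of them exceed that of $\vs$ --- so I will concentrate on NP‑hardness and reduce from \textit{Exact 3‑Set Cover}, following the template of Theorems~\ref{thrm:npc-closeness-global} and~\ref{thrm:npc-degree-local}. Given $S=\{S_1,\dots,S_m\}$ over $U=\{u_1,\dots,u_{3k}\}$, I would build a multilayer network with one layer $\alpha_i$ per set, with $\vs$ occurring (and initially isolated) in every layer, with contact set $\F=U\cup W$ where $|W|=2(m-k)$, and where layer $\alpha_i$ is a disjoint union of the following pieces: (i) the evader $\vs$ by itself; (ii) for each $u_j\in S_i$, the node $u_j$ together with a small private ``satellite'' gadget attached to it; (iii) for each $w\in W$, the node $w$ together with a strictly larger private satellite gadget; and (iv) one fixed ``benchmark'' clique on a carefully chosen constant number of fresh nodes. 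The safety margin $d^{\alpha_i}$ is a suitable constant (at most the benchmark size). The decisive feature is that since $\vs$ starts isolated and the benchmark clique is a separate component that $\vs$ can never join, each benchmark node keeps a fixed closeness $B$ throughout, while attaching $\vs$ to a contact raises $\vs$'s layer‑closeness by a fixed amount $p$ per $U$‑contact and a strictly larger fixed amount $q$ per $W$‑contact, the two increments being controlled by the two gadget sizes.

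The core of the proof is the claim: \emph{$\vs$ is hidden in layer $\alpha_i$ if and only if, in that layer, $\Add$ attaches $\vs$ to at most two contacts, or to exactly the three nodes $\{u_j : u_j\in S_i\}$.} I would prove this by a finite case analysis over the ``shape'' $(\ell,r)$ of $\Add$ restricted to $\alpha_i$ (the evader reaching $\ell\in\{0,1,2,3\}$ of the $U$‑contacts and $r\ge 0$ of the $W$‑contacts there): $\vs$'s closeness in $\alpha_i$ equals $p\ell+qr$, it is hidden iff this is $<B$ (since then all benchmark nodes, and no other node, lie above it), and I would choose the two gadget sizes (hence $p<q$) and the benchmark size (hence $B$) so that $p\ell+qr<B$ holds precisely for the pairs with $\ell+r\le 2$ together with $(\ell,r)=(3,0)$, and fails, with fewer than $d^{\alpha_i}$ nodes remaining above $\vs$, for every other reachable shape. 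Concretely one can take the $U$‑gadget to contribute $p=2$, the $W$‑gadget to contribute $q=3$, and a benchmark clique yielding $B=7$: then the attainable values of $p\ell+qr$ are $\{0,2,3,4,5,6,7,8,\dots\}$ and ``$<7$'' selects exactly the shapes $(0,0),(1,0),(2,0),(3,0),(0,1),(1,1),(0,2)$, i.e.\ the good ones, with the additional check that for the bad shapes (three or more contacts not equal to $S_i$) at most one node, one of the $W$‑contacts, can rise above $\vs$, hence fewer than $d^{\alpha_i}$ whenever $d^{\alpha_i}\ge 2$. The underlying idea is simply that spending all three edges on the cheap $U$‑contacts is the unique way a three‑contact layer can keep $\vs$ below the frozen benchmark $B$.

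Granting the claim, the reduction closes by the same counting argument as in Theorem~\ref{thrm:npc-degree-local}. Call layer $\alpha_i$ \emph{full} if $\vs$ is attached there to all of $S_i$ and \emph{partial} otherwise; $W$‑contacts can only be placed in partial layers, and a partial layer holds at most two contacts, so the $2(m-k)$ $W$‑contacts force at least $m-k$ partial layers, while connecting all $3k$ $U$‑contacts at no more than three per layer forces at least $k$ full layers; since there are $m$ layers in all, there are exactly $k$ full and $m-k$ partial layers, every partial layer carries exactly two $W$‑contacts and no $U$‑contact, and the $k$ sets indexing the full layers are pairwise disjoint and cover $U$ --- an exact cover. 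Conversely, given an exact cover $S^*$, attach $\vs$ to $S_i$ in each layer with $S_i\in S^*$ and to a distinct pair of $W$‑contacts in each of the remaining $m-k$ layers; by the claim $\vs$ is then hidden in every layer and every contact is reached, so $\Add$ is a solution.

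The step I expect to be the main obstacle is the calibration inside the claim. Closeness rewards reaching a large component, so the ``obvious'' instinct of embedding the $u_j$'s among high‑closeness nodes would make connecting to them \emph{increase} the evader's centrality and would wreck the $(3,0)$ case; the remedy is to keep every contact pendant‑like, with only private neighbours, and to give $W$‑contacts a strictly fatter private gadget than $U$‑contacts. Making the integer/half‑integer arithmetic line up --- so that three cheap attachments stay under $B$ but any attachment involving an expensive contact among three crosses $B$ --- while simultaneously verifying that no stray satellite node keeps $d^{\alpha_i}$ nodes above $\vs$ in the bad cases, is the delicate part; everything else (NP membership, the counting argument, and the easy direction of the reduction) is routine.
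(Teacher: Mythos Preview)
Your proposal is correct and follows essentially the same route as the paper's proof: reduce from Exact 3-Set Cover with one layer per set, contacts $\F=U\cup W$ with $|W|=2(m-k)$, the evader isolated initially, a fixed benchmark component, and $W$-gadgets strictly ``heavier'' than $U$-gadgets so that the evader stays hidden in a layer iff she takes at most two contacts there or exactly the three $U$-contacts of that layer; the counting argument then forces an exact cover. The paper's concrete parameters are $p=1$, $q=\tfrac{3}{2}$, $B=\tfrac{7}{2}$ (via bare $u_j$'s, each $w_i$ with a single pendant $a_i$, and a five-node star-plus-tail benchmark) with $d^{\alpha_i}=1$, which is a smaller-scale instance of your $p=2$, $q=3$, $B=7$ scheme; the only structural difference is that the paper's lighter $W$-gadget keeps every non-benchmark node strictly below $\vs$ in the bad shapes, so it can use $d^{\alpha_i}=1$ and skip the ``at most one $W$-contact rises above $\vs$'' check you flagged.
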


\begin{proof}

The problem is trivially in NP, since after the addition of a given $\Add$ the closeness centrality rankings for all layers can be computed in polynomial time.

Next, we prove that the problem is NP-hard.
To this end, we show a reduction from the NP-complete problem of \textit{Exact 3-Set Cover}.
The decision version of this problem is defined by a set of subsets $S=\{S_1, \ldots, S_{m}\}$ of universe $U=\{u_1,\ldots,u_{3k}\}$, such that $\forall_i |S_i|=3$.
The goal is to determine whether there exist $k$ pairwise disjoint elements of $S$ the sum of which equals $U$.

Given an instance of the problem of \textit{Exact 3-Set Cover}, let us construct a multilayer network, $M=(V_L, E_L, V', L)$, as follows (Figure~\ref{fig:npc-closeness-local} depicts an instance of this network):

\begin{itemize}[leftmargin=*]
\item \textbf{The set of nodes $V'$:}
For every $u_i \in U$ we create a node $u_i$. In addition, we create the nodes $w_1,\ldots,w_{2(m-k)}$ and $a_1,\ldots,a_{2(m-k)}$.
Finally, we create the evader node $\vs$ and $5$ nodes $c_1,\ldots,c_5$.
We will denote the set of all nodes $a_i$ by $A$, the set of all nodes $c_i$ by $C$, the set of all nodes $u_i$ by $U$, and the set of all nodes $w_i$ by $W$.
\item \textbf{The set of layers $L$:}
For every $S_i \in S$ we add a layer $\alpha_i$.
\item \textbf{The set of occurrences of nodes in layers $V_L$:}
Node $u_j \in U$ appears in layer $\alpha_i$ if and only if $u_j \in S_i$.
The evader $\vs$, as well as all nodes in $A$, $C$, and $W$, appear in all layers.
\item \textbf{The set of edges $E_L$:}
In all layers we connect every node $w_i$ with the node $a_i$, and we create edges $(c_1,c_2),(c_1,c_3),(c_1,c_4),(c_4,c_5)$.
\end{itemize}

\begin{figure*}[t]
\centering
\includegraphics[width=.7\linewidth]{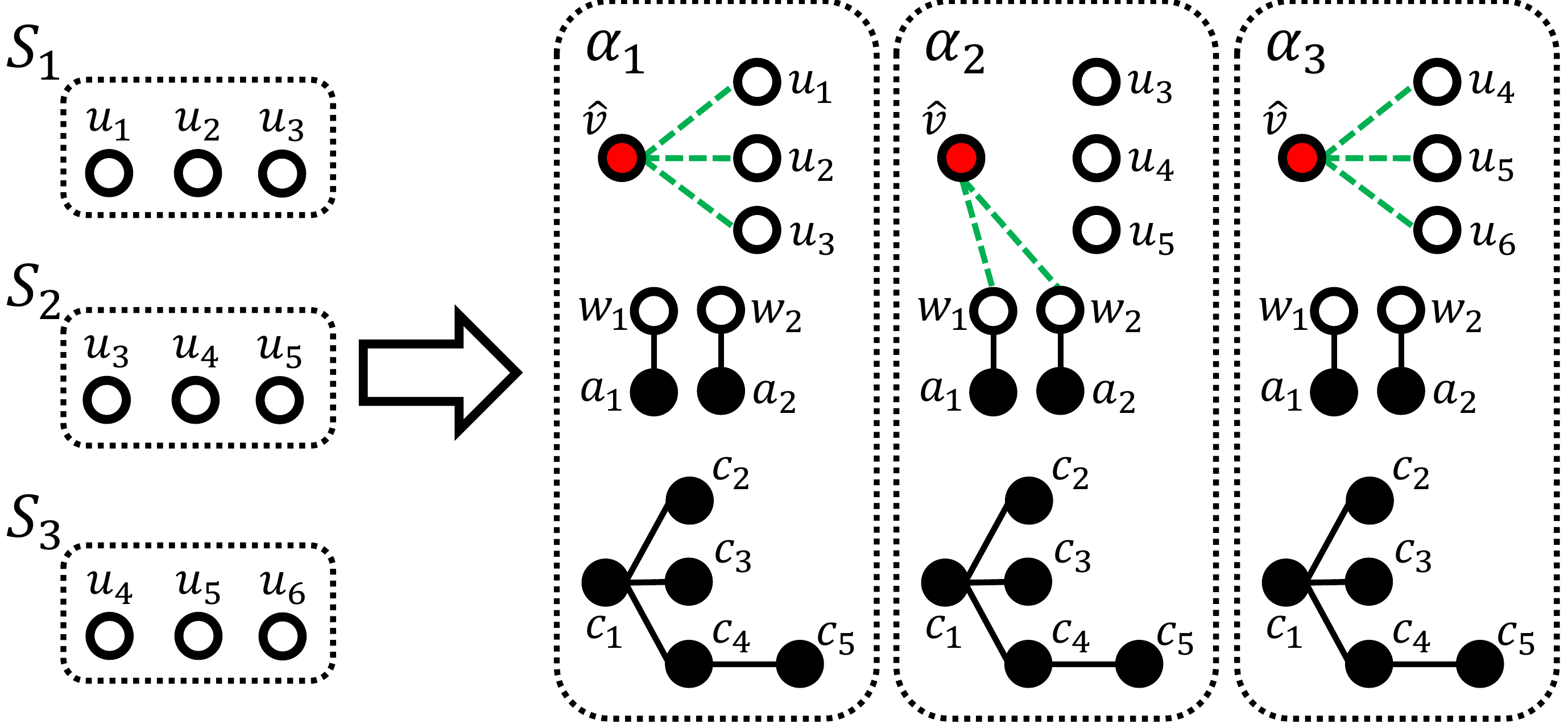}
\caption{An illustration of the network used in the proof of Theorem~\ref{thrm:npc-closeness-local}.
The red node represents the evader, while white the nodes represent the contacts.
Dashed (green) edges represent the solution to this problem instance.}
\label{fig:npc-closeness-local}
\end{figure*}

Now, consider the following instance of the problem of Multilayer Local Hiding, $(M,\vs,\F,c,\left(d^\alpha\right)_{\alpha \in L})$, where:

\begin{itemize}[leftmargin=*]
\item $M$ is the multilayer network we just constructed;
\item $\vs$ is the evader;
\item $\F=U \cup W$ is the set of contacts;
\item $c$ is the closeness centrality measure;
\item $d^{\alpha_i} = 1$ for every $\alpha_i \in L$.
\end{itemize}

Next, let us consider what are the sets of edges that can be added between the evader $\vs$ and the contacts $\F$ in each layer, so that the evader is hidden.
Notice that closeness centrality of the node $c_1$ is $3\frac{1}{2}$ and it is not affected by the edges added to $\vs$.
Assume that we connect node $\vs$ with $x$ nodes from $U$ and $y$ nodes from $W$.
We then have the following (for easier comparison we express the centrality values as fractions with the common denominator 6):
\begin{itemize}[leftmargin=*]
\item $c_{clos}(\vs)=x+\frac{3y}{2}=\frac{6x+9y}{6}$;
\item $c_{clos}(w_i)=\frac{x}{2}+\frac{5y}{6}+\frac{7}{6}=\frac{3x+5y+7}{6}$ if $w_i \in N(\vs)$;
\item $c_{clos}(c_1)=\frac{7}{2}=\frac{21}{6}$;
\end{itemize}
No other node can have greater closeness centrality than $\vs$.
We can connect $\vs$ with at most two of any of the contacts, as node $c_1$ will still have greater closeness centrality.
If we want to connect $\vs$ with three contacts, these contacts have to be nodes from $U$.
If $x+y=3$ and $y>0$, or if $x+y>3$, then the closeness centrality of $\vs$ is the highest in the network, meaning that $\vs$ is not hidden.

Now we will show that if there exists a solution to the given instance of the Exact 3-Set Cover problem, then there also exists a solution to the constructed instance of the Multilayer Local Hiding problem.
Let $S^*$ be an exact cover of $U$.
For every $S_i \in S^*$ we connect $\vs$ to every $u_j \in S_i$ in layer $\alpha_i$.
This way, $\vs$ becomes connected to all $3k$ contacts from $U$, since all the sets in $S^*$ are pairwise disjoint.
For every $S_i \notin S^*$ we connect $\vs$ to two nodes from $W$ in layer $\alpha_i$ (since there are $m-k$ such layers, we can connect $\vs$ to all $2(m-k)$ contacts from $W$ this way).

To complete the proof, we have to show that if there exists a solution $\Add$ to the constructed instance of the Multilayer Local Hiding problem, then there also exists a solution to the given instance of the Exact 3-Set Cover problem.
Let $z$ be the number of layers from $\{\alpha_1, \ldots, \alpha_m\}$ where $\vs$ has at most two neighbors, and let $z-x$ be the number of layers from $\{\alpha_1, \ldots, \alpha_m\}$ where $\vs$ has exactly three neighbors.
Since we have to connect $\vs$ to all $3k+2(m-k)$ contacts, we have $2z+3(m-z) \geq 3k+2(m-k)$, which gives us $z \leq m-k$.
However, since $\vs$ can connect to nodes from $W$ in layer $\alpha_i$ if and only if it connects to at most two nodes in $\alpha_i$, we also have $2z \geq 2(m-k)$.
Hence, we have $z=m-k$, \ie, $\vs$ connects to all nodes from $W$ in $m-k$ layers from $\{\alpha_1, \ldots, \alpha_m\}$.
Therefore, in the remaining $k$ layers from $\{\alpha_1, \ldots, \alpha_m\}$, the evader $\vs$ has to connect with all $3k$ nodes from $U$.
Since the evader cannot connect to more than three nodes in any layer $\alpha_i$, all these sets of neighbors from $U$ have to be disjoint, thus forming a solution to the given instance of the Exact 3-Set Cover problem.
This concludes the proof.
\end{proof}

\begin{theorem}
\label{thrm:npc-betweenness-local}
The problem of Multilayer Local Hiding is NP-complete given the betweenness centrality measure.
\end{theorem}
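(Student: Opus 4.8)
The plan is to establish membership in NP and then NP-hardness by a reduction from \textit{Exact 3-Set Cover}, reusing the skeleton of Theorems~\ref{thrm:npc-degree-local} and~\ref{thrm:npc-closeness-local}. Membership is immediate: for any candidate edge set $\Add$ one computes, in polynomial time, the betweenness of every node in every layer $\widehat{M}^{\alpha}$ of $\widehat{M}$ and checks that each layer contains at least $d^{\alpha}$ nodes beating $\vs$. For hardness I keep the ``$m$ layers, one per set $S_i$, of which exactly $k$ must be spent covering $U$ and the remaining $m-k$ absorb padding contacts'' structure, but replace the degree/closeness threshold gadget by one tailored to betweenness.

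Given $S=\{S_1,\dots,S_m\}$ over $U=\{u_1,\dots,u_{3k}\}$ with $|S_i|=3$, I build $M=(V_L,E_L,V',L)$ as follows. Create a node $u_j$ for each $u_j\in U$; a padding set $W=\{w_1,\dots,w_{2(m-k)}\}$ regarded as $m-k$ matched pairs $\{w_{2t-1},w_{2t}\}$; the evader $\vs$; and three gadget nodes $c_1,c_2,c_3$. Create one layer $\alpha_i$ per $S_i$; let $u_j$ occur in $\alpha_i$ iff $u_j\in S_i$, and let $\vs$, all of $W$, and $c_1,c_2,c_3$ occur in every layer. Inside each layer $\alpha_i$ put a triangle on the three nodes of $S_i$, the edge $w_{2t-1}w_{2t}$ for every $t$, and the path $c_1\!-\!c_2\!-\!c_3$ (occurrences of a multi-layer node may be joined across layers in any fixed way, which does not affect per-layer betweenness). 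The produced instance is $(M,\vs,\F,c_{betw},(d^{\alpha_i})_{\alpha_i\in L})$ with $\F=U\cup W$ and $d^{\alpha_i}=1$ for every $i$.

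Three facts about betweenness within a single layer $\widehat{M}^{\alpha_i}$ drive the proof. First, before any edge to $\vs$ is added, only $c_2$ has positive betweenness, so---since $d^{\alpha_i}=1$---the evader is hidden in $\alpha_i$ if and only if $c_{betw}(\widehat{M}^{\alpha_i},\vs)=0$. Second, if $\vs$ is joined only to nodes of the $S_i$-triangle, or only to nodes of a single matched pair, then the neighbours of $\vs$ form a clique (a sub-triangle, or the two endpoints of one edge), hence no shortest path uses $\vs$ internally and $c_{betw}(\widehat{M}^{\alpha_i},\vs)=0$; in particular the intended moves---linking $\vs$ to all three elements of $S_i$, or to one whole matched pair---are legal. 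Third, if $\vs$ is joined to nodes lying in two distinct planted components (for instance to a $u$-node and a $w$-node, or to $w$-nodes of two different pairs), then $\vs$ becomes a cut vertex lying on \emph{every} shortest path between those two components, it collects the full cross-component flow, and one checks that no other node of the layer attains a strictly larger betweenness; since $d^{\alpha_i}=1$ requires a node of strictly larger betweenness than $\vs$, the evader is not hidden. Consequently, in any solution $\Add$ the evader touches at most three contacts per layer, touches a $W$-contact in a layer only if it touches no $U$-contact there, and in such a ``$W$-layer'' touches at most one matched pair. From here the two directions mirror Theorems~\ref{thrm:npc-degree-local}--\ref{thrm:npc-closeness-local}: an exact cover $S^{*}$ gives a solution by linking $\vs$ to the full triangle of $S_i$ in layer $\alpha_i$ for each $S_i\in S^{*}$ (covering $U$ by disjointness) and to one matched pair in each of the other $m-k$ layers (covering $W$); conversely, if $z$ is the number of $W$-layers of a solution then $2z\ge 2(m-k)$ because all of $W$ must be covered, while the remaining $m-z\le k$ layers must jointly absorb all $3k$ elements of $U$ at no more than three per layer, forcing $z=m-k$ and exactly three elements of $U$ (\ie, all of some $S_i$) in each remaining layer; these $k$ triples are pairwise disjoint (their union is $U$, of size $3k$), giving an exact cover.

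The main obstacle is the third fact above: verifying that, the moment $\vs$ straddles two planted components, no other vertex of that layer attains a betweenness as large as $\vs$'s, so that the knife-edge margin $d^{\alpha_i}=1$ genuinely fails. This calls for a careful accounting of how betweenness splits among the shortest paths running through the cut vertex $\vs$---the triangles and single edges keep every internal betweenness value small, whereas $\vs$ absorbs the whole cross-component traffic (the product of the two component sizes, which is at least $2\cdot 2=4$)---and the accounting must be carried out uniformly over all the ways $\vs$ can straddle the planted components (a $u$-node against a $w$-node, $w$-nodes of two different pairs, partial attachments, and combinations thereof).
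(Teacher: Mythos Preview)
Your reduction is correct, and the case analysis you flag as ``the main obstacle'' does go through: whenever $\vs$ touches at least two of the planted components $C_1,\dots,C_r$ (each a triangle or a matched pair), one has $c_{betw}(\vs)=\sum_{i<j}|C_i||C_j|\ge 4$, while any other vertex $v$ lying in a touched component $C_i$ has betweenness at most $(|C_i|-1)\bigl(1+\sum_{j\ne i}|C_j|\bigr)\le c_{betw}(\vs)$, with equality only when $r=2$, $|C_i|=3$, and $v$ is the sole neighbour of $\vs$ in the triangle (a tie, so still no node strictly exceeds $\vs$). Untouched components contribute zero betweenness and $c_2$ contributes $1<4$, so $\vs$ is never beaten once it straddles.

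The paper takes a different route. Instead of Exact 3-Set Cover with one layer per set, it reduces from \textit{Finding $k$-Clique} using only two layers. Layer $\alpha$ embeds $G$ together with auxiliary nodes $c_1,\dots,c_{n+2}$, each $c_i$ ($i\le n$) having betweenness exactly $\tfrac{1}{n}$, and the safety margin is set to $d^\alpha=3n+2$, the total number of non-evader nodes; layer $\beta$ carries a star whose center has betweenness $\tfrac{(n-k+1)(n-k)}{2}$ and, via $d^\beta=1$, caps the number of $\beta$-edges at $n-k$, forcing at least $k$ neighbours of $\vs$ into layer $\alpha$. The choice $d^\alpha=|V^\alpha|-1$ is the crucial simplification: to show a straddling $\vs$ is exposed in $\alpha$, it suffices to exhibit \emph{one} node with smaller betweenness---namely any $c_i$ with $\tfrac{1}{n}<\tfrac{1}{n-1}\le c_{betw}(\vs)$---rather than arguing, as you must with $d^{\alpha_i}=1$, that $\vs$ is the outright maximum. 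Your construction, on the other hand, keeps all three local-hiding hardness proofs under the same X3C template and reuses the same $z=m-k$ counting; the cost is precisely the per-layer maximality argument you identified.
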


\emph{Proof of Theorem~\ref{thrm:npc-betweenness-local}:}
The problem is trivially in NP, since after the addition of a given $\Add$ the betweenness centrality rankings for all layers can be computed in polynomial time.

Next, we prove that the problem is NP-hard.
To this end, we show a reduction from the NP-complete problem of \textit{Finding $k$-Clique}.
The decision version of this problem is defined by a simple network, $G=(V,E)$, and a constant, $k \in \N$.
The goal is then to determine whether there exist $k$ nodes in $G$ that form a clique.

Let us assume that $k < n-1$ (if this assumption does not hold then the solution can be computed in polynomial time). Furthermore, let us assume that $G$ is connected (if this does not hold, the problem can be considered separately for each connected component).
Given an instance of the problem of \textit{Finding $k$-Clique} where $k < n-1$, and given a simple network $G=(V,E)$, let us construct a multilayer network, $M=(V_L, E_L, V', L)$, as follows (Figure~\ref{fig:npc-betweenness-local} depicts an instance of this network):

\begin{figure}[t]
\centering
\includegraphics[width=.7\linewidth]{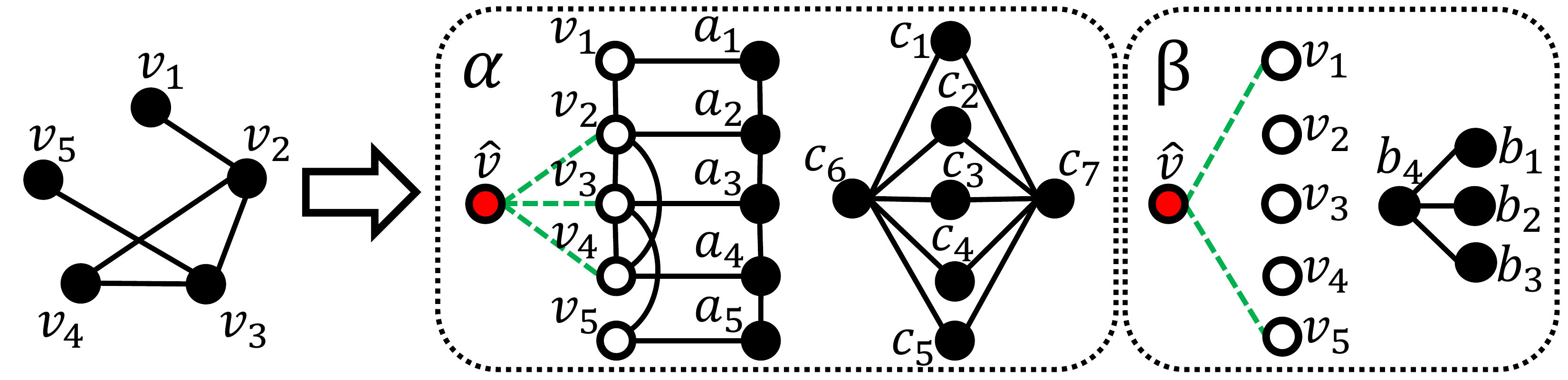}
\caption{An illustration of the network used in the proof of Theorem~\ref{thrm:npc-betweenness-local}.
The red node represents the evader, while the white nodes represent the contacts.
Dashed (green) edges represent the solution to this problem instance.}
\label{fig:npc-betweenness-local}
\end{figure}

\begin{itemize}[leftmargin=*]
\item \textbf{The set of nodes $V'$:}
This consists of the following sets of nodes: $V=\{v_1,\ldots,v_n\}$, $A=\{a_1,\ldots,a_n\}$, $B=\{b_{1}, \ldots, b_{n-k+2}\}$, and $C=\{c_1,\ldots,c_{n+2}\}$.  
\item \textbf{The set of layers $L$:}
We create two layers, $\alpha$ and $\beta$.
\item \textbf{The set of occurrences of nodes in layers $V_L$:}
Layer $\alpha$ contains all nodes in $\{\vs\} \cup V \cup A \cup C$, while layer $\beta$ contains all nodes in $\{\vs\} \cup V \cup B$.
\item \textbf{The set of edges $E_L$:}
\textbf{In layer $\alpha$} we create an edge between two nodes $v_i,v_j \in V$ if and only if this edge was present in $G$.
We also create an edge $(v_i,a_i)$ for every $v_i$, and an edge between every pair $a_i,a_{i+1}$.
Finally, for every node $c_i\in C:i \leq n$, we create edges $(c_i,c_{n+1})$ and $(c_i,c_{n+2})$.
\textbf{In layer $\beta$} we create an edge $(b_i,b_{n-k+2})$ for every node $b_{i}\in B:i < n-k+2$.
\end{itemize}

Now, consider the following instance of the problem of Multilayer Local Hiding, $(M,\vs,\F,c,\left(d^\alpha\right)_{\alpha \in L})$, where: $M$ is the multilayer network we just constructed; $\vs$ is the evader; $\F=V$ is the set of contacts;
$c$ is the betweenness centrality measure;
and $d^\alpha = 3n+2$ and $d^\beta = 1$ are the safety margins. Given this, let us consider what are the sets of edges that can be added between the evader $\vs$ and the contacts $\F$ in each layer, so that the evader is hidden.

Since $d^\alpha = 3n+2$, then apart from the evader $\vs$, the betweenness centrality of every node in layer $\alpha$ must be greater than that of $\vs$; otherwise the evader $\vs$ would not be hidden in $\alpha$.
Also note that the betweenness centrality of every node $c_i\in C: i \leq n$ equals $\frac{1}{n}$, and all nodes other than $\vs$ have non-zero betweenness centrality. 

Now if $\vs$ gets connected to any two nodes $v_i,v_j\in V$ that are not connected to one another, then $\vs$ controls one shortest path of length 2 between $v_i$ and $v_j$. Note that there can be at most $n-2$ other shortest paths of length 2 between $v_i$ and $v_j$ (each such path goes through some node $v_k\in V\setminus\{v_i,v_j\}$ if and only if $v_k$ is connected to both $v_i$ and $v_j$). Thus, the betweenness centrality of $\vs$ is at least $\frac{1}{n-1}$. Consequently, all nodes that $\vs$ is connected to in layer $\alpha$ must form a clique in order for $\vs$ to be hidden in $\alpha$.

Consider a situation in which the evader $\vs$ is connected to $x$ nodes from $V$ in layer $\beta$ (notice that $x \leq n$).
Its betweenness centrality is then $\frac{x(x-1)}{2}$, as it controls all shortest paths between pairs of its neighbors, but not any other shortest paths.
At the same time, the betweenness centrality of the node $b_{n-k+2}$ is $\frac{(n-k+1)(n-k)}{2}$ (as it controls all shortest paths between pairs of other nodes from $B$), which is greater than the betweenness centrality of $\vs$ if and only if $x \leq n-k$.
All other nodes in the layer have betweenness centrality $0$.
Thus, $\vs$ is hidden in $\beta$ iff it has at most $n-k$ neighbors.

Now we will show that if there exists a solution to the given instance of the problem of Finding $k$-Clique, then there also exists a solution to the constructed instance of the problem of Multilayer Local Hiding.
To this end, let $V^*$ be a group of $k$ nodes forming a clique in $G$.
Let us create $\Add$ by connecting $\vs$ to nodes from $V^*$ in layer $\alpha$ and to nodes from $\F \setminus V^*$ in layer $\beta$.
As argued above, for such $\Add$, the evader $\vs$ is hidden in both layers, hence $\Add$ is a solution to the constructed instance of the Multilayer Local Hiding problem.

To complete the proof we have to show that if there exists a solution $\Add$ to the constructed instance of the problem of Multilayer Local Hiding, then there also exists a solution to the given instance of the problem of Finding $k$-Clique.
Since $\vs$ can be connected in layer $\beta$ to at most $n-k$ nodes from $V$, it has to have at least $k$ neighbors from $V$ in layer $\alpha$
As shown above, in order for $\vs$ to be hidden in $\alpha$, all of its neighbors must form a clique.
Hence, the neighbors of $\vs$ in layer $\alpha$ form a clique in $G$.
This concludes the proof.\hspace*{\fill}$\Box$

\subsection{Approximation Analysis}

In this section we present the analysis of optimization versions of our problems (see Table~\ref{tab:approximation} for a summary).

\begin{table*}[t]
\caption{Summary of our results regarding approximation algorithms.}\smallskip
\centering
\resizebox{.95\linewidth}{!}{
\smallskip\begin{tabular}{lcc}
Centrality	& Maximum Multilayer Global Hiding	& Maximum Multilayer Local Hiding \\
\hline
Degree & can be solved in polynomial time & greedy algorithm is $2$-approximation	\\
Closeness	& - & cannot be approximated within $|\F|^{1-\epsilon}$ for any $\epsilon > 0$	\\
Betweenness	& cannot be approximated within $|\F|^{1-\epsilon}$ for any $\epsilon > 0$				& cannot be approximated within $|\F|^{1-\epsilon}$ for any $\epsilon > 0$	\\
\end{tabular}
}
\label{tab:approximation}
\end{table*}

\begin{theorem}
\label{thrm:appr-degree-global}
The Maximum Multilayer Global Hiding problem can be solved in polynomial time.
\end{theorem}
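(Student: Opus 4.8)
The theorem concerns the degree centrality measure (see Table~\ref{tab:approximation}), so throughout I take $c = c_{degr}$. The plan is to build on the structural fact behind Observation~\ref{thrm:npc-degree-global}: with degree centrality the ranking in $\widehat{M}$ depends only on \emph{which} contacts end up adjacent to $\vs$, and not on the layers in which the new edges lie. In particular, placing two edges between $\vs$ and the same contact is pointless, the only reachable contacts are those in $\F' = \{v \in \F : \exists_{\alpha \in L}\,(\vs^\alpha \in V_L \land v^\alpha \in V_L)\}$, and every contact already adjacent to $\vs$ in $M$ can be kept adjacent at no structural cost. So the problem reduces to: pick $T \subseteq \F' \setminus N_M(\vs)$ maximizing $|N_M(\vs) \cap \F| + |T|$ such that, after joining $\vs$ to every node of $T$ (each in any common layer), at least $d$ nodes have degree strictly larger than $\vs$.

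The second ingredient is a \emph{monotonicity} observation: if joining $\vs$ to a set $T$ leaves $\vs$ hidden, then so does joining $\vs$ to any subset of $T$. Indeed, adding one further edge from $\vs$ to a contact $v$ raises the degree of $\vs$ by exactly one, which can only shrink the witness set, while the only other node whose degree changes is $v$ itself, and a short comparison of $|N_M(v)|$ against the relevant thresholds (with and without that edge) shows that $v$'s witness status is never created by this edge. Hence the number of witnesses is non-increasing in $|T|$, so the sizes $|T|$ for which $\vs$ remains hidden form an initial segment $\{0,1,\ldots,t^\ast\}$, and it is enough to determine $t^\ast$.

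For a fixed target size $t$, the degree of $\vs$ in $\widehat{M}$ is necessarily $\theta := |N_M(\vs)| + t$, whatever $T$ of that size we choose. Against this threshold, a node $w \neq \vs$ with $|N_M(w)| > \theta$ is a witness regardless, a node with $|N_M(w)| = \theta$ is a witness exactly when we put it in $T$ (its degree then becomes $\theta+1$), and a node with $|N_M(w)| < \theta$ can never be a witness. So the optimal size-$t$ set is obtained greedily: include as many contacts of $\F' \setminus N_M(\vs)$ of current degree exactly $\theta$ as fit, then pad arbitrarily up to $t$ with other members of $\F' \setminus N_M(\vs)$. This gives a closed-form value for the maximum attainable number of witnesses at target size $t$ --- the number of non-evader nodes of degree exceeding $\theta$, plus the smaller of $t$ and the number of reachable new contacts of degree exactly $\theta$ --- which is compared with $d$.

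The algorithm then ranges $t$ from $0$ to $|\F' \setminus N_M(\vs)|$ (or binary-searches, using monotonicity), keeps the largest $t$ for which this quantity is at least $d$, and outputs the matching $\Add$ (the free edges to already-adjacent contacts, plus one edge per selected contact in a shared layer); if even $t = 0$ fails then, by monotonicity, the instance admits no valid $\Add$ at all. Each step is polynomial, hence so is the whole procedure. I expect the crux to be the two middle paragraphs --- verifying monotonicity and the greedy choice while correctly tracking that joining $\vs$ to a contact simultaneously raises $\vs$'s own degree (hurting everyone) and that contact's degree (helping only itself); the ``layers do not matter'' reduction and the final bookkeeping are routine.
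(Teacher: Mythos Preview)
Your proposal is correct and follows essentially the same approach as the paper: both reduce the problem to checking, for each candidate number $t$ of newly connected contacts, whether the safety margin can be met, via the same closed-form count $|\{w \neq \vs : |N_M(w)| > \theta\}| + \min\bigl(t,\ |\{v \in \F : |N_M(v)| = \theta\}|\bigr)$ with $\theta = |N_M(\vs)| + t$. Your write-up is in fact more careful than the paper's terse proof---you explicitly restrict to reachable contacts $\F'$, separate out already-adjacent contacts, and justify the monotonicity that lets one stop at the first infeasible $t$---none of which the paper spells out, but the core idea is identical.
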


\begin{proof}
For a given $k \in \N$ it is possible to connect the evader with $k$ contacts if and only if
$\min(k,|\{v \in \F : |N(v)| = k + |N(\vs)|\}|) + |\{v \in V: |N(v)| > k + |N(\vs)|\}| \geq d$.
It is because the only nodes that count towards satisfying the safety margin are those that already have degree greater than $k + |N(\vs)|$, or the contacts that have degree $k$ and their degree will be increased to $k+|N(\vs)|+1$ when they are connected with the evader (notice that since we are adding $k$ edges, there can be at most $k$ such contacts).
\end{proof}

\begin{theorem}
\label{thrm:inappr-closeness-global}
Maximum Multilayer Global Hiding problem given the closeness centrality cannot be approximated within $|\F|^{1-\epsilon}$ for any $\epsilon > 0$, unless P=NP.
\end{theorem}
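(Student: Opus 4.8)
The plan is to prove this by a gap-introducing Karp reduction. From an NP-hard source problem I want to build, in polynomial time, an instance of Maximum Multilayer Global Hiding with closeness together with a value $N=|\F|$ that I can inflate to any fixed polynomial in the source size, such that (i) from a YES-instance one can connect $\vs$ to all $N$ contacts while keeping $\vs$ hidden, and (ii) from a NO-instance every $\Add$ that keeps $\vs$ hidden leaves $\vs$ connected to at most $g$ contacts, where $g$ is small compared to $N$ (ideally $O(1)$, or at worst $N^{o(1)}$). Given such a reduction, suppose for contradiction that for some fixed $\epsilon>0$ there is a polynomial-time algorithm approximating the optimum within $|\F|^{1-\epsilon}$. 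Running it on the constructed instance: on a YES-instance it must return a solution connecting $\vs$ to at least $N/N^{1-\epsilon}=N^{\epsilon}$ contacts, whereas on a NO-instance it returns at most $g$; choosing the free polynomial so that $N^{\epsilon}>g$ lets us decide the source problem in polynomial time, forcing P = NP. A few boundary points — in particular that the constructed instance is always feasible, i.e.\ $\vs$ with $\Add=\emptyset$ is already hidden, so the optimum is well defined, and that the reduction works for every $\epsilon$ by scaling $N$ with $\epsilon$ — have to be checked, but are handled by the gadget design.

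For the gadget I would start from the network built in the proof of Theorem~\ref{thrm:npc-closeness-global} — a layer per set, the evader, and a competitor $v'$ sitting in an isolated component whose closeness is a fixed threshold independent of $\Add$ — and make two changes that turn its tight cardinality constraint (``$\vs$ opens at most $k$ set-layers'') into a genuine multiplicative gap. First, to blow up $|\F|$, I replace each element-contact $u_j$ by $B$ identical copies occurring in exactly the same layers (and inflate the auxiliary structure accordingly), so that $|\F|=\Theta(nB)$ with $B$ a tunable polynomial; since all $B$ copies of an element can be routed to $\vs$ inside a single layer, the ``which layers does $\vs$ open'' structure is unchanged. Second, I attach ballast blobs (distance-$2$ stars hanging off the $a_i$-type nodes) sized so that opening a layer increments $c_{clos}(\vs)$ by a large, uniform amount, and I re-tune the safety margin so that $\vs$ stays below $v'$ precisely when the layers it opens are few and form a valid solution to the source instance; routing the copies then makes the number of connected contacts proportional to how many elements those layers cover. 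In the YES case the optimal solution's layers cover everything, yielding all $N$ contacts; in the NO case every admissible $\Add$ is confined to layers covering only a small sub-structure, capping the connected contacts at $g$.

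The hard part is the gadget, and specifically making the source problem's \emph{gap}, not merely its yes/no answer, survive: Exact 3-Set Cover as used in Theorem~\ref{thrm:npc-closeness-global} has no inherent multiplicative gap, so I expect to need a source that is itself hard to approximate within $n^{1-\epsilon}$ (a clique- or independent-set-type problem, or a gap variant of set cover) and to engineer the coupling ``set of opened layers $\leftrightarrow$ closeness of $\vs$'' tightly enough that ``few good layers'' versus ``too many or wrong layers'' is exactly the hidden/not-hidden boundary. This is complicated by the fact that $c_{clos}(\vs)$ is monotone in $\Add$ while the competitor's closeness is essentially fixed, so the whole separation must come from how the \emph{cost} of the opened layers (not just their count) is distributed; the delicate, and mostly routine, part is then choosing the blob sizes and the safety margin $d$ so that every non-evader node — the $a_i$-analogues, the blob nodes, and $v'$ — provably stays on the correct side of $\vs$ in all relevant configurations, and so that the replication factor $B$ yields the claimed bound for every $\epsilon>0$.
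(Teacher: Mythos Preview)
Your high-level plan --- a gap-preserving reduction from an $n^{1-\epsilon}$-inapproximable source such as Maximum Independent Set or Clique --- is exactly right, and you correctly flag that Exact 3-Set Cover cannot supply the gap. The paper indeed reduces from Maximum Independent Set. But the concrete route you sketch does not get there, and the paper's gadget rests on an idea your proposal is missing.

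The structural problem is this. You keep the ``layer $=$ set, opening a layer costs a fixed increment of closeness'' architecture of Theorem~\ref{thrm:npc-closeness-global}, add replication to inflate $|\F|$, and plan to swap in a clique/IS source. But that architecture encodes a \emph{covering/cardinality} constraint (how many layers you open), whereas independence is a \emph{pairwise} constraint among the chosen contacts; a single fixed-threshold competitor $v'$ can only cap the number (or total cost) of opened layers, not enforce that the selected contacts be mutually nonadjacent. Your paragraph about ``engineering the coupling tightly enough'' is exactly where the argument has to happen, and nothing in the set-cover scaffolding provides it. The replication step is also a red herring: if one takes $\F=V$ directly, then $|\F|=|V|$ and the ratio transfers with no inflation.

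The paper discards the set-cover scaffolding and the single competitor entirely. It sets $\F=V$, $d=n$, and the evader can connect to each $v_i$ only in one main layer $\alpha$. Two mechanisms interact: (i) there are $n$ auxiliary nodes $y_1,\dots,y_n$ (each living in its own isolated layer $\delta_i$) tuned so that $c_{clos}(y_i)>c_{clos}(\vs)$ iff $|\Add|<i$; thus each edge added to $\vs$ knocks exactly one $y_i$ below $\vs$. (ii) Each chosen contact $v_i$ is calibrated, via a per-vertex layer $\gamma_i$ that adjusts its baseline closeness as a function of $|N_G(v_i)|$, so that $c_{clos}(v_i)>c_{clos}(\vs)$ iff no $G$-neighbour of $v_i$ is also in $\Add$. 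To keep $d=n$ nodes above $\vs$, every connected contact must personally replace the $y_i$ it displaced, which forces $\Add$ to be an independent set in $G$ of size $|\Add|$. The optimum of the hiding instance is then exactly the independence number of $G$, and Zuckerman's bound transfers verbatim.

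That sliding trade-off between the $y_i$'s and the chosen $v_i$'s is the missing ingredient: it is what converts a monotone closeness budget into a pairwise structural constraint. Starting from the Theorem~\ref{thrm:npc-closeness-global} gadget and retuning blob sizes will not produce it.
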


\begin{figure*}[t]
\centering
\includegraphics[width=\linewidth]{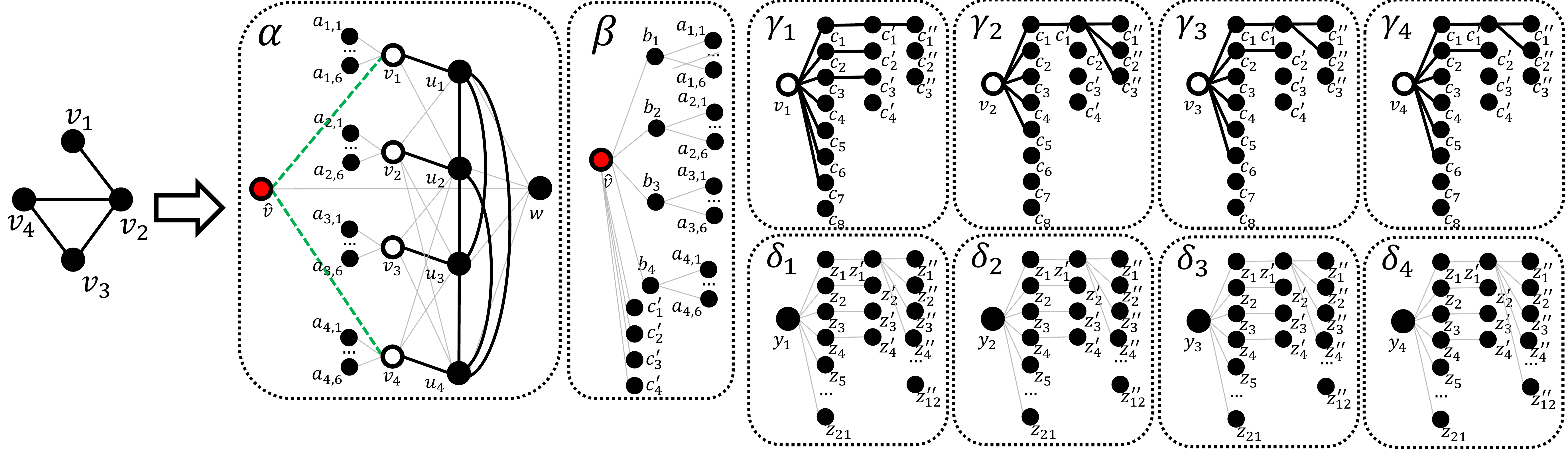}
\caption{An illustration of the network used in the proof of Theorem~\ref{thrm:inappr-closeness-global}.
The red node represents the evader, while the white nodes represent the contacts.
Dashed (green) edges represent the optimal solution to this problem instance.}
\label{fig:inappr-closeness-global}
\end{figure*}

\begin{proof}
In order to prove the theorem, we will use the result by Zuckerman~\cite{zuckerman2006linear} that the \textit{Maximum Independent Set} problem cannot be approximated within $|V|^{1-\epsilon}$ for any $\epsilon > 0$, unless $P=NP$ (notice that the Maximum Independent Set problem is equivalent to the Maximum Clique problem on a complementary network).
The Maximum Independent Set problem is defined by a simple network, $G=(V,E)$.
The goal is to identify the maximum (in terms of size) group of nodes in $G$ that are independent, \ie, they do not induce any edges.

First, we will show a function $f(G)$ that based on an instance of the problem of Maximum Independent Set, defined by a simple network $G=(V,E)$, constructs an instance of the Maximum Multilayer Global Hiding.
In what follows we will assume that $n \geq 4$ (the problem can be easily solve in constant time for $n < 4$).

Let a multilayer network, $M=(V_L, E_L, V', L)$, be defined as follows (Figure~\ref{fig:inappr-closeness-global} depicts an instance of this network):

\begin{itemize}[leftmargin=*]
\item \textbf{The set of nodes $V'$:}
For every node, $v_i \in V$, we create a node $v_i$, a node $u_i$, a node $y_i$, six nodes $a_{i,1},\ldots,a_{i,6}$, and a node $b_i$.
We will denote the set of all nodes $v_i$ by $V$, the set of all nodes $u_i$ by $U$, the set of all nodes $y_i$ by $Y$, the set of all nodes $a_{i,j}$ by $A$, and the set of all nodes $b_i$ by $B$.
Additionally, we create the evader node $\vs$, a node $w$ and six sets of nodes $C=\{c_1,\ldots,c_{3n-4}\}$, $C'=\{c'_1,\ldots,c'_n\}$, and $C''=\{c''_1,\ldots,c''_{n-1}\}$, $Z=\{z_1,\ldots,z_{5n+1}\}$, $Z'=\{z'_1,\ldots,z'_n\}$, and $Z''=\{z''_1,\ldots,z''_{3n}\}$.

\item \textbf{The set of layers $L$:}
We create layers $\alpha$, $\beta$, $n$ layers $\gamma_1, \ldots, \gamma_n$, and $n$ layers $\delta_1, \ldots, \delta_n$.

\item \textbf{The set of occurrences of nodes in layers $V_L$:}
Layer $\alpha$ contains occurrences of nodes $\{\vs,w\} \cup A \cup V \cup U$.
Layer $\beta$ contains occurrences of nodes $\{\vs\} \cup A \cup B \cup C'$.
A given layer $\gamma_i$ contains occurrences of nodes $\{v_i\} \cup C \cup C' \cup C''$.
A given layer $\delta_i$ contains occurrences of nodes $\{y_i\} \cup Z \cup Z' \cup Z''$.

\item \textbf{The set of edges $E_L$:}
\begin{itemize}
\item \textbf{In layer $\alpha$:} for every pair of nodes $v_i \in V, u_j \in U$ we create an edge $(v_i,u_j)$ if and only if the edge $(v_i,v_j)$ was present in $G$.
For every node $u_i$ we create edges $(u_i,v_i)$ and $(u_i,w)$.
For every node $a_{i,j}$ we create an edge $(a_{i,j},v_i)$.
We also create an edge between every pair of nodes $u_i,u_j \in U$.
Finally, we create an edge $(\vs,w)$.

\item \textbf{In layer $\beta$:} for every node $a_{i,j}$ we create an edge $(a_{i,j},b_i)$.
We also create an edge $(\vs,b_i)$ for every node $b_i$ and an edge $(\vs,c'_i)$ for every node $c'_i$.

\item \textbf{In a given layer $\gamma_i$:} for every node $c_j$ we create an edge $(v_i,c_j)$ if and only if $j \leq 3n-4-|N_G(v_i)|$ (\ie, we connect $v_i$ with $3n-4-|N_G(v_i)|$ first nodes from $C$).
For every node $c'_j$ we create an edge $(c'_j,c_j)$ if and only if $j \leq n-|N_G(v_i)|$ (\ie, we connect $n-|N_G(v_i)|$ first nodes from $C'$ with their $C$ counterparts).
For every node $c''_j$ we create an edge $(c''_j,c'_1)$ if and only if $j \leq |N_G(v_i)|$ (\ie, we connect $|N_G(v_i)|$ first nodes from $C''$ with the node $c'_1$).

\item \textbf{In a given layer $\delta_i$:} for every node $z_j$ we create an edge $(y_i,z_j)$.
For every node $z'_j$ we create an edge $(z'_j,z_j)$.
For every node $z''_j$ we create an edge $(z''_j,z'_1)$ if and only if $j \leq n+2i$ (\ie, we connect $n+2i$ first nodes from $Z''$ with the node $z'_1$).
\end{itemize}

\end{itemize}

To complete the constructed instance of the problem let:

\begin{itemize}[leftmargin=*]
\item $\vs$ be the evader;
\item $\F=V$ be the set of contacts;
\item $c$ be the closeness centrality measure;
\item $d = n$ be the safety margin.
\end{itemize}

Hence, the formula of the function $f$ is $f(G)=(M,\vs,\F,c,d)$.
Let $\Add$ be the solution to the constructed instance of the Maximum Multilayer Global Hiding problem.
The function $g$ computing corresponding solution to the instance $G$ of the Maximum Independent Set problem is now $g(\Add)= \{v_i \in V : (\vs^\alpha,v_i^\alpha) \in \Add\}$, \ie, the nodes forming the independent set are the contacts that the evader is connected to.

Now, we will show that $g(\Add)$ is indeed a correct solution to $G$, \ie, that the nodes form an independent set.
Let $x_i = |N_G(v_i) \cap g(\Add)|$, \ie, the number of neighbours of $v_i$ in $G$ connected to $\vs$.
Let us compute the closeness centrality of all nodes in the network and compare it with the closeness centrality of the evader:

\begin{itemize}
\item $c_{clos}(M,\vs) = 2n+1+|\Add| + \frac{6n+n}{2} + \frac{n-|\Add|}{3} = 5\frac{5}{6}n+\frac{2}{3}|\Add|+1$;
\item if $v_i \in g(\Add)$ then $c_{clos}(M,v_i) = 3n+4 + \frac{2n-|N_G(v_i)|+|\Add|-x_i-1}{2} + \frac{n+6|N_G(v_i)|+5|\Add|-5x_i-6}{3} + \frac{6n-6|N_G(v_i)|-6|\Add|+6x_i}{4} = 5\frac{5}{6}n + \frac{2}{3}|\Add| + 1\frac{1}{2} - \frac{2}{3}x_i$;
\item if $v_i \notin g(\Add)$ then $c_{clos}(M,v_i) = 3n+3 + \frac{2n-|N_G(v_i)|}{2} + \frac{n+6|N_G(v_i)|}{3} + \frac{6n-6|N_G(v_i)|-6}{4} = 5\frac{5}{6}n + 1\frac{1}{2}$, hence $c_{clos}(M,v_i) < c_{clos}(M,\vs)$ for $|\Add| > 0$;
\item $c_{clos}(M,w) = n+1 + \frac{n}{2} + \frac{6n}{3} = 3\frac{1}{2}n+1 < c_{clos}(M,\vs)$;
\item $c_{clos}(M,u_i) \leq 2n + \frac{6n}{2} = 5n < c_{clos}(M,\vs)$;
\item $c_{clos}(M,a_{i,j}) \leq 2 + \frac{n+7}{2} + \frac{2n-2}{3} + \frac{6(n-1)}{4} = 2\frac{2}{3}n + 3\frac{1}{3}	< c_{clos}(M,\vs)$;
\item $c_{clos}(M,b_i) = 7 + \frac{n-1}{2} + \frac{(n-1)6}{3} = 2\frac{1}{2}n + 4\frac{1}{2} < c_{clos}(M,\vs)$;
\item $c_{clos}(M,c_i) \leq n + 1 + \frac{4n-6}{2} + \frac{n-1}{3} = 3\frac{1}{3}n - 2\frac{1}{3} < c_{clos}(M,\vs)$;
\item $c_{clos}(M,c'_i) \leq n + 1 + \frac{3n-1}{2} + \frac{8n-4}{3} + \frac{n-1}{4} = 5\frac{5}{12}n - 1\frac{1}{12} < c_{clos}(M,\vs)$;
\item $c_{clos}(M,c''_i) \leq 1 + \frac{n-1}{2} + \frac{n}{3} + \frac{3n-5}{4} + \frac{n-1}{5} = 1\frac{47}{60}n - 1\frac{19}{20} < c_{clos}(M,\vs)$;
\item $c_{clos}(M,y_i) = 5n + 1 + \frac{n}{2} + \frac{n+2i}{3} = 5\frac{5}{6}n + 1 + \frac{2}{3}i$;
\item $c_{clos}(M,z_i) \leq n + 1 + \frac{8n}{2} + \frac{n-1}{3} = 5\frac{1}{3}n + \frac{2}{3} < c_{clos}(M,\vs)$;
\item $c_{clos}(M,z'_i) \leq 3n+1 + \frac{n}{2} + \frac{5n}{3} + \frac{n-1}{4} = 5\frac{5}{12}n + \frac{3}{4} < c_{clos}(M,\vs)$;
\item $c_{clos}(M,z''_i) = 1 + \frac{3n}{2} + \frac{n}{3} + \frac{5n}{4} + \frac{n-1}{5} = 3\frac{17}{60}n + \frac{4}{5} < c_{clos}(M,\vs)$.
\end{itemize}

Notice that the only nodes that can have greater closeness centrality than the evader are the nodes in $V$ that the evader is connected to and the nodes in $Y$.
For a given node $y_i$ it has a greater closeness centrality score than the evader if and only if the evader is connected to less than $i$ nodes from $V$ (\ie, when $|\Add| < i$), as:
$$
c_{clos}(M,\vs) - c_{clos}(M,y_i) = \frac{2}{3}(|\Add|-i).
$$
Hence, since the safety margin is $d=n$ and the only other nodes that can have greater closeness centrality than the evader are the nodes in $V$ that the evader is connected to.
Hence, every node in $V$ that the evader connects to must have greater closeness centrality than the evader, in order for the safety margin to be maintained (there are exactly $n$ nodes in $Y$, and every edge additional edge in $\Add$ causes the evader to get greater closeness centrality than one of the nodes in $Y$).
However, node $v_i \in V$ that the evader is connected to has greater closeness centrality than the evader if and only if $x_i=0$, \ie, no neighbors of $v_i$ are connected to $\vs$, as:
$$
c_{clos}(M,\vs) - c_{clos}(M,v_i) = \frac{2}{3}x_i - \frac{1}{2}.
$$
This implies that, if $\vs$ is hidden then all nodes from $V$ that are connected to $\vs$ must form an independent set.

Therefore, the optimal solution to the constructed instance of the Maximum Multilayer Global Hiding problem is returning nodes from $V$ forming in $G$ an independent set of the maximum size.
Hence, the optimal solution corresponds to the optimal solution to the given instance of the Maximum Independent Set problem.

Now, assume that there exists an approximation algorithm for the Maximum Multilayer Global Hiding problem with ratio $|\F|^{1-\epsilon}$ for some $\epsilon > 0$.
Let us use this algorithm to solve the constructed instance $f(G)$, acquiring solution $\Add$. and consider solution $g(\Add)$ to the given instance of the Maximum Clique problem.
Since the size of the optimal solution is the same for both instances, we obtained an approximation algorithm that solves Maximum Independent Set problem to within $|V|^{1-\epsilon}$ for $\epsilon > 0$.
However, Zuckerman~\cite{zuckerman2006linear} shown that the Maximum Independent Set problem cannot be approximated within $|V|^{1-\epsilon}$ for any $\epsilon > 0$, unless $P=NP$.
Therefore, such approximation algorithm for the Maximum Multilayer Global Hiding problem cannot exist, unless $P=NP$.
This concludes the proof.
\end{proof}

\begin{figure*}[t]
\centering
\includegraphics[width=.6\linewidth]{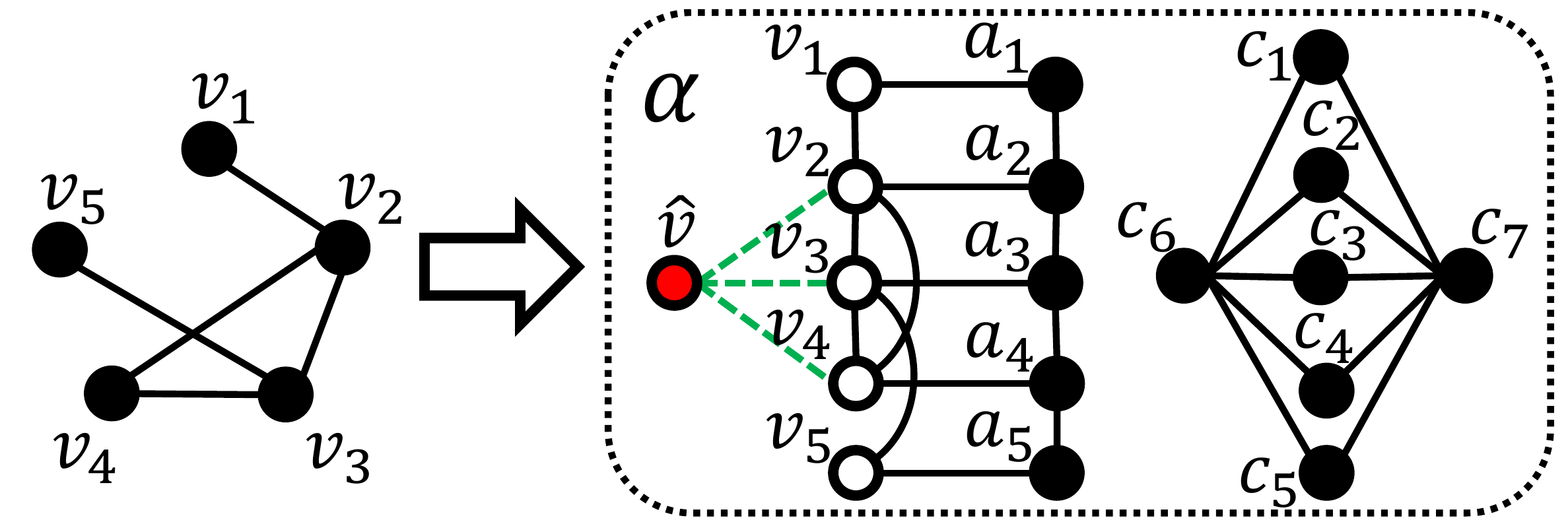}
\caption{An illustration of the network used in the proof of Theorem~\ref{thrm:inappr-betweenness}.
The red node represents the evader, while the white nodes represent the contacts.
Dashed (green) edges represent the optimal solution to this problem instance.}
\label{fig:inappr-betweenness}
\end{figure*}

\begin{theorem}
\label{thrm:inappr-betweenness}
Both Maximum Multilayer Global Hiding and Maximum Multilayer Local Hiding problems given the betweenness centrality cannot be approximated within $|\F|^{1-\epsilon}$ for any $\epsilon > 0$, unless P=NP.
\end{theorem}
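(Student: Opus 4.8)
\emph{Proof idea.} The plan is to build a single gap-preserving reduction that serves simultaneously as an instance of \emph{both} optimization problems, reducing from \emph{Maximum Independent Set}, which by Zuckerman~\cite{zuckerman2006linear} cannot be approximated within $|V|^{1-\epsilon}$ for any $\epsilon>0$ unless $P=NP$; equivalently, the same holds for \emph{Maximum Clique}, since Maximum Clique on a graph is Maximum Independent Set on its complement. Given a graph $G=(V,E)$ with $|V|=n$ large enough (we may assume $G$ is connected, handling the components separately otherwise), I would construct a multilayer network $M$ with one ``working'' layer $\alpha$ that contains a copy of $G$ on the vertex set $V$, the evader $\vs$ occurring in $\alpha$ with no initial incident edges, and an \emph{isolated decoy gadget} built from two hub nodes $b_1,b_2$ and a set $C$ of $N$ nodes, each $c_i\in C$ joined to both hubs, so that $C\cup\{b_1,b_2\}$ is a separate component within which every $c_i$ has betweenness exactly $\tfrac{1}{N}$ and both hubs have large betweenness. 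I would choose $N$ to exceed the total number of non-decoy, non-evader nodes (say, $N=n+3$), take the contacts to be $\F=V$ with each $v_i$ occurring only in $\alpha$, let $c$ be the betweenness centrality, and set the safety margin to $d=N$ for the global version and to $d^\alpha=N$ (with a vacuous margin on any further layer) for the local version; the solution-translation map would be $g(\Add)=\{v_i\in V:(\vs^\alpha,v_i^\alpha)\in\Add\}$.

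The core of the argument is a ``clique-forcing'' lemma in the spirit of Theorems~\ref{thrm:npc-betweenness-global} and~\ref{thrm:npc-betweenness-local}: writing $S=g(\Add)$, the evader is hidden (both globally and within layer $\alpha$) if and only if $S$ induces a clique in $G$. If $S$ is a clique, then the neighbours of $\vs$ are pairwise adjacent, so attaching $\vs$ creates no new shortest path through $\vs$ and $c_{betw}(\vs)=0$; since all $N$ decoys have strictly positive betweenness, at least $d=N$ nodes outrank $\vs$, so $\vs$ is hidden. Conversely, if $S$ contains a non-adjacent pair $v_i,v_j$, then $\vs$ lies on the length-$2$ path $v_i\,\vs\,v_j$, while every competing shortest path between $v_i$ and $v_j$ passes through a common neighbour in $V$ (at most $n-2$ of them), so $c_{betw}(\vs)\ge\tfrac{1}{n-1}>\tfrac{1}{N}$; hence no decoy outranks $\vs$, and the remaining at most $N-1$ nodes cannot meet the margin $d=N$, so $\vs$ is not hidden. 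Carrying out the same count inside $G^\alpha$ settles the local version, since the decoy gadget lives in layer $\alpha$.

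It follows that, for every $t$, the evader can be connected to $t$ contacts while remaining hidden exactly when $G$ has a clique of size $t$ (the cases $t\le1$ being vacuous), so the optimum of either Maximum Multilayer Hiding problem on this instance equals the maximum clique size $\omega(G)$, and $g$ maps any feasible $\Add$ to a clique of $G$ of size $|\Add|$. Since $|\F|=|V|=n$, a polynomial-time algorithm approximating either problem within $|\F|^{1-\epsilon}$ would, via $g$, approximate Maximum Clique within $n^{1-\epsilon}$, hence approximate Maximum Independent Set within $|V|^{1-\epsilon}$ on the $n$-vertex complementary graph, contradicting Zuckerman's bound unless $P=NP$. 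Because the construction is at the same time a valid instance of the global and of the local problem, the statement follows for both.

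I expect the main obstacle to be the betweenness bookkeeping with \emph{no slack}. One must verify that (i) a clique-connected evader really has betweenness $0$, i.e.\ attaching $\vs$ to a clique in $\alpha$ introduces no shortest path through $\vs$ --- which is why $\vs$ starts isolated and the contacts occur only in $\alpha$; (ii) whenever $\vs$ is ``badly'' connected, at most $N-1$ nodes can possibly outrank it, which forces $N$ to dominate every other node count and forces the decoys to stay strictly below $\tfrac{1}{n-1}$; and (iii) the global count over all of $M$ coincides with the per-layer count of the local version, which holds precisely because the threshold-setting decoy gadget is placed inside layer $\alpha$ and $\F=V$ exactly. Keeping the contact set equal to $V$ --- rather than padding it as in the decision-version reductions --- is exactly what makes the reduction gap-preserving, so that the exponent $1-\epsilon$ carries over intact.
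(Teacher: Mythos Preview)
Your proposal is correct and follows essentially the same approach as the paper: a single-layer instance containing a copy of $G$, the evader initially isolated, a separate two-hub decoy gadget whose leaf nodes have betweenness $\tfrac{1}{N}$, contacts $\F=V$, and a safety margin forcing $c_{betw}(\vs)<\tfrac{1}{n-1}$, which in turn forces the evader's neighbours to form a clique. The only cosmetic difference is that the paper additionally attaches an auxiliary path $a_1,\dots,a_n$ (with $v_i$ joined to $a_i$) so that \emph{every} non-evader node has positive betweenness and the margin can be set to the total node count $3n+2$; your version dispenses with this by setting $d=N$ and relying solely on the decoy gadget, which is a mild simplification but not a different argument.
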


\begin{proof}
In order to prove the theorem, we will use the result by Zuckerman~\cite{zuckerman2006linear} that the \textit{Maximum Clique} problem cannot be approximated within $|V|^{1-\epsilon}$ for any $\epsilon > 0$, unless $P=NP$.
The Maximum Clique problem is defined by a simple network, $G=(V,E)$.
The goal is to identify the maximum (in terms of size) group of nodes in $G$ that form a clique.

First, we will show a function $f(G)$ that based on an instance of the problem of Maximum Clique, defined by a simple network $G=(V,E)$, constructs either an instance of the Maximum Multilayer Global Hiding or an instance of the Maximum Multilayer Local Hiding.

Let a multilayer network, $M=(V_L, E_L, V', L)$, be defined as follows (Figure~\ref{fig:inappr-betweenness} depicts an instance of this network):

\begin{itemize}[leftmargin=*]
\item \textbf{The set of nodes $V'$:}
For every node, $v_i \in V$, we create a node $v_i$ and a node $a_i$.
Additionally, we create the evader node $\vs$ and the set of nodes $C=\{c_1,\ldots,c_{n+2}\}$.

\item \textbf{The set of layers $L$:}
We create only a single layer $\alpha$.

\item \textbf{The set of occurrences of nodes in layers $V_L$:}
All nodes occur in layer $\alpha$.

\item \textbf{The set of edges $E_L$:}
In layer $\alpha$ we create an edge between two nodes $v_i,v_j \in V$ if and only if this edge was present in $G$.
We also create an edge $(v_i,a_i)$ for every $v_i$, and an edge between every pair $a_i,a_{i+1}$.
Finally, for every node $c_i\in C:i \leq n$, we create edges $(c_i,c_{n+1})$ and $(c_i,c_{n+2})$.
\end{itemize}

To complete the constructed instance of the problem let:

\begin{itemize}[leftmargin=*]
\item $\vs$ be the evader;
\item $\F=V$ be the set of contacts;
\item $c$ be the betweenness centrality measure;
\item $d = 3n+2$ be the safety margin in the global version;
\item $d^\alpha = 3n+2$ be the safety margin in the local version.
\end{itemize}

Hence, the formula of the function $f$ is $f(G)=(M,\vs,\F,c,d)$ for the global version of the problem and $f(G)=(M,\vs,\F,c,\left(d^\alpha\right)_{\alpha \in L})$ for the local version of the problem.
Notice that since network $M$ has only one layer, both problems are equivalent.
In the following we will focus on the global version of the problem.

Let $\Add$ be the solution to the constructed instance of the Maximum Multilayer Global Hiding problem.
The function $g$ computing corresponding solution to the instance $G$ of the Maximum Clique problem is now $g(\Add)= \{v \in V : (\vs,v) \in \Add\}$, \ie, the nodes forming the clique are the contacts that the evader is connected to.

Now, we will show that $g(\Add)$ is indeed a correct solution to $G$, \ie, that the nodes form a clique.
Notice that, since $d = 2n+2k+3$, all other nodes must have greater betweenness centrality than the evader in order for $\vs$ to be hidden.
Notice also that the betweenness centrality of every node $c_i$ for $i \leq n$ is $\frac{1}{n}$.
Moreover, after adding $\Add$ all nodes other than $\vs$ have non-zero betweenness centrality.
If $\vs$ gets connected to at least two nodes from $\F$ that are not connected to each other, then $\vs$  controls one of at most $n-1$ shortest path between them (other paths can only go through nodes in $V$) and thus the betweenness centrality of $\vs$ is at least $\frac{1}{n-1}$.
Therefore, in order to get hidden, $\vs$ cannot control any shortest paths in the network.
This implies that, if $\vs$ is hidden then all nodes that are connected to $\vs$ must form a clique.

Therefore, the optimal solution to the constructed instance of the Maximum Multilayer Global Hiding problem is returning nodes from $V$ forming a clique of maximum size.
Since the structure of connections between the nodes $V$ is the same as in the network $G$, the optimal solution corresponds to the optimal solution to the given instance of the Maximum Clique problem.

Now, assume that there exists an approximation algorithm for the Maximum Multilayer Global Hiding problem with ratio $|\F|^{1-\epsilon}$ for some $\epsilon > 0$.
Let us use this algorithm to solve the constructed instance $f(G)$, acquiring solution $\Add$. and consider solution $g(\Add)$ to the given instance of the Maximum Clique problem.
Since the size of the optimal solution is the same for both instances, we obtained an approximation algorithm that solves Maximum Clique problem to within $|V|^{1-\epsilon}$ for $\epsilon > 0$.
However, Zuckerman~\cite{zuckerman2006linear} shown that the Maximum Clique problem cannot be approximated within $|V|^{1-\epsilon}$ for any $\epsilon > 0$, unless $P=NP$.
Therefore, such approximation algorithm for the Maximum Multilayer Global Hiding problem cannot exist, unless $P=NP$.
This concludes the proof.
\end{proof}

\begin{theorem}
\label{thrm:2appr-degree-local}
The greedy algorithm is a $2$-approximation for the Maximum Multilayer Local Hiding problem given the degree centrality.
The bound is tight.
\end{theorem}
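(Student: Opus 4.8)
Fix a layer $\alpha$ and suppose the evader is connected to a set $S$ of contacts occurring in $V^\alpha$. In $\widehat{M}^\alpha$ the evader then has degree $|N^\alpha_M(\vs)|+|S|$, every $v\in S$ has its layer-$\alpha$ degree raised by one, and all other nodes are unchanged, so the nodes dominating $\vs$ in $\alpha$ are exactly those $v\in V^\alpha$ with $|N^\alpha_M(v)|>|N^\alpha_M(\vs)|+|S|$ together with the ``boosters'' in $S$, i.e. those $v\in S$ with $|N^\alpha_M(v)|=|N^\alpha_M(\vs)|+|S|$. Call $S$ \emph{$\alpha$-safe} when there are at least $d^\alpha$ such nodes, write $g^\alpha(k)=|\{v\in V^\alpha:|N^\alpha_M(v)|>|N^\alpha_M(\vs)|+k\}|$, and let $\sigma^\alpha$ be the maximum size of an $\alpha$-safe set. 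I would establish: (a) $\alpha$-safety is checkable in polynomial time; (b) $\alpha$-safety is downward closed, since deleting a connected contact lowers the evader's degree and the deleted contact's degree by one each, so no former dominator of $\vs$ is lost; (c) connecting a contact in more than one layer is never useful; and, crucially, (d) \emph{every} set of at most $\sigma^\alpha-1$ contacts occurring in $\alpha$ is $\alpha$-safe --- because an $\alpha$-safe set of size $r$ has at most $g^\alpha(r-1)-g^\alpha(r)$ boosters, so (applying this with $r=\sigma^\alpha$) $g^\alpha(\sigma^\alpha-1)\ge d^\alpha$, and $g^\alpha$ is non-increasing, hence any smaller set already has at least $d^\alpha$ dominators. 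In other words, the only size at which the booster count can matter is exactly $\sigma^\alpha$.

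\textbf{Step 2 (the $2$-approximation).} Using (b)--(c), a feasible solution is an assignment of a subset of $\F$ to layers (each contact to a layer where it occurs) whose restriction to every layer $\alpha$ is $\alpha$-safe, and the greedy algorithm returns a \emph{maximal} such assignment: it adds evader--contact edges one at a time, each keeping all layers hidden, never retracting, so on termination no unassigned contact can be added anywhere. Let $H_\alpha,O_\alpha$ be the contacts assigned to layer $\alpha$ by the greedy and by a fixed optimum, with $H=\bigcup_\alpha H_\alpha$, $O=\bigcup_\alpha O_\alpha$. By (d), if there is an unassigned contact $v$ occurring in $\alpha$ with $H_\alpha\cup\{v\}$ not $\alpha$-safe, then $|H_\alpha|+1\ge\sigma^\alpha$, and moreover $|H_\alpha|\ge 1$ (otherwise $\{v\}$, a subset of the $\alpha$-safe set $O_\alpha$, would be $\alpha$-safe, so greedy could still assign $v$); hence for every layer $\alpha$ that $O$ uses but $H$ fails to match completely, $|O_\alpha|\le\sigma^\alpha\le|H_\alpha|+1\le 2|H_\alpha|$. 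The bound then follows from a charging argument: each contact in $O\cap H$ is charged to itself, and for each layer $\alpha$ the contacts of $O_\alpha\setminus H$ are charged injectively into $H_\alpha$, using that $|H_\alpha|$ dominates their number up to the factor $2$; one then checks that every member of $H$ is charged at most twice, giving $|O|\le 2|H|$. Since the greedy clearly runs in polynomial time, this yields the approximation guarantee.

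\textbf{Step 3 (tightness).} For the matching lower bound I would take an instance with two layers $\alpha,\beta$ and a pool of $2t$ contacts occurring in both, with the remaining gadget calibrated so that the evader stays hidden in each layer iff at most $t$ pool-contacts are connected there: the optimum splits the pool $t$--$t$ and connects all $2t$, while the greedy, forced to fill one layer first (in particular committing to the ``wrong'' contacts and booster choices there), is afterwards left with a full layer and cannot extend, connecting only $t$. Scaling $t\to\infty$ drives the ratio to $2$, and a small adjustment of the safety margins attains it exactly.

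\textbf{Main obstacle.} The delicate point is the ``$+1$'' slack in $|O_\alpha|\le|H_\alpha|+1$ coming from property (d): the greedy may stop one contact short of $\sigma^\alpha$ in a layer because it picked its boosters badly, and if such slack accumulates across many layers a naive accounting only yields a ratio close to $3$ rather than $2$. Closing this gap requires either showing that the greedy's stopping configuration in a ``slack'' layer forces extra unused room elsewhere in $H$ that absorbs the slack, or ordering the greedy's edge additions and its booster choices (exploiting (d)) so that every layer it uses is filled to exactly $\sigma^\alpha$. This is precisely the bookkeeping --- counting high-degree nodes and eligible boosters level by level --- that the hardness gadget of Theorem~\ref{thrm:npc-degree-local} already exhibits and that the tight instance of Step~3 must realise, so the upper and lower bounds are two readings of the same computation about boosters.
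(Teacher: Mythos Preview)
Your structural analysis in Step~1 is correct and matches the paper's: points (a)--(d) all hold, and (d) in particular is the right lemma. The gap is exactly where you locate it, in the ``$+1$'' slack of Step~2, and you do not close it. Charging $O_\alpha\setminus H$ injectively into $H_\alpha$ requires $|O_\alpha\setminus H|\le|H_\alpha|$, but your argument only yields $|O_\alpha|\le|H_\alpha|+1$, and this surplus can occur in every layer simultaneously; the suggested fixes (``forces extra unused room elsewhere'' or ``ordering the greedy's additions'') are not carried out, so the proof as written does not establish the ratio~$2$.

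The paper sidesteps this entirely, and the key is that its greedy is not edge-at-a-time but \emph{layer-at-a-time}: it processes the layers in a fixed order $\alpha_1,\dots,\alpha_{|L|}$ and, in each $\alpha_i$, connects the evader to a \emph{maximum-size} $\alpha_i$-safe subset of the not-yet-connected contacts. This per-layer maximization is what kills the slack. Writing $X^{\alpha_i}=\{v\in O_{\alpha_i}: v\in H_{\alpha_j}\text{ for some }j<i\}$, the set $O_{\alpha_i}\setminus X^{\alpha_i}$ consists of contacts still available when greedy reached $\alpha_i$ and is $\alpha_i$-safe by your downward-closure property~(b); hence greedy's maximization gives $|H_{\alpha_i}|\ge|O_{\alpha_i}|-|X^{\alpha_i}|$ with no ``$+1$''. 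Summing over $i$ and observing that the $X^{\alpha_i}$ are pairwise disjoint subsets of $H$ yields $|O|\le|H|+\sum_i|X^{\alpha_i}|\le 2|H|$. The idea you were missing is to compare $H_{\alpha_i}$ not against the universal capacity $\sigma^{\alpha_i}$ but against the \emph{specific} candidate $O_{\alpha_i}\setminus X^{\alpha_i}$ handed to you by the optimum.

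Your tightness sketch also needs adjustment: with two symmetric layers of capacity $t$ and all $2t$ contacts occurring in both, the paper's layer-by-layer greedy fills layer~$\alpha$ with some $t$ contacts and then puts the remaining $t$ in~$\beta$, achieving the optimum. The paper's tight instance instead arranges the contacts' layer occurrences so that greedy, maximizing in the layers it visits first, consumes contacts that are indispensable in later layers.
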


\begin{proof}

First, let us analyze the structure of a solution to the Maximum Multilayer Local Hiding problem given the degree centrality.
Let $\delta^\alpha$ be the degree of the $d^\alpha$-th node in the degree centrality ranking of the nodes in $V^\alpha$, let $\delta_0^\alpha$ be the initial (\ie, before any edges to the contacts are added) degree of the evader in layer $\alpha$, and let $\F^\alpha$ be the set of occurrences of contacts in layer $\alpha$, \ie, $\F^\alpha = \{v^\alpha : v \in \F\}$.
An algorithm solving the Maximum Multilayer Local Hiding problem can either:

\begin{enumerate}[label=\alph*)]
\item \label{pt:greedy-1} connect the evader to at most $k^\alpha = \delta^\alpha - 1 - \delta_0^\alpha$ of freely selected nodes from $\F^\alpha$, as this way the degree of the evader is increased to at most $\delta^\alpha - 1$, and the nodes from the first $d^\alpha$ positions of the degree ranking before the addition continue to have greater degree than the evader when the new edges are added;

\item \label{pt:greedy-2} connect the evader to exactly $\delta^\alpha - \delta_0^\alpha$ nodes from $\F^\alpha$ (notice that $\delta^\alpha - \delta_0^\alpha = k^\alpha + 1$).
This increases the degree of the evader to $\delta^\alpha$, hence the new connections must include at least $d^\alpha - |\{v^\alpha \in V^\alpha : |N^\alpha(v)| > \delta^\alpha \}|$ nodes with degree exactly $\delta^\alpha$.
As a result, there will now exist $d^\alpha$ nodes with degree at least $\delta^\alpha+1$ and the safety margin will be maintained.
\end{enumerate}

First, notice that the sets of potential connections in both \ref{pt:greedy-1} and \ref{pt:greedy-2} can be easily computed in polynomial time, hence the greedy algorithm can use them to optimize the choice of edges added in a single layer.

Notice also that the evader can never add more than $k^\alpha + 1$ edges in layer $\alpha$, as her degree will then increase to at least $\delta^\alpha+2$.
Since adding a set of connections between the evader and the contacts cannot increase the degree of any contact by more than one, the $d^\alpha$-th node in the degree centrality ranking of the nodes in $V^\alpha$ will have degree at most $\delta^\alpha+1$.
Hence, the safety margin cannot be maintained.

Finally, notice that if $k^\alpha < 0$, then the degree of the evader is at least $\delta^\alpha$ before adding any edges, which puts her within the top $d^\alpha$ positions of the degree centrality ranking.
Since increasing the degree of any other nodes can be realized only by adding an edge to the evader (which in turn increases the evader's degree even more), the problem does not have a solution if $k^\alpha < 0$ for any layer $\alpha$.

The greedy algorithm iterates over the layers and for each layer it connects the evader with maximum possible number of contacts that the evader has not been connected with yet.
Notice that it is never beneficial to connect the evader with a given contact in more than one layer, hence any solution doing so has an equivalent solution without the redundant edge(s).
In what follows, we will only consider solutions without the redundant edges.

Let us now compare a solution $A^\$$ returned by the greedy algorithm with an optimal solution $\Add$.
We will denote by $A^\$_\alpha$ the set of contacts connected to the evader by the greedy algorithm in layer $\alpha$, \ie, $A^\$_\alpha = \{v \in V^\alpha : (\vs^\alpha , v^\alpha) \in A^\$\}$, and by $\Add_\alpha$ the set of contacts connected to the evader by the optimal algorithm in layer $\alpha$, \ie, $\Add_\alpha = \{v \in V^\alpha : (\vs^\alpha , v^\alpha) \in \Add\}$.
We iterate over the layers of the network in the same order as the greedy algorithm; let this order be $\alpha_1,\ldots,\alpha_{|L|}$.
Contacts that the optimal solution connects the evader to in a given layer $\alpha_i$ can be grouped into three pairwise disjoint sets:
\begin{itemize}
\item Contacts that could not have been selected in layer $\alpha_i$ by the greedy algorithm, as they were selected by it in one of the previous layers, \ie:
$$
X^{\alpha_i} = \{ v \in \Add_{\alpha_i} : v \notin A^\$_{\alpha_i} \land \exists_{j < i} v \in A^\$_{\alpha_j} \};
$$
\item Contacts that are not selected by the greedy algorithm in layer $\alpha_i$, but they could have been selected, \ie:
$$
Y^{\alpha_i} = \{ v \in \Add_{\alpha_i} : v \notin A^\$_{\alpha_i} \land \neg\exists_{j < i} v \in A^\$_{\alpha_j} \};
$$
\item Contacts that are selected by both the greedy algorithm and the optimal solution in layer $\alpha_i$, \ie:
$$
Z^{\alpha_i} = \{ v \in \Add_{\alpha_i} : v \in A^\$_{\alpha_i} \}.
$$
\end{itemize}

We will  show that $|\Add_{\alpha_i}| - |A^\$_{\alpha_i}| \leq |X^{\alpha_i}|$, \ie, the difference between the number of edges added in layer $\alpha_i$ by the optimal solution and by the greedy algorithm cannot be greater than $|X^{\alpha_i}|$.
We will prove this by contradiction.
To this end, assume that in some layer $\alpha_i$ the said difference is greater than $|X^{\alpha_i}|$, \ie, $|\Add_{\alpha_i}| > |A^\$_{\alpha_i}| + |X^{\alpha_i}|$.
Since $|\Add_{\alpha_i}| = |X^{\alpha_i}| + |Y^{\alpha_i}| + |Z^{\alpha_i}|$, we get that in this layer: 
$|A^\$_{\alpha_i}| < |Y^{\alpha_i}| + |Z^{\alpha_i}|$.
However, since none of the nodes from $Y^{\alpha_i} \cup Z^{\alpha_i}$ were selected by the greedy algorithm in the previous layers, the greedy algorithm would have chosen to connect the evader with contacts from $Y^{\alpha_i} \cup Z^{\alpha_i}$, as it connects the evader with a greater number of nodes in layer $\alpha_i$ than the solution $A^\$$.
Therefore, the difference between the number of edges added in layer $\alpha_i$ by the optimal solution and by the greedy algorithm cannot be greater than $|X^{\alpha_i}|$, \ie, $|\Add_{\alpha_i}| - |A^\$_{\alpha_i}| \leq |X^{\alpha_i}|$. Summing over all layers yields:
$$
\sum_{\alpha_i \in L} |\Add_{\alpha_i}| \leq \sum_{\alpha_i \in L} |A^\$_{\alpha_i}| + \sum_{\alpha_i \in L} |X^{\alpha_i}|.
$$
Since any $v$ is a member of only a single set $X^{\alpha_i}$ (as we assumed that the optimal solution does not contain any redundant edges) and since from the definition of $X^{\alpha_i}$ we have that $\exists_{j < i} v \in A^\$_{\alpha_j}$, we get that $\sum_{\alpha_i \in L} |X^{\alpha_i}| \leq |A^\$|$.
Given that $\sum_{\alpha_i \in L} |\Add_{\alpha_i}| = |\Add|$ and $\sum_{\alpha_i \in L} |X^{\alpha_i}| = |A^\$|$ we get:
$$
|\Add| \leq 2|A^\$|.
$$
Since we consider solution without redundant edges, the size of each solution is equal to the number of contact connected with the evader by each solution.
Therefore, the greedy algorithm is a $2$-approximation.

\begin{figure}[t]
\centering
\includegraphics[width=.7\linewidth]{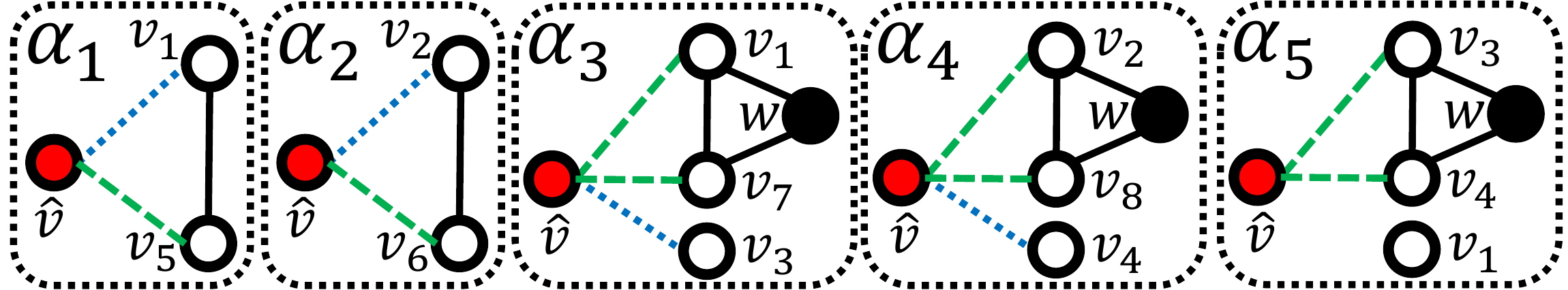}
\caption{An illustration of the network showing the tightness of the bound given in Theorem~\ref{thrm:2appr-degree-local}.
The red node represents the evader, while white the nodes represent the contacts.
Dashed (green) edges represent the optimal solution to this problem instance, while dotted (blue) edges represent the solution returned by the greedy algorithm.}
\label{fig:2appr-degree-local}
\end{figure}

Figure~\ref{fig:2appr-degree-local} presents an example of the network, where the bound is tight, \ie, the optimal solution connects the evader with exactly twice as many contacts as the greedy algorithm.
The green edges represent the optimal solution, connecting the evader with all eight contacts, while the greedy algorithm (the result of which is represented by the blue edges) connects the evader to only four contacts.

\end{proof}

\section{Heuristics \& Empirical Analysis}
\label{sec:simulations}

Given that most computational results are negative, we shift now our attention towards developing heuristic algorithms that provide efficient, albeit not optimal, solutions.

\subsection{Network Datasets}

In our experiments we use both randomly generated and real-life multilayer networks. As for the randomly generated ones, we use the following standard network generation models:

\begin{itemize}
\item \emph{Random graphs}, generated using the Erd{\H{o}}s-R{\'e}nyi model \cite{erdds1959random}.
We denote by ER$(n,k)$ a network with $n$ nodes with expected degree of $k$.

\item \emph{Small-world} networks, generated using the Watts-Strogatz model \cite{watts1998collective}.
We denote by WS$(n,k)$ a network with $n$ nodes, an average degree of $k$, and a rewiring probability of $\frac{1}{4}$.

\item \emph{Scale-free} networks, generated using the Barab{\'a}si-Albert model \cite{barabasi1999emergence}. We denote by BA$(n,k)$ a network with $n$ nodes, with $k$ edges added with each new node.
The size of the initial clique is $k$.
\end{itemize}

To construct a multilayer network with $n$ nodes and $l$ layers using a network generation model $X \in \{ER,WS,BA\}$, we perform the following steps:

\begin{enumerate}

\item We create the set of node occurrences such that, for every node $v$ and every layer $\alpha$, the node $v$ occurs in $\alpha$ with probability $p_O$.
If, at the end of this process, $v$ does not occur on any layer, then we create one occurrence of $v$ in a layer chosen uniformly at random; this ensures that every node occurs in at least one layer.

\item For every layer, $\alpha$, we generate a network $X(|V^\alpha|,k)$ whose set of nodes consists of all the nodes that occur in $\alpha$ from the previous step.

\item For every two occurrences of the same node, we create an inter-layer edge between them with probability $p_C$. This results in a network with diagonal couplings.

\end{enumerate}

The network created using these steps will be denoted by $X^l (n,k)$.
In our experiments, we use networks where $p_O = p_C = \frac{1}{2}$ and where the number of layers is $l=3$ (the choice of 3 layers was inspired by the work of Gera~\etal~\cite{gera2017three}). Additionally, we use the following real-life networks:

\begin{itemize}

\item \textit{FF-TW-YT} dataset \cite{MagnaniSBP10} consisting of a network of connections between $1722$ individuals that have an account in each of the following social media site: Friendfeed, Twitter, and YouTube; each of these sites is represented by a separate layer in the network.

\item \textit{Provisional IRA} dataset \cite{gill2014lethal} consisting of the network of connections between members of the Provisional Irish Republic Army in the period between 1970 and 1998.
The network consists of $937$ nodes and $5$ layers, where layers correspond to contacts between organization members in different time periods.

\item \textit{Lazega law firm} dataset \cite{lazega2001collegial} consisting of connections between attorneys working for a US corporate law firm.
The network consists of $71$ nodes and $3$ layers, where each layer corresponds to a different type of relationship, \ie, friendship, professional cooperation and mentorship.

\item \textit{CS Aarhus} dataset \cite{magnani2013combinatorial} consisting of connections between the employees of the Computer Science department at Aarhus.
The network consists of $61$ nodes and $5$ layers, where each layer corresponds to a different type of relationship between employees, \eg, Facebook friendships, co-authorship of papers, \etc

\end{itemize}

\begin{table}[t!]
\caption{Characteristics of the considered datasets.}
\smallskip
\centering
\smallskip
\begin{tabular}{lcccc}
Dataset	& $|V|$ & $|L|$ & $|V_L|$ & $|E_L|$ \\
\hline
FF-TW-YT & 574 & 3 & 1722 & 5681 \\
Provisional IRA & 937 & 5 & 1570 & 3398 \\
Lazega law firm & 71 & 3 & 211 & 2051 \\
CS Aarhus & 61 & 5 & 224 & 948 \\
\end{tabular}
\label{tab:datasets}
\end{table}

Table~\ref{tab:datasets} presents detailed characteristics of the considered datasets.

\subsection{Heuristic Algorithms}

Recall that the ``group of contacts'' refers to the set of individuals whom the evader wishes to connect to. We will refer to each member of this group as a ``contact''.
Notice that a typical member of a social network does not have complete knowledge about the network's structure.
Hence, we assume that the evader's knowledge is limited to the connections between the contacts, as well as the degree of each contact.
All of our heuristic algorithms take only this information into account.
Specifically:

\begin{itemize}

\item \textit{Random}---This heuristic connects the evader to every contact in a layer chosen uniformly at random out of all layers in which both the evader and that contact occur.

\item \textit{All in one}---This heuristic (Algorithm~\ref{alg:all-in-one}) focuses on creating edges between the evader and her contacts in as few layers as possible. The intuition is that, by focusing all activities of the evader in a small number of layers (if possible, in only one layer), the global centrality measures would assign low importance to the evader.
Even though this heuristic might seem overly simplified, we include it as a reasonable baseline---a ``rule of thumb'' that could be readily implemented by members of the general public.

\item \textit{Fringe}---This heuristic (Algorithm~\ref{alg:fringe}) focuses on minimizing the number of nodes that are in close vicinity of the evader.
The main idea behind this heuristic is to maximize the average distance between the evader and other nodes, in the hope of achieving low ranking according to closeness centrality.
Given the limited knowledge of the evader about the network topology, the heuristic cannot analyze any nodes whose distance from the evader is greater than $2$.
Therefore, the heuristic simply focuses on minimizing the number of neighbors of the contacts.

\item \textit{Density}---This heuristic (Algorithm~\ref{alg:density}) is meant to link the evader to densely connected groups in each layer.
Here, the underlying idea is that edges between the contacts act as ``shortcuts'', preventing the shortest paths in the network from running through the evader, thus reducing her betweenness centrality.
Intuitively, the heuristic prefers to connect the evader to a contact $v$ in layers where $v$ is connected to many nodes that are already connected to the evader (the term $|\{w \in \F : (\vs^\alpha,w^\alpha) \in \Add\} \cap N^\alpha(v)|$ in the numerator), as well as layers where $v$ has many connections with other contacts (the term $|\F \cap N^\alpha(v)|$ in the numerator) to increase the chance of creating additional ``shortcuts''.
Finally, the heuristic prefers layers with fewer contacts connected to the evader (the term $|\{w \in \F : (\vs^\alpha,w^\alpha) \in \Add\}|$) to distribute the evader's connections among layers more uniformly, thereby helping her hide from local centrality measures.

\end{itemize}

\begin{algorithm}[tb!]
\caption{``All in one'' heuristic}
\label{alg:all-in-one}
\textbf{Input}:
Multilayer network $M$, the evader $\vs$, contacts $\F$ \\
\textbf{Output}: Edges to be added to the network, i.e., the set $\Add$ \\
\begin{algorithmic}[1]
\STATE $\Add \gets \emptyset$
\STATE $\F^* \gets \F$
\STATE $L^* \gets \{\alpha \in L : \vs \in V^\alpha\}$
\WHILE{$|\F^*| > 0$}
	\STATE $\alpha^* \gets \argmax_{\alpha \in L^*}|\F^* \cap V^\alpha|$
	\FOR{$v \in \F^* \cap V^\alpha$}
		\STATE $\Add = \Add \cup \{(\vs^{\alpha^*}, v^{\alpha^*})\}$
	\ENDFOR
	\STATE $\F^* = \F^* \setminus V^\alpha$
\ENDWHILE
\STATE \textbf{return} $\Add$
\end{algorithmic}
\end{algorithm}

\begin{algorithm}[tb!]
\caption{Fringe heuristic}
\label{alg:fringe}
\textbf{Input}:
Multilayer network $M$, the evader $\vs$, contacts $\F$ \\
\textbf{Output}: Edges to be added to the network, i.e., the set $\Add$  \\
\begin{algorithmic}[1]
\STATE $\Add \gets \emptyset$
\STATE $L^* \gets \{\alpha \in L : \vs \in V^\alpha\}$
\FOR{$v \in \F \cap V^\alpha$}
	\STATE $\alpha^* \gets \argmin_{\alpha \in L^*} |N^\alpha(v) \setminus \F|$
	\STATE $\Add = \Add \cup \{(\vs^{\alpha^*}, v^{\alpha^*})\}$
\ENDFOR
\STATE \textbf{return} $\Add$
\end{algorithmic}
\end{algorithm}

\begin{algorithm}[t]
\caption{Density heuristic}
\label{alg:density}
\textbf{Input}:
Multilayer network $M$, the evader $\vs$, contacts $\F$ \\
\textbf{Output}: Edges to be added to the network, i.e., the set $\Add$  \\
\begin{algorithmic}[1]
\STATE $\Add \gets \emptyset$
\STATE $L^* \gets \{\alpha \in L : \vs \in V^\alpha\}$
\FOR{$v \in \F \cap V^\alpha$}
	\STATE $\alpha^* \gets \argmax_{\alpha \in L^*}\frac{|\{w \in \F : (\vs^\alpha,w^\alpha) \in \Add\} \cap N^\alpha(v)| + |\F \cap N^\alpha(v)|}{\max(1,|\{w \in \F : (\vs^\alpha,w^\alpha) \in \Add\}|)}$
	\STATE $\Add = \Add \cup \{(\vs^{\alpha^*}, v^{\alpha^*})\}$
\ENDFOR
\STATE \textbf{return} $\Add$
\end{algorithmic}
\end{algorithm}

\begin{figure*}[t!]
\centering
\setlength\tabcolsep{1pt}
\renewcommand{\arraystretch}{0.01}
\begin{tabular}{m{.03\textwidth}m{.195\textwidth}m{.195\textwidth}m{.195\textwidth}m{.195\textwidth}m{.195\textwidth}}
& \multicolumn{1}{c}{Global Closeness}
& \multicolumn{1}{c}{Global Betweenness}
& \multicolumn{1}{c}{Local Degree}
& \multicolumn{1}{c}{Local Closeness}
& \multicolumn{1}{c}{Local Betweenness}\\
\rotatebox{90}{ER$^3(2000,10)$} &
\includegraphics[width=\linewidth,height=2.3cm]{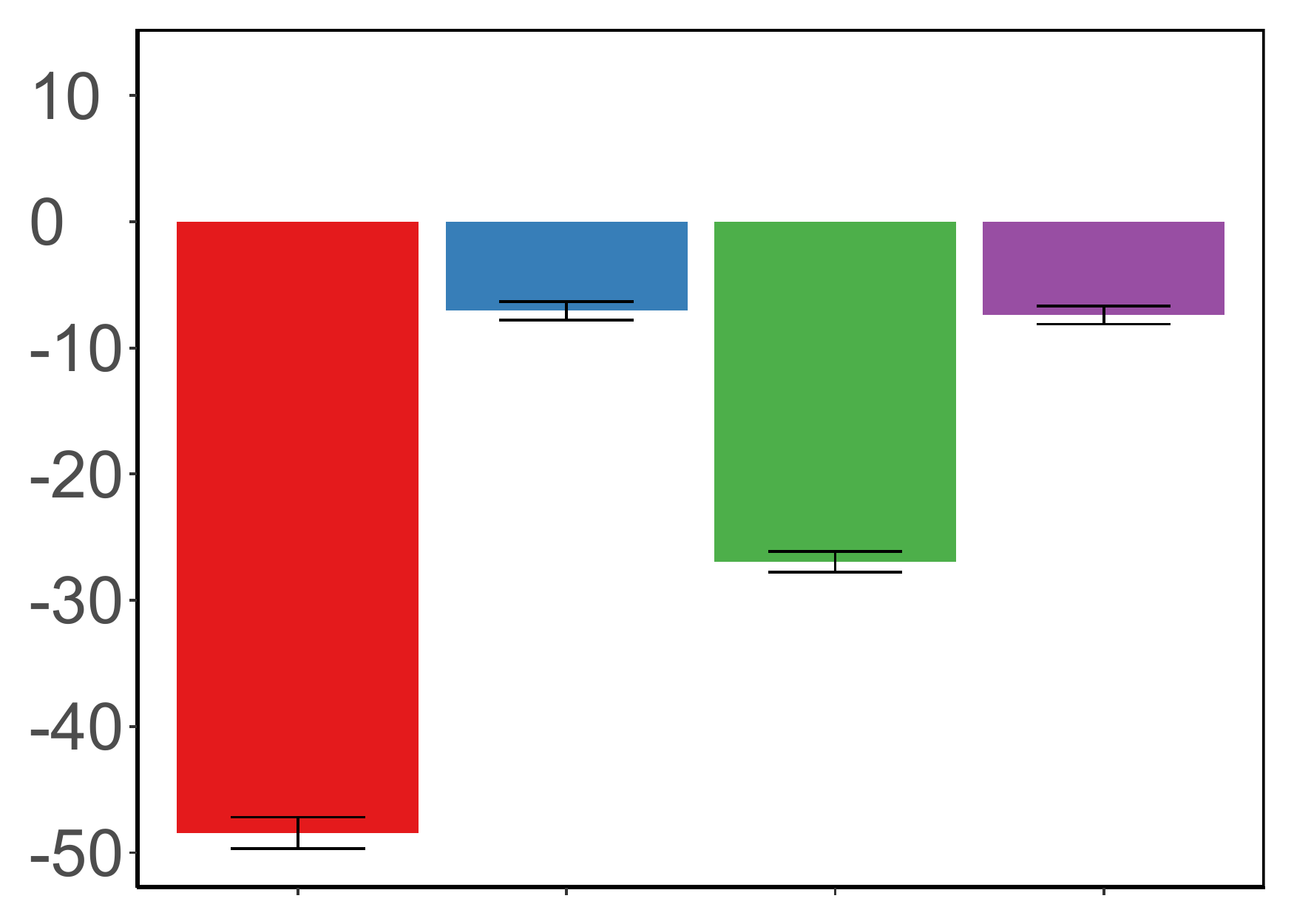} &
\includegraphics[width=\linewidth,height=2.3cm]{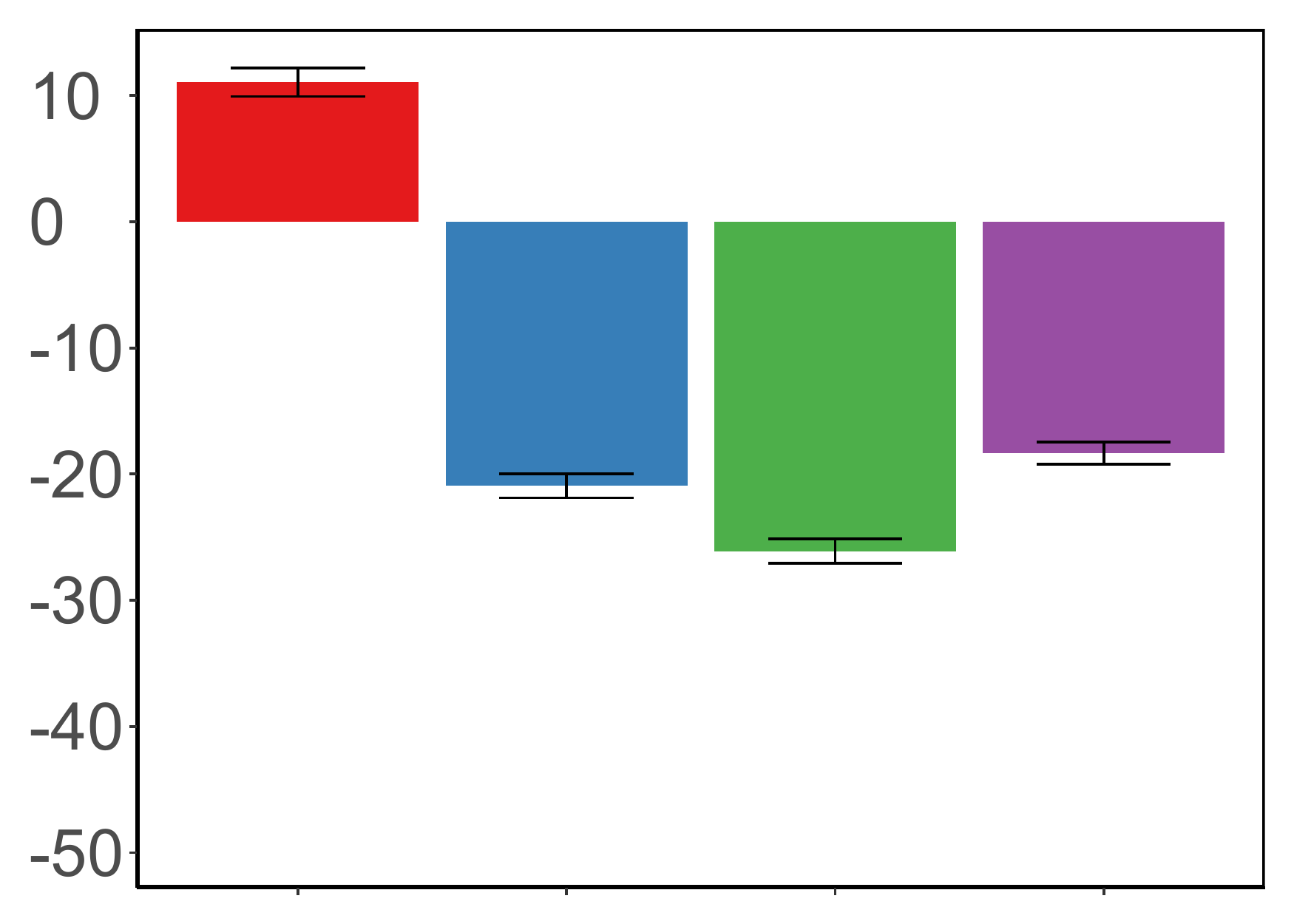} &
\includegraphics[width=\linewidth,height=2.3cm]{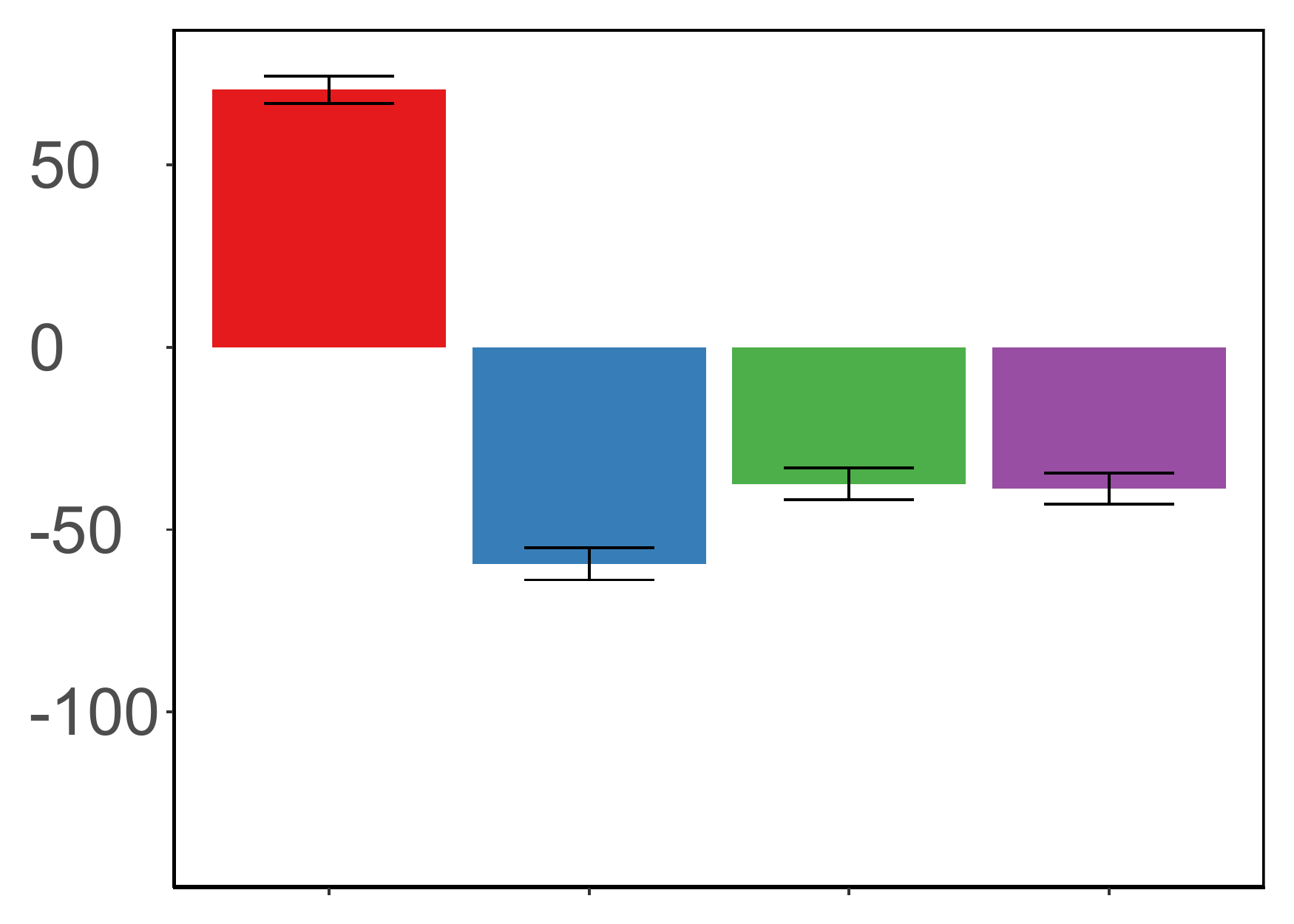} &
\includegraphics[width=\linewidth,height=2.3cm]{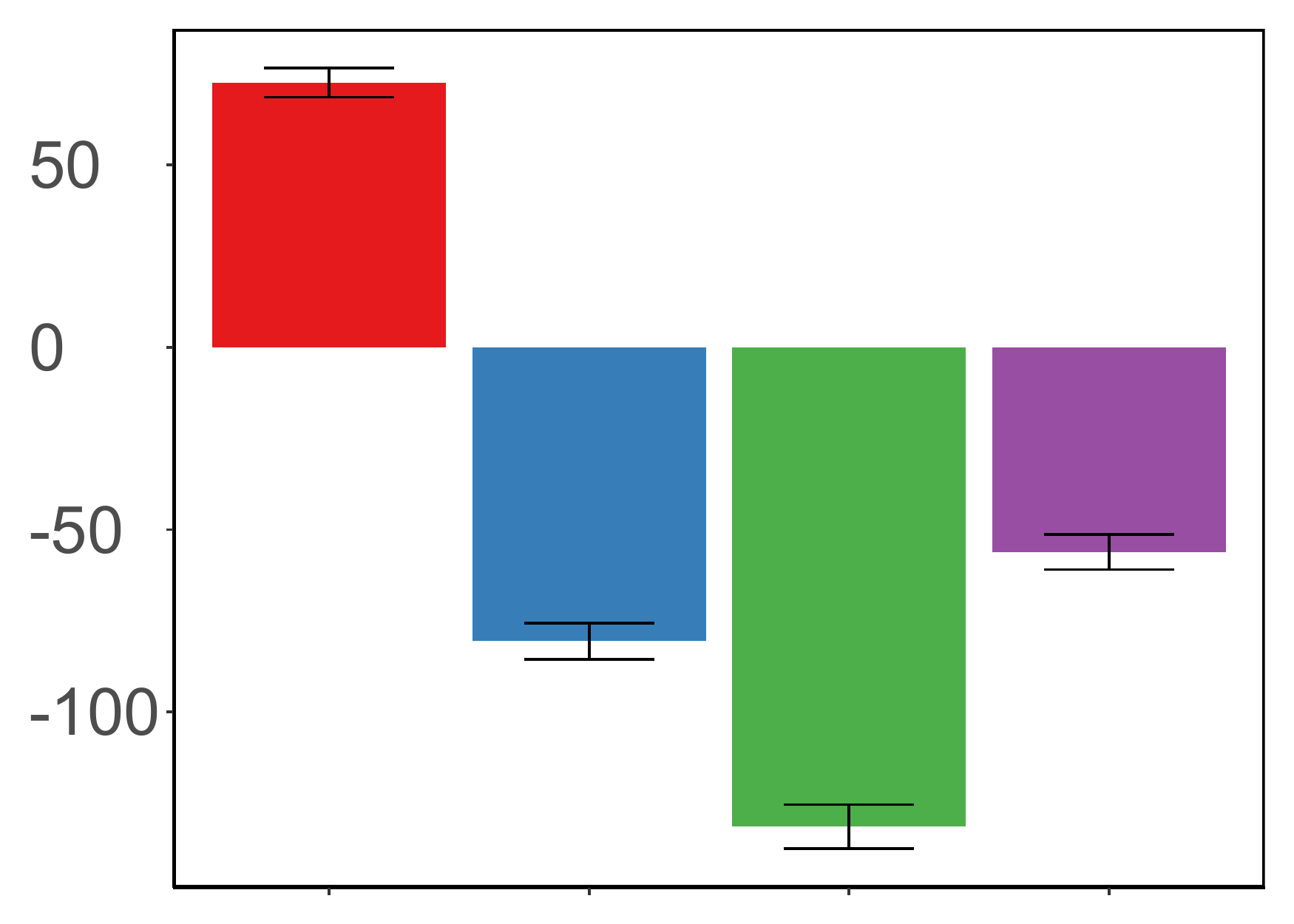} &
\includegraphics[width=\linewidth,height=2.3cm]{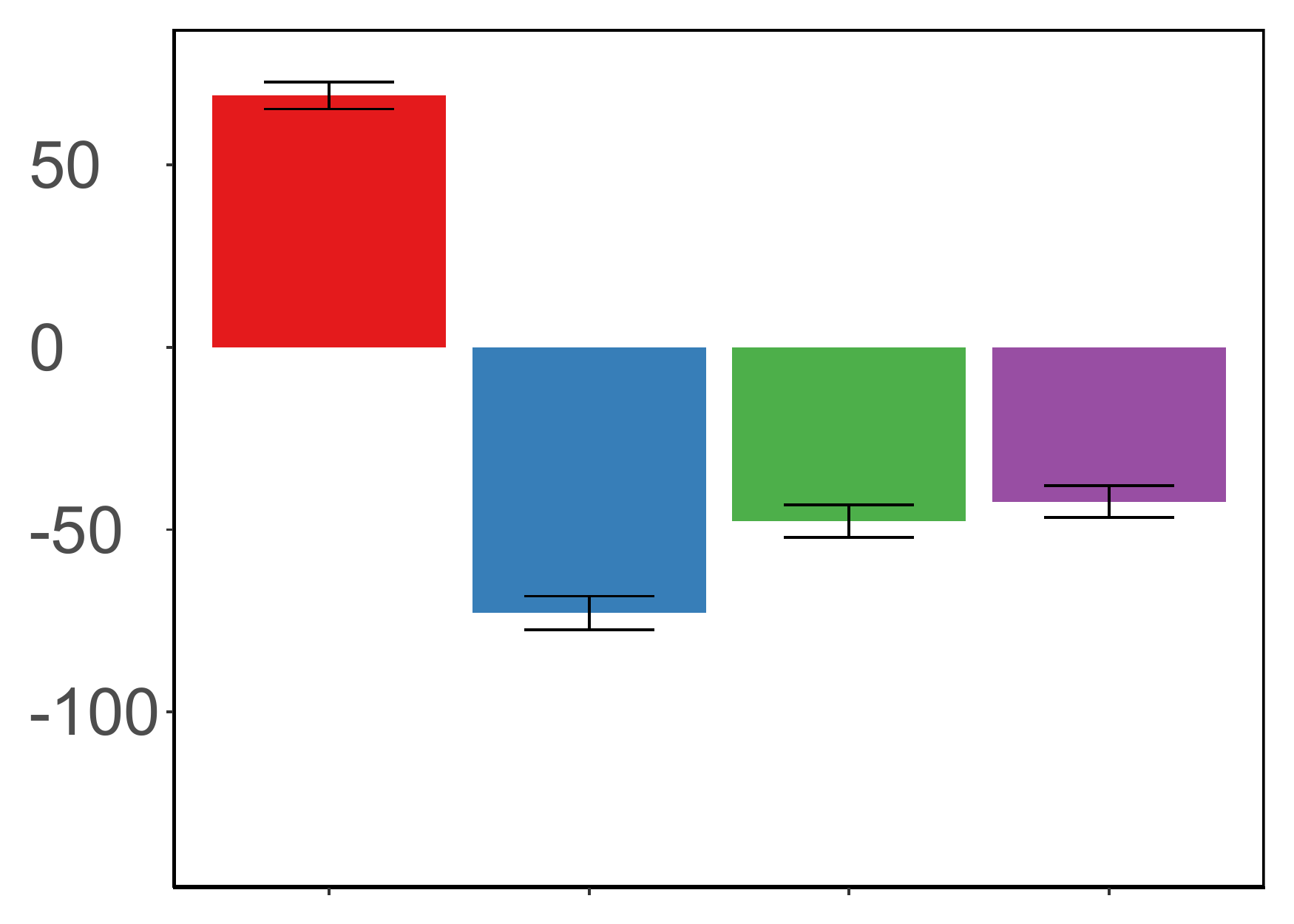} \\
\rotatebox{90}{WS$^3(2000,10)$} &
\includegraphics[width=\linewidth,height=2.3cm]{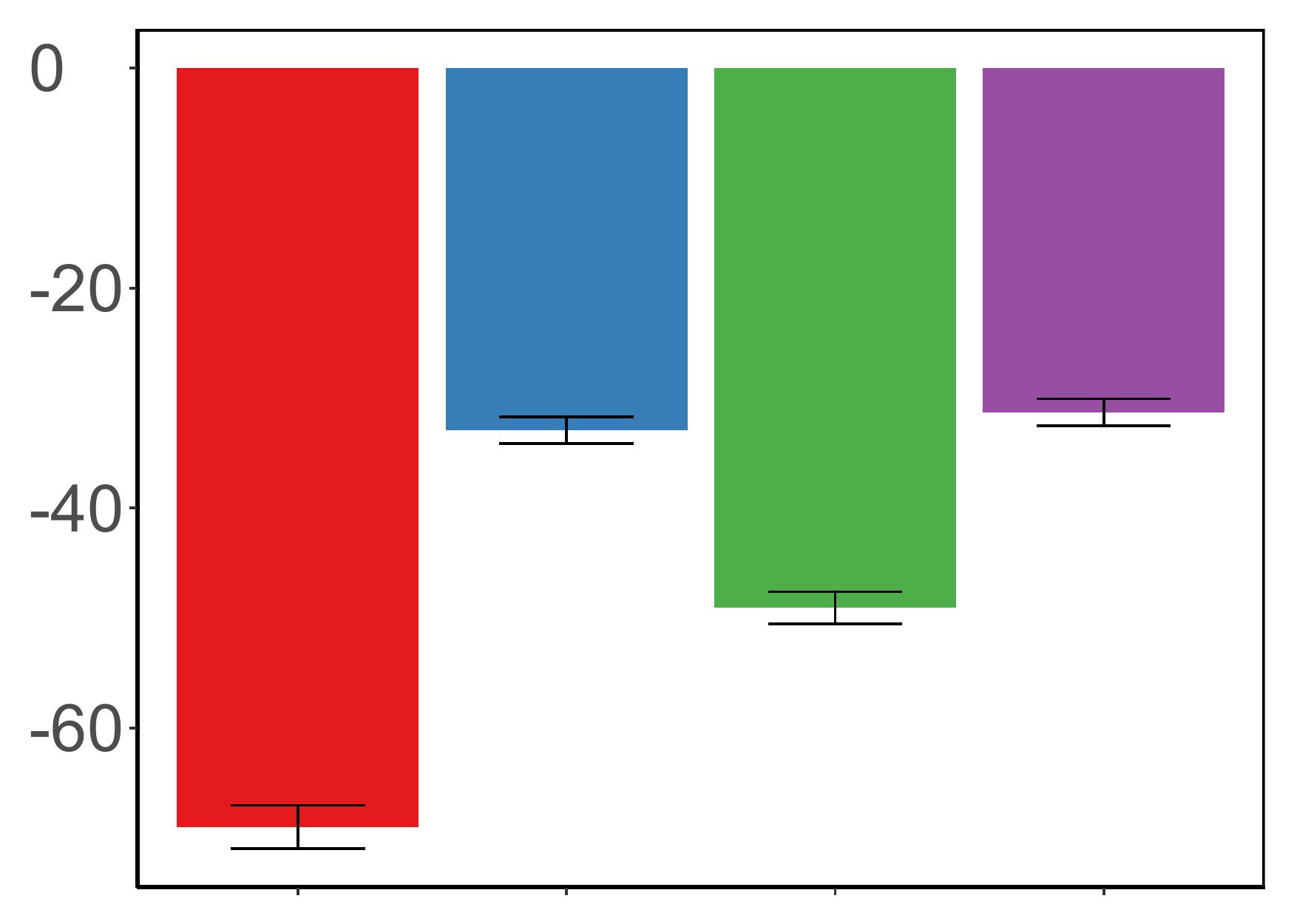} &
\includegraphics[width=\linewidth,height=2.3cm]{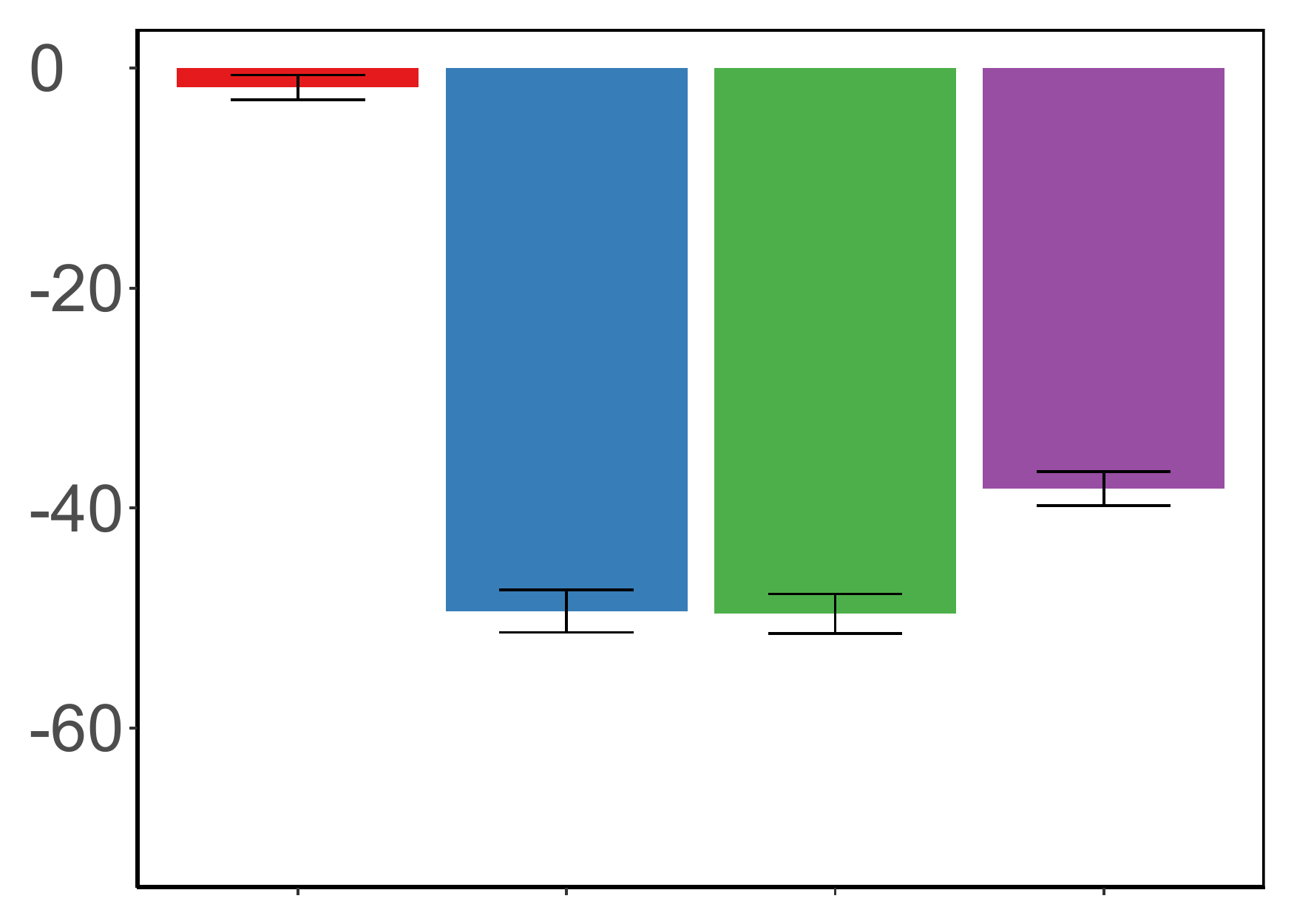} &
\includegraphics[width=\linewidth,height=2.3cm]{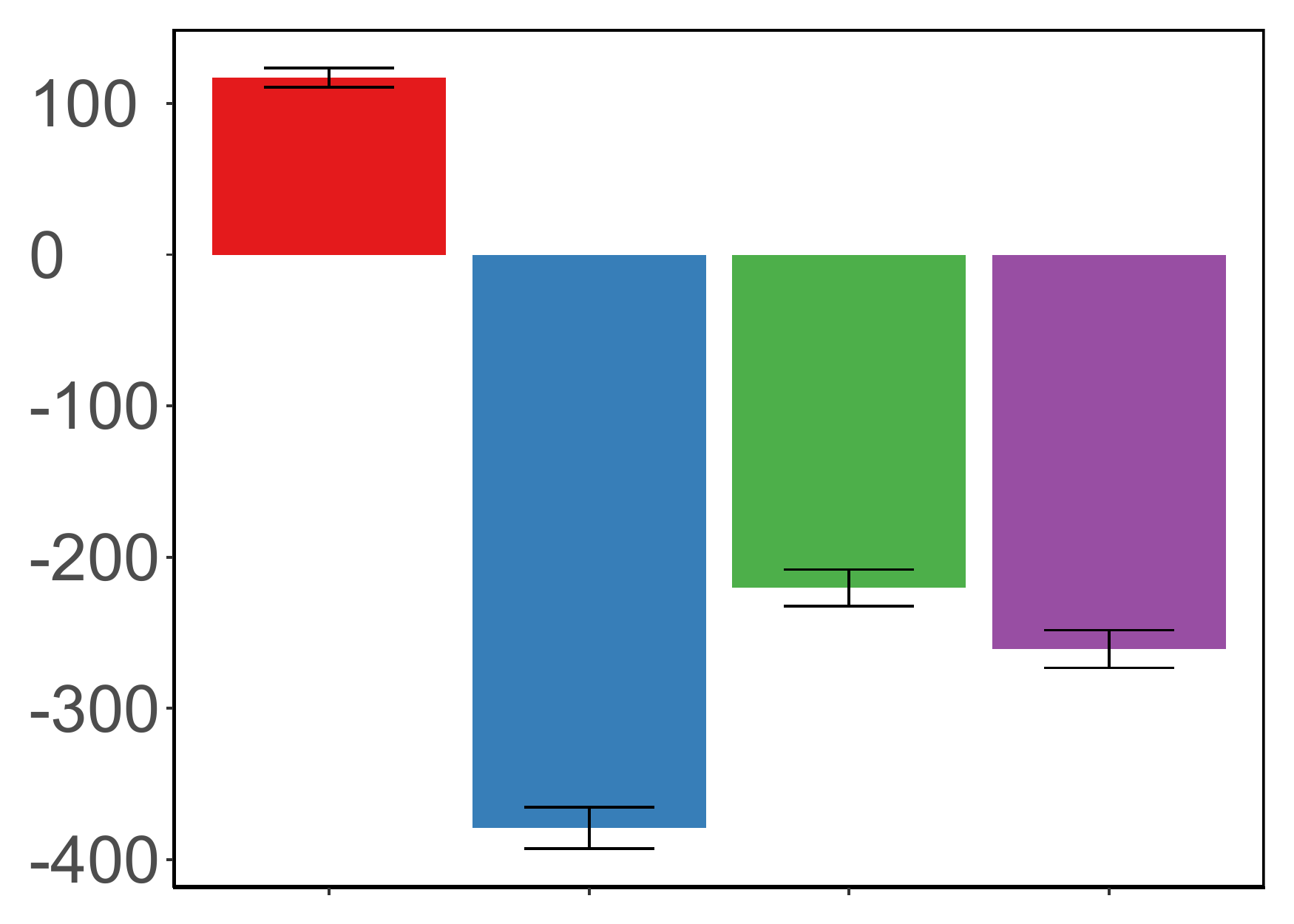} &
\includegraphics[width=\linewidth,height=2.3cm]{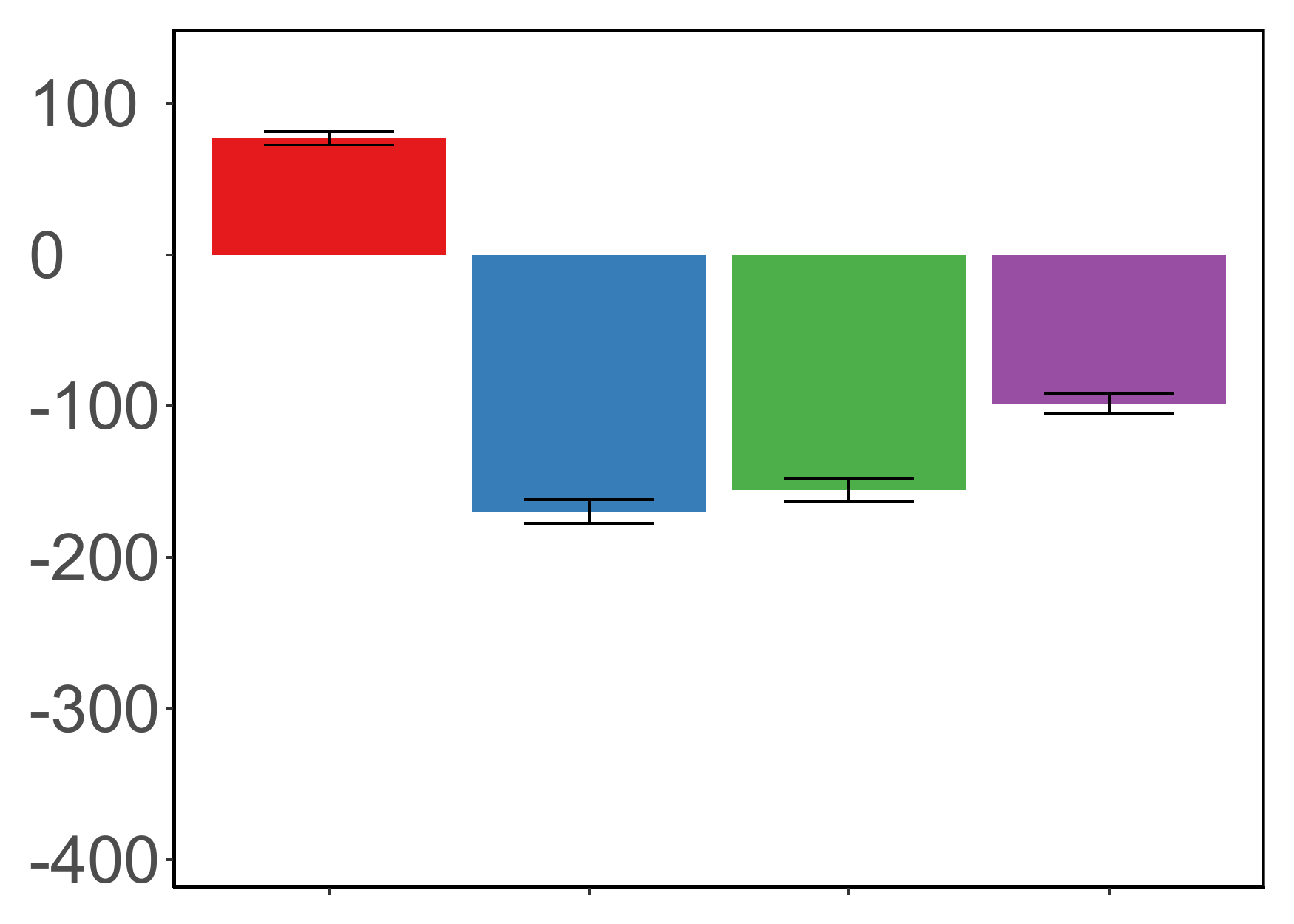} &
\includegraphics[width=\linewidth,height=2.3cm]{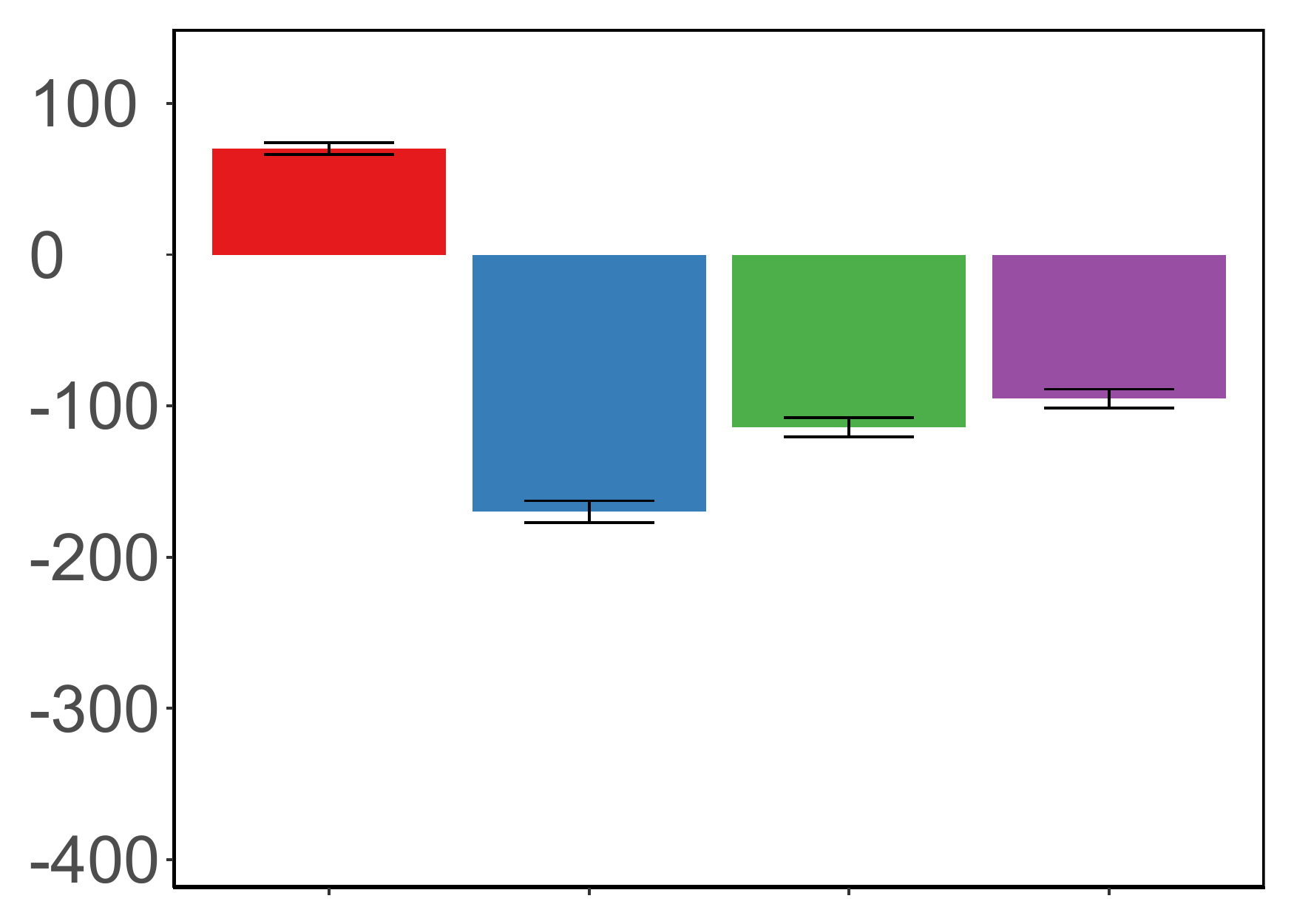} \\
\rotatebox{90}{BA$^3(2000,5)$} &
\includegraphics[width=\linewidth,height=2.3cm]{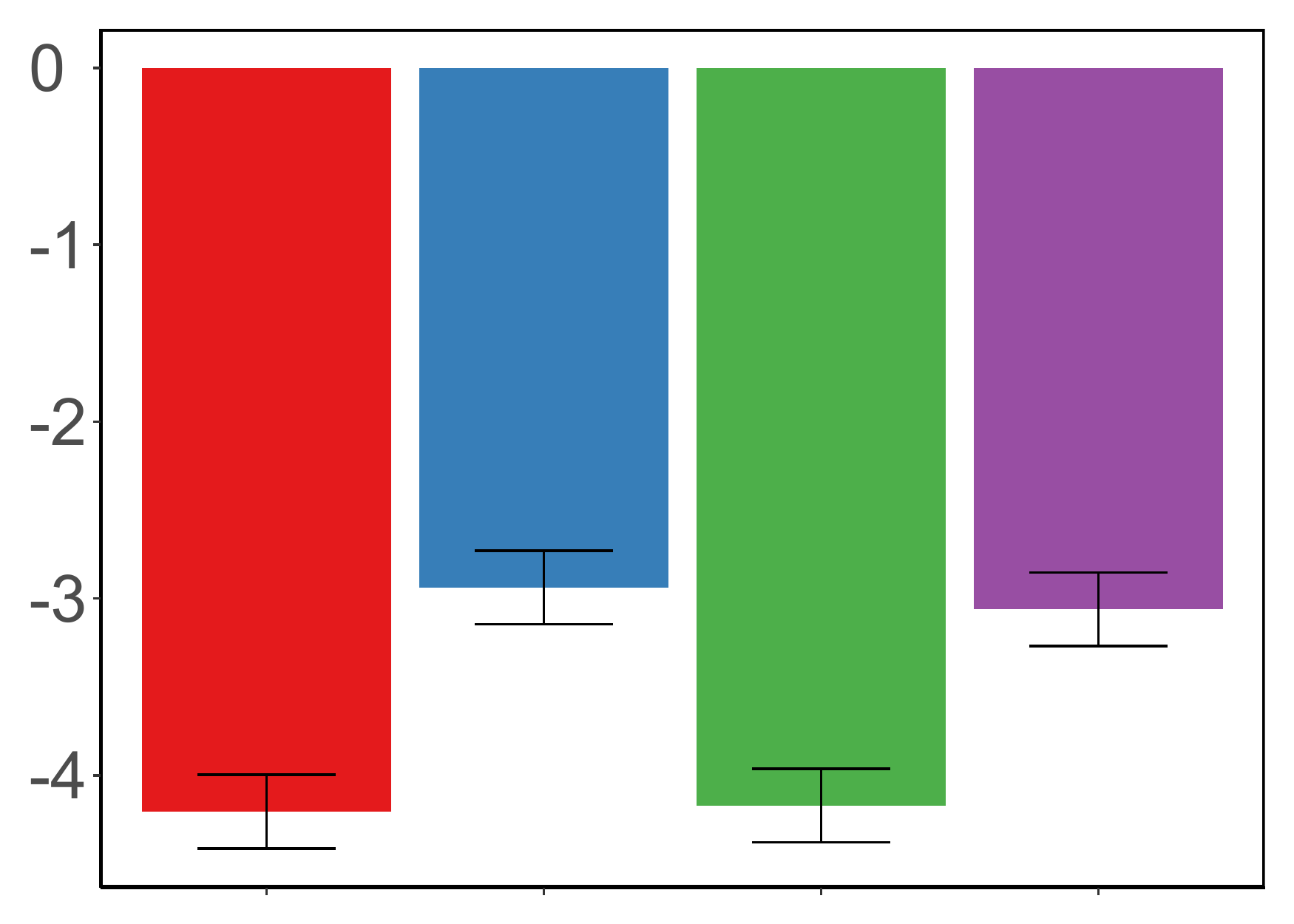} &
\includegraphics[width=\linewidth,height=2.3cm]{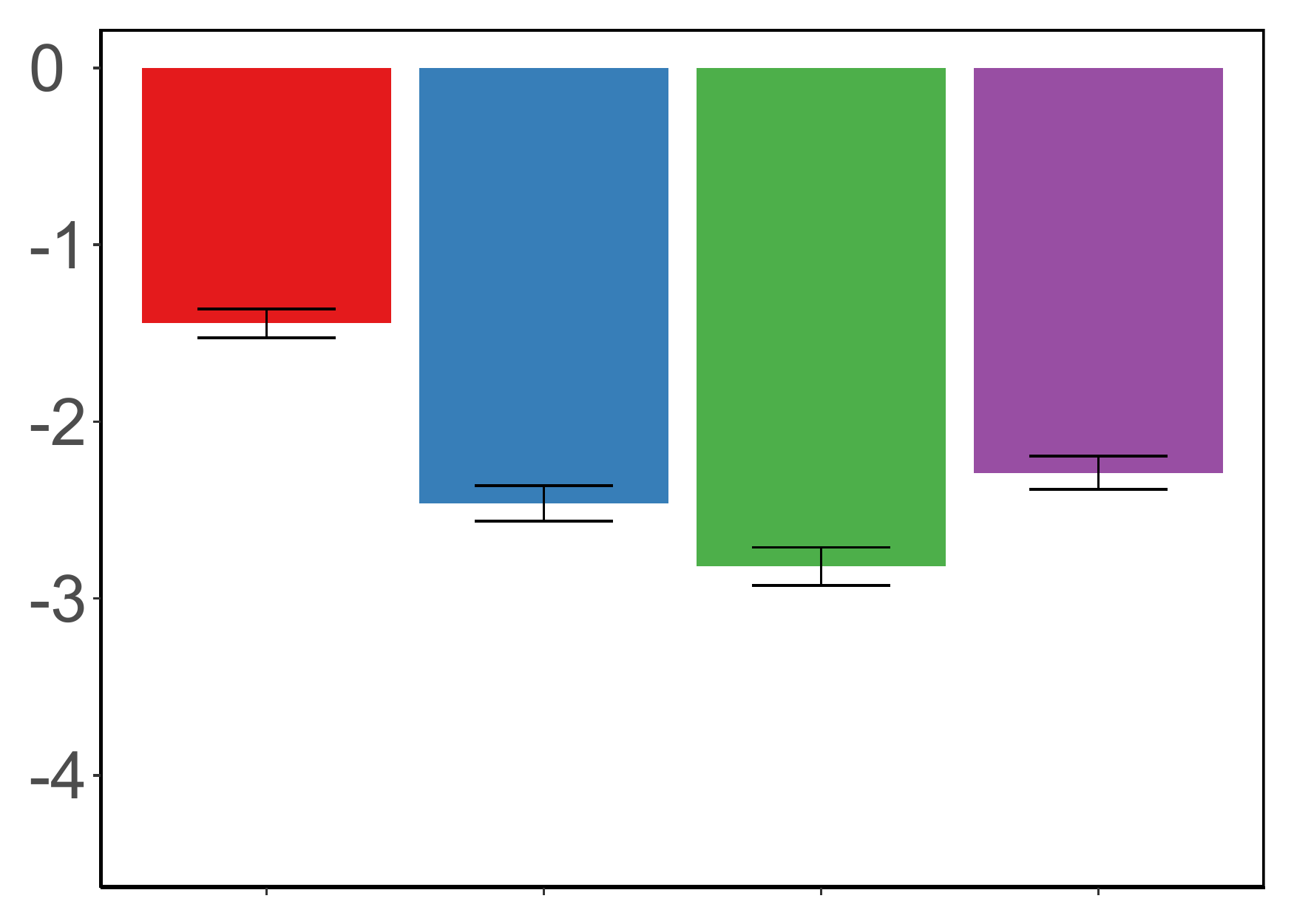} &
\includegraphics[width=\linewidth,height=2.3cm]{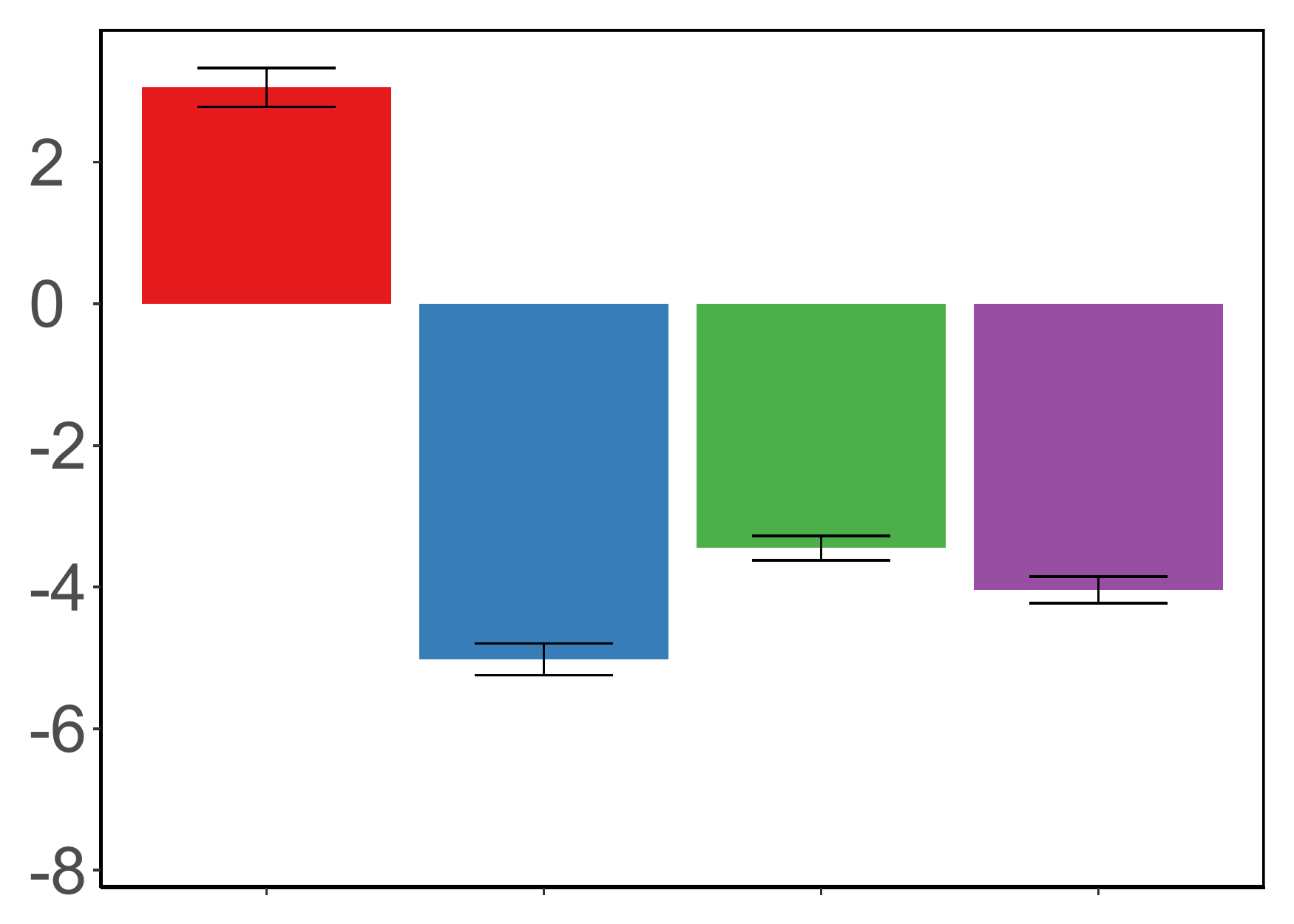} &
\includegraphics[width=\linewidth,height=2.3cm]{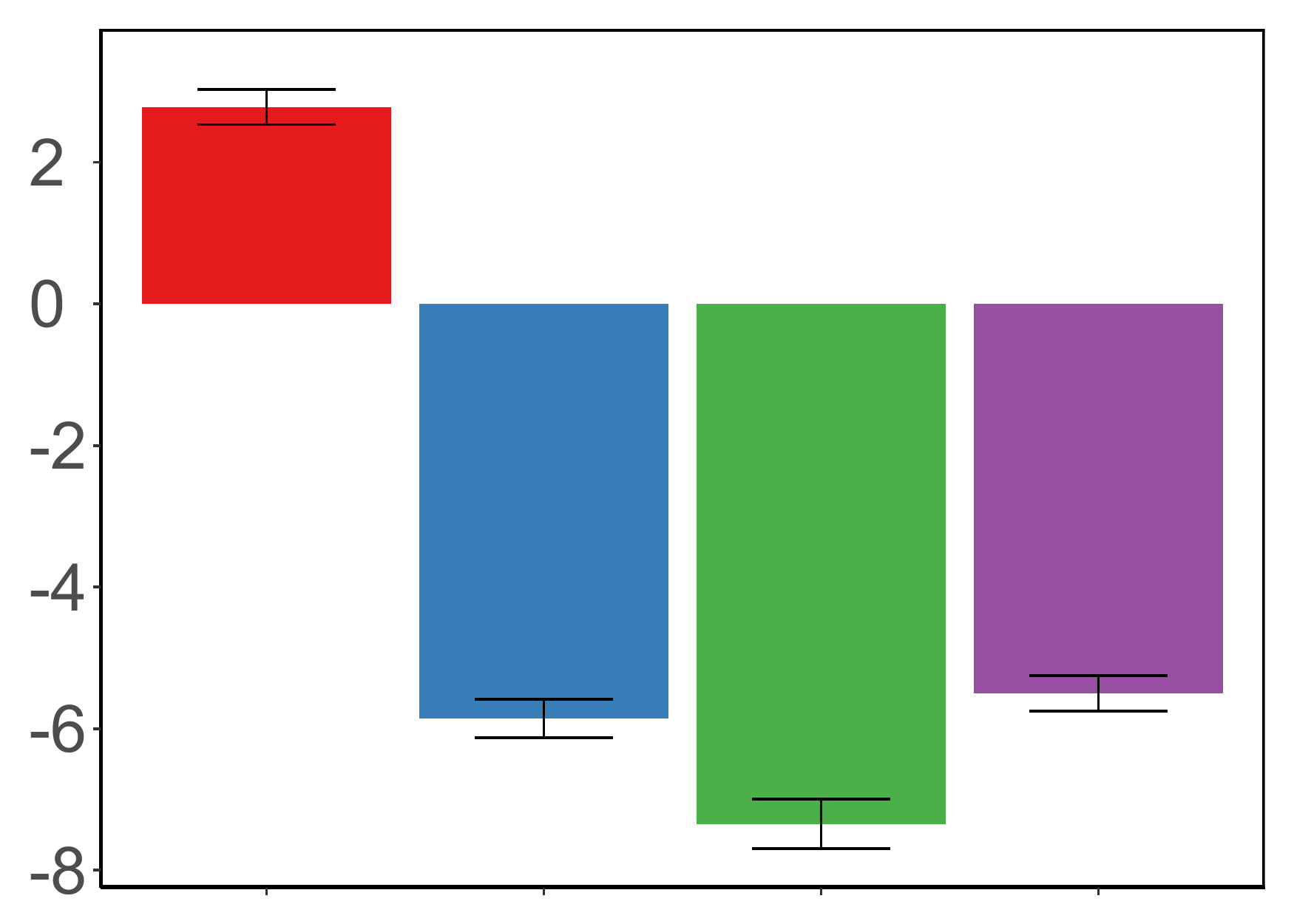} &
\includegraphics[width=\linewidth,height=2.3cm]{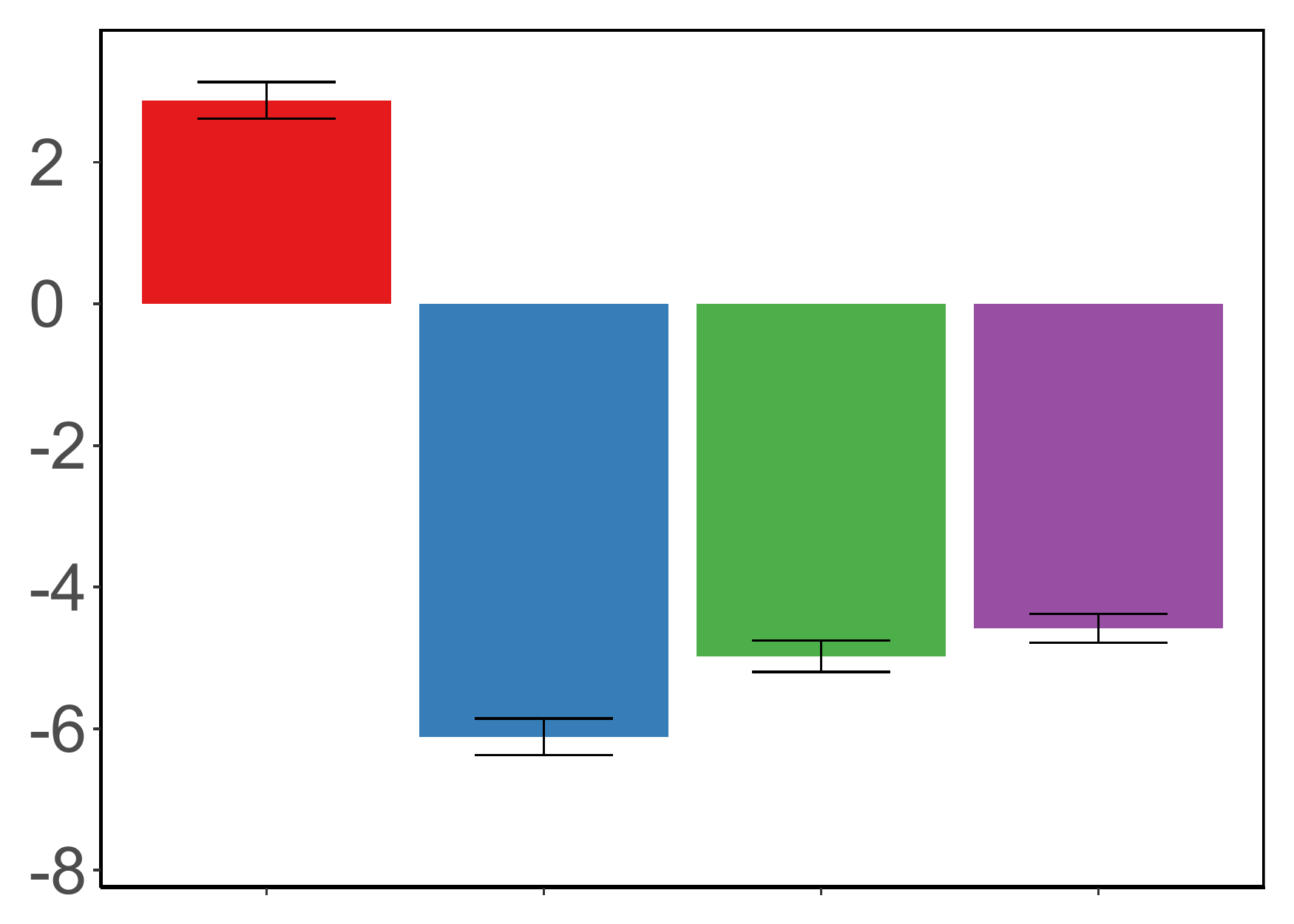} \\
\multicolumn{6}{c}{\includegraphics[width=.3\linewidth]{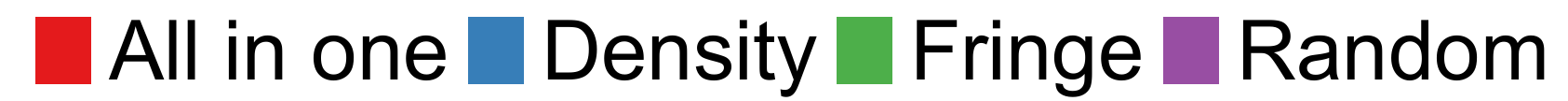}}
\end{tabular}
\caption{
Given different centrality measures and different networks with $2000$ nodes and $3$ layers (ER---Erd{\H{o}}s-R{\'e}nyi, WS---Watts-Strogatz, BA---Barab{\'a}si-Albert), the figure depicts the average change in centrality ranking of $10$ different evaders as a result of execution of hiding heuristics.
The experiment is repeated $100$ times, with a new network generated each time.
Error bars represent $95\%$ confidence intervals.
}
\label{fig:simulations-random}
\end{figure*}

\begin{figure*}[ht!]
\centering
\setlength\tabcolsep{1pt}
\renewcommand{\arraystretch}{0.01}
\begin{tabular}{m{.05\textwidth}m{.195\textwidth}m{.195\textwidth}m{.195\textwidth}m{.195\textwidth}m{.195\textwidth}}
& \multicolumn{1}{c}{Global Closeness}
& \multicolumn{1}{c}{Global Betweenness}
& \multicolumn{1}{c}{Local Degree}
& \multicolumn{1}{c}{Local Closeness}
& \multicolumn{1}{c}{Local Betweenness}\\
\rotatebox{90}{FF-TW-YT} &
\includegraphics[width=\linewidth]{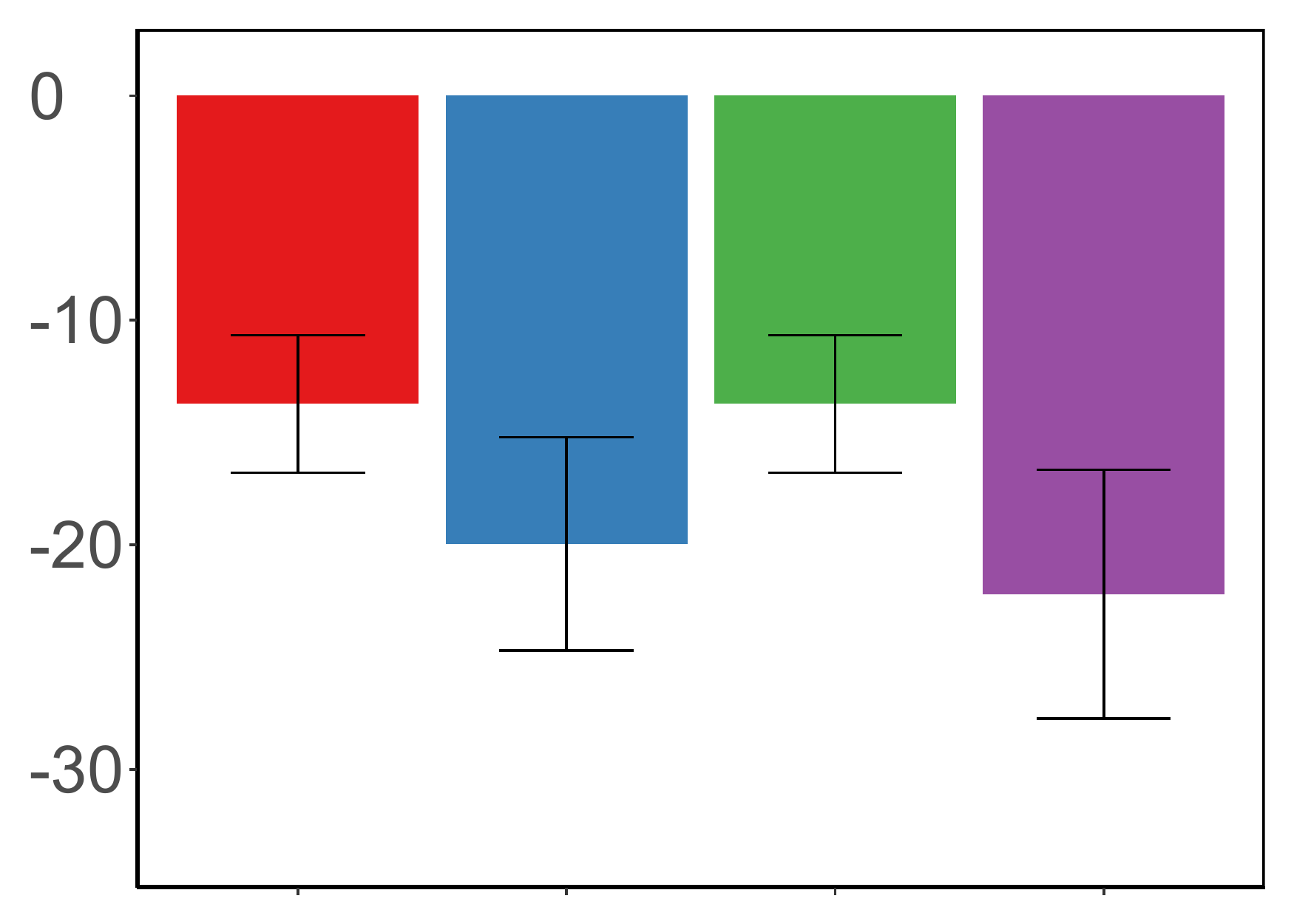} &
\includegraphics[width=\linewidth]{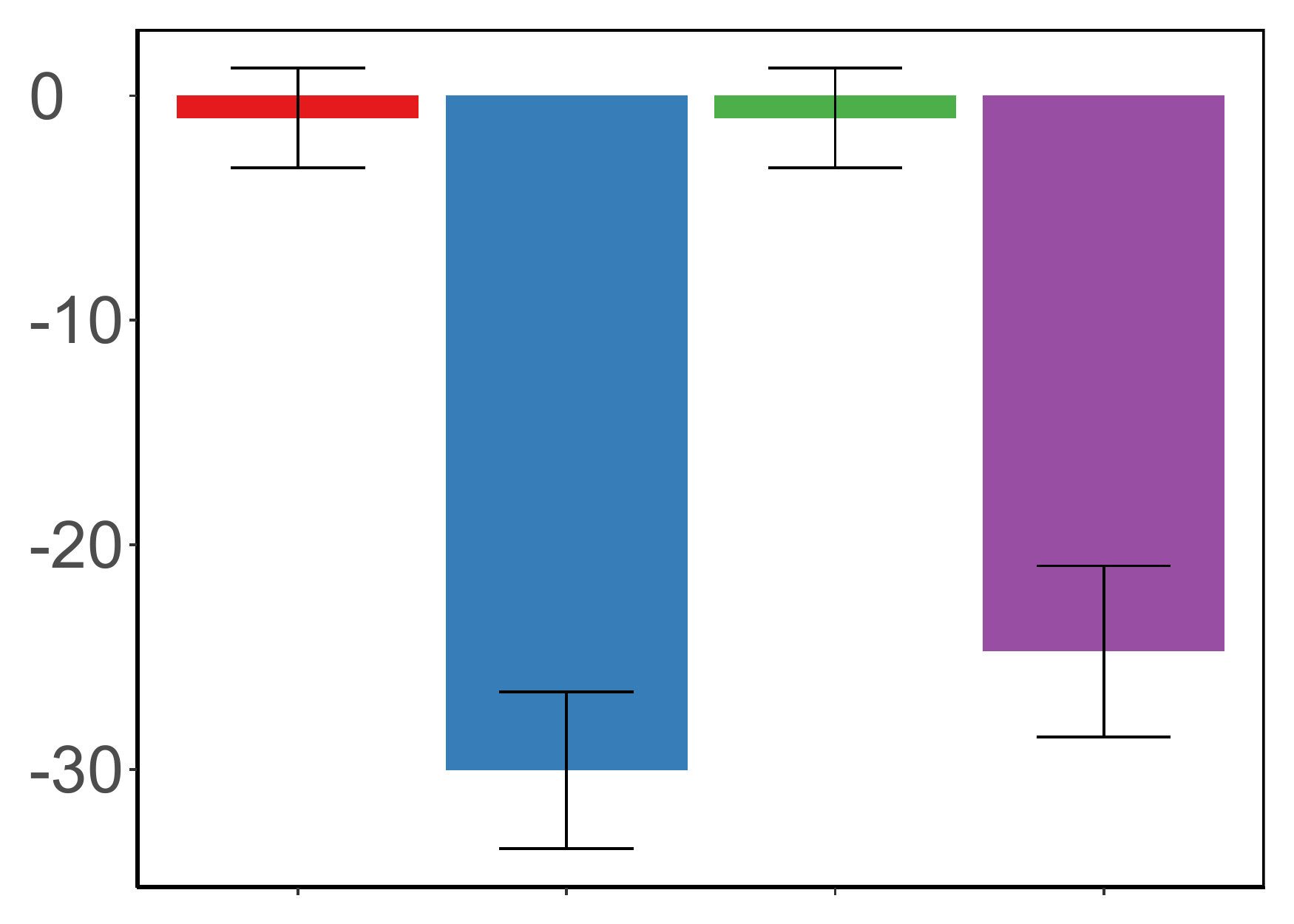} &
\includegraphics[width=\linewidth]{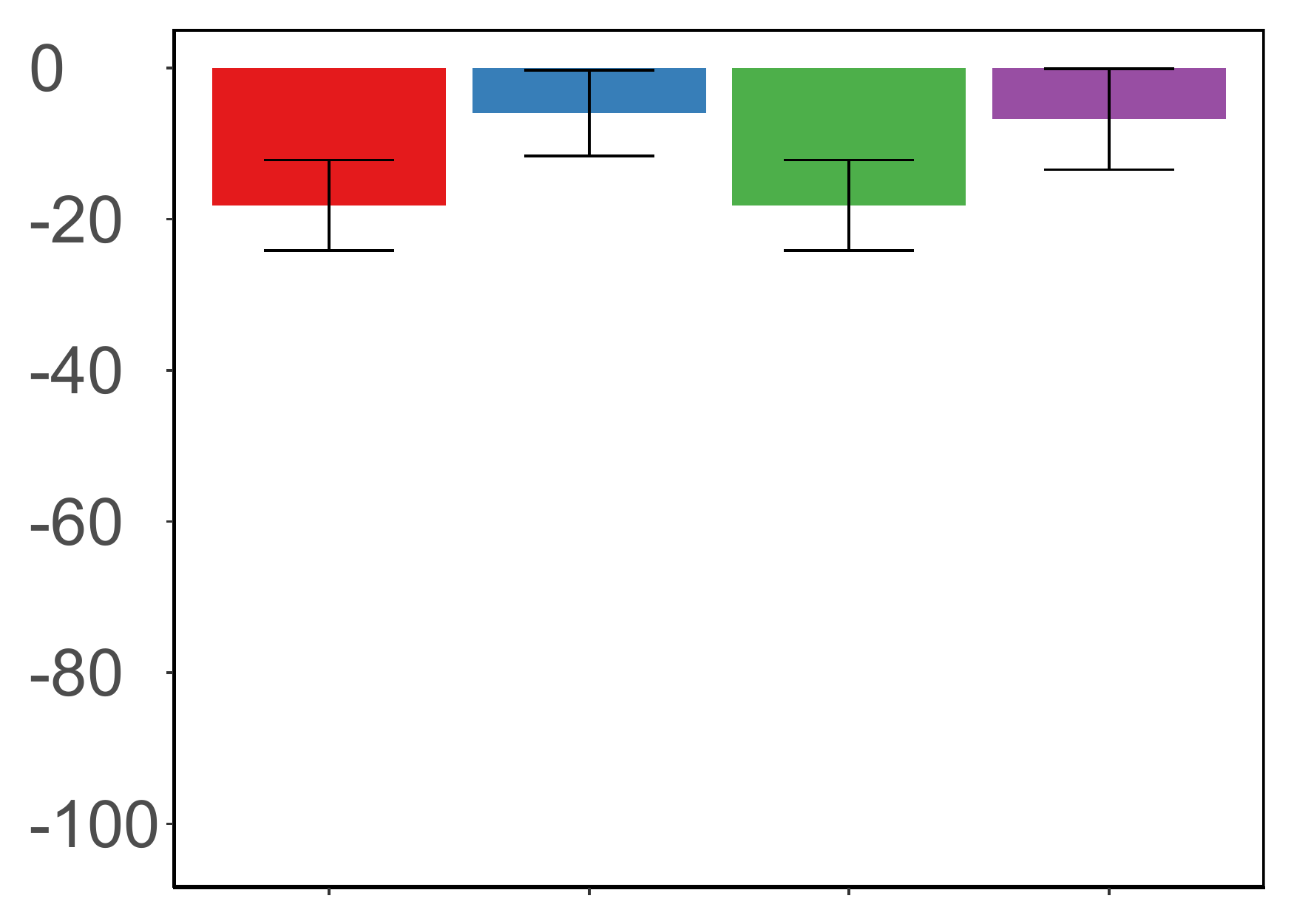} &
\includegraphics[width=\linewidth]{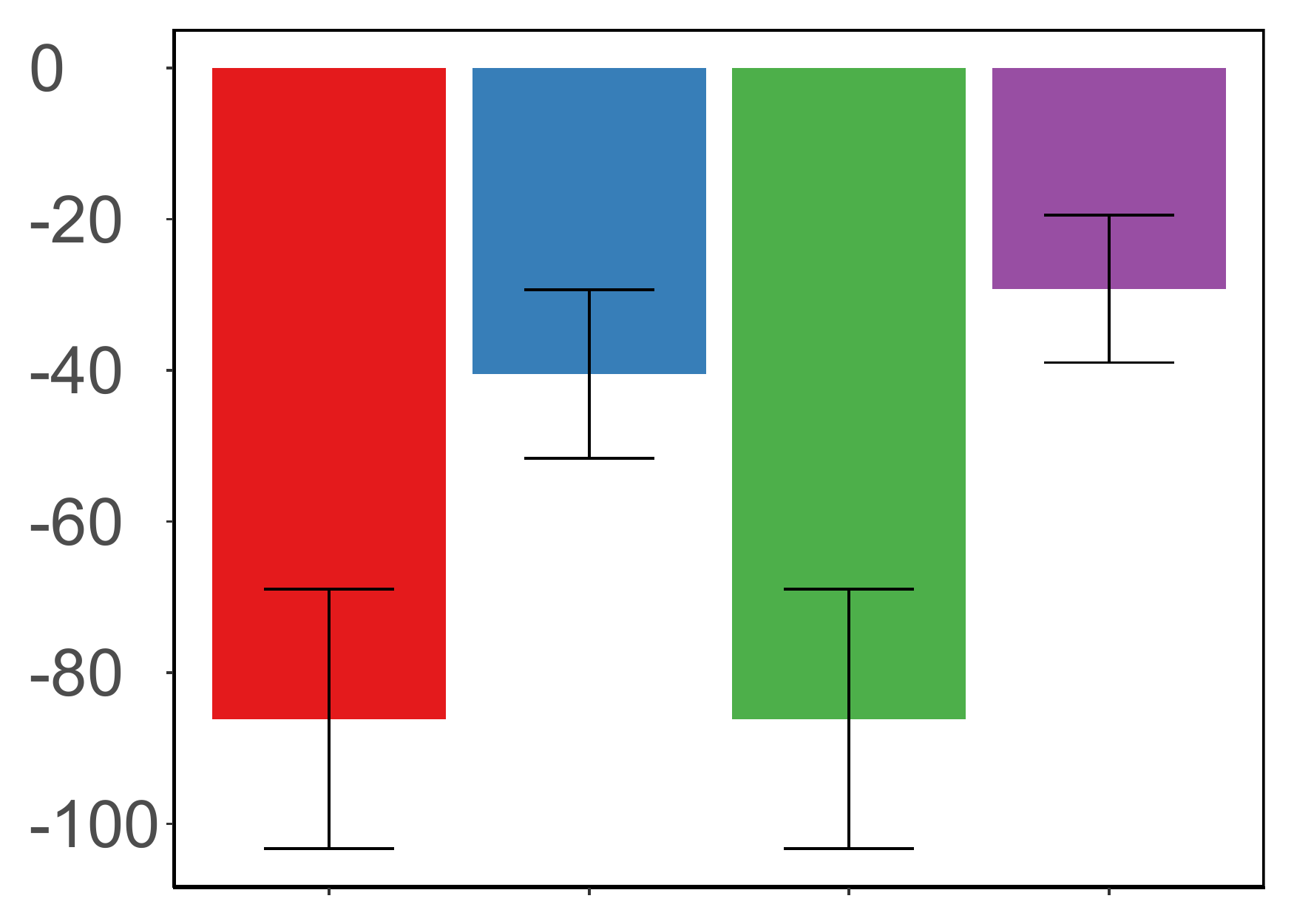} &
\includegraphics[width=\linewidth]{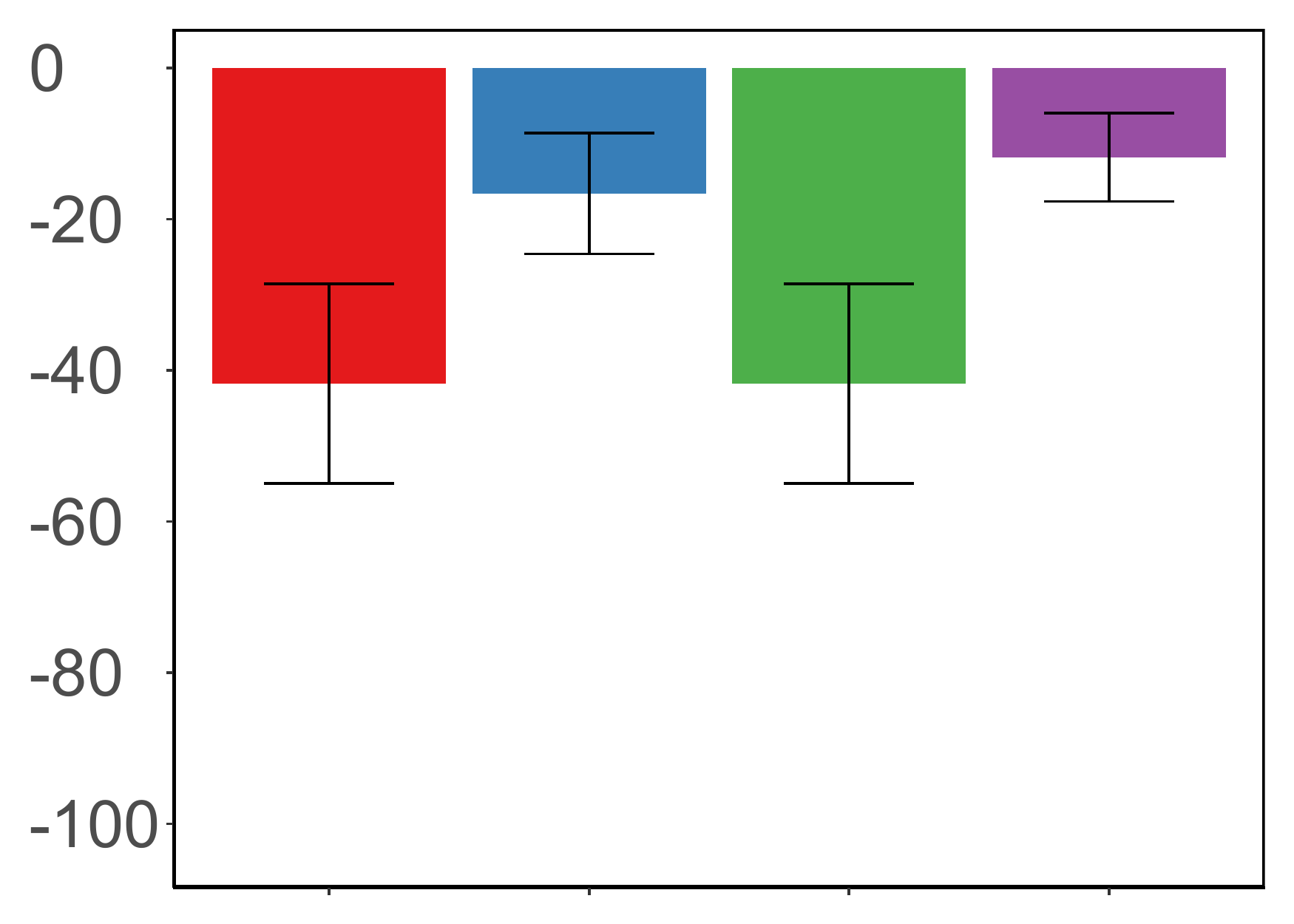} \\
\rotatebox{90}{Provisional IRA} &
\includegraphics[width=\linewidth]{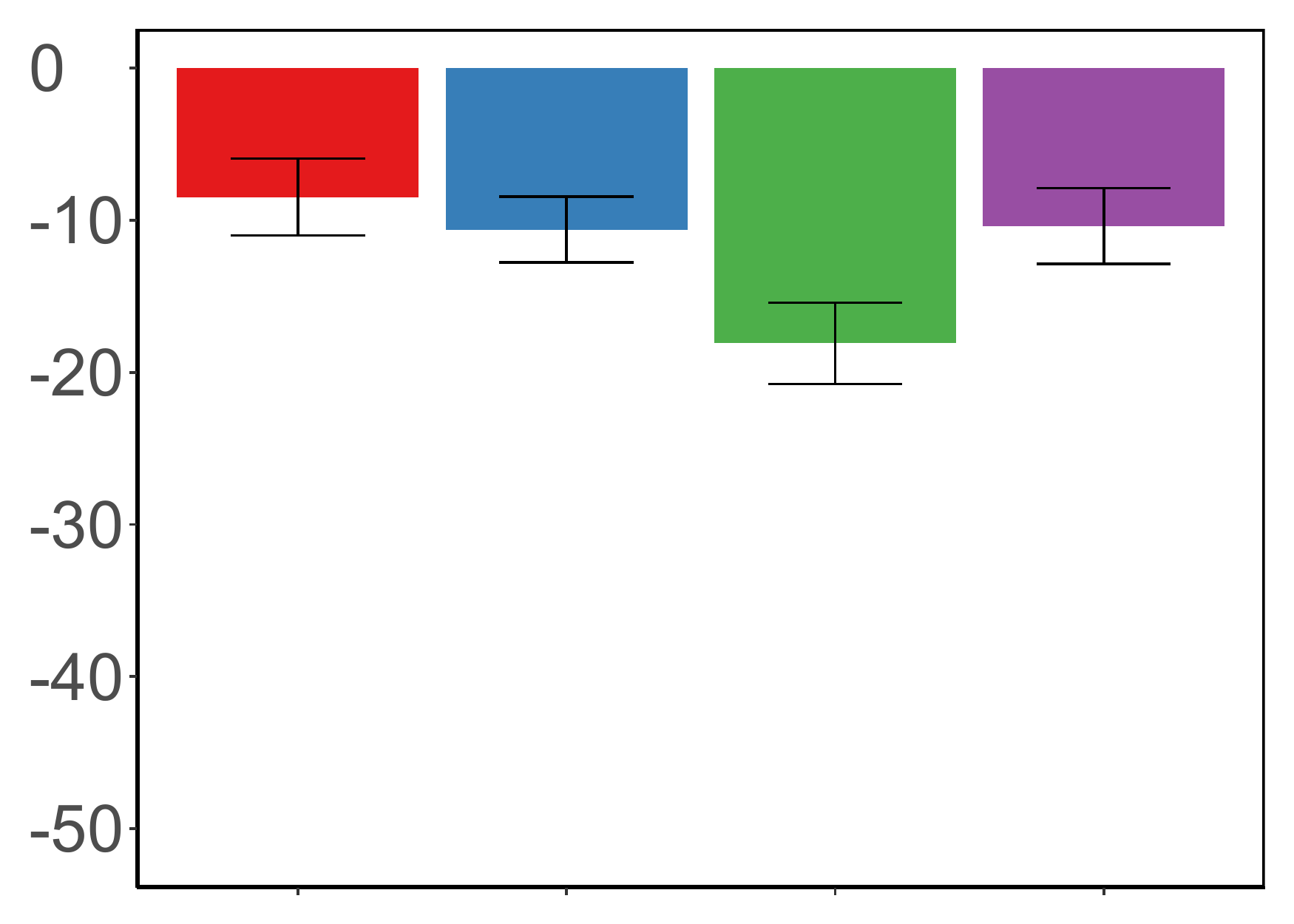} &
\includegraphics[width=\linewidth]{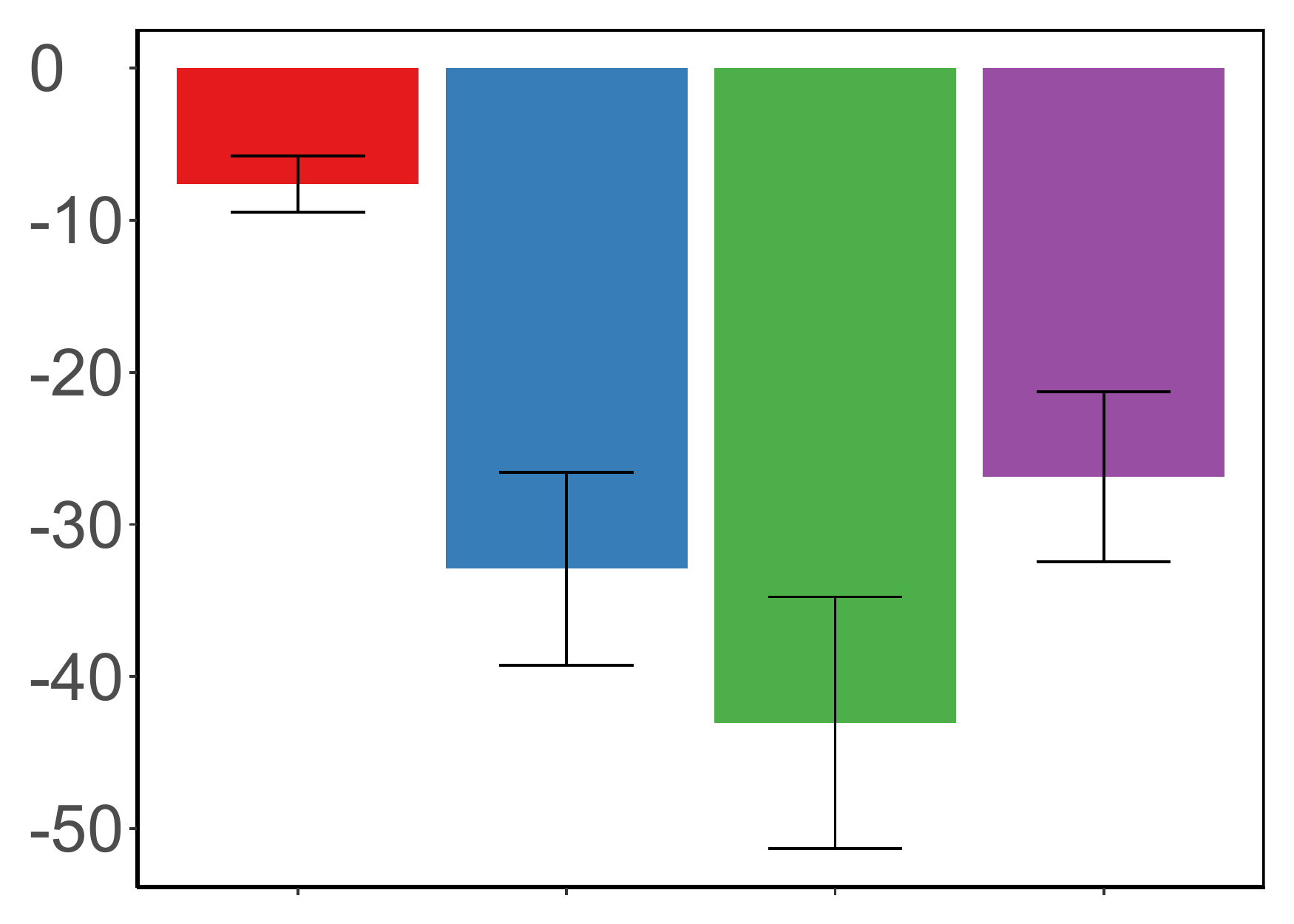} &
\includegraphics[width=\linewidth]{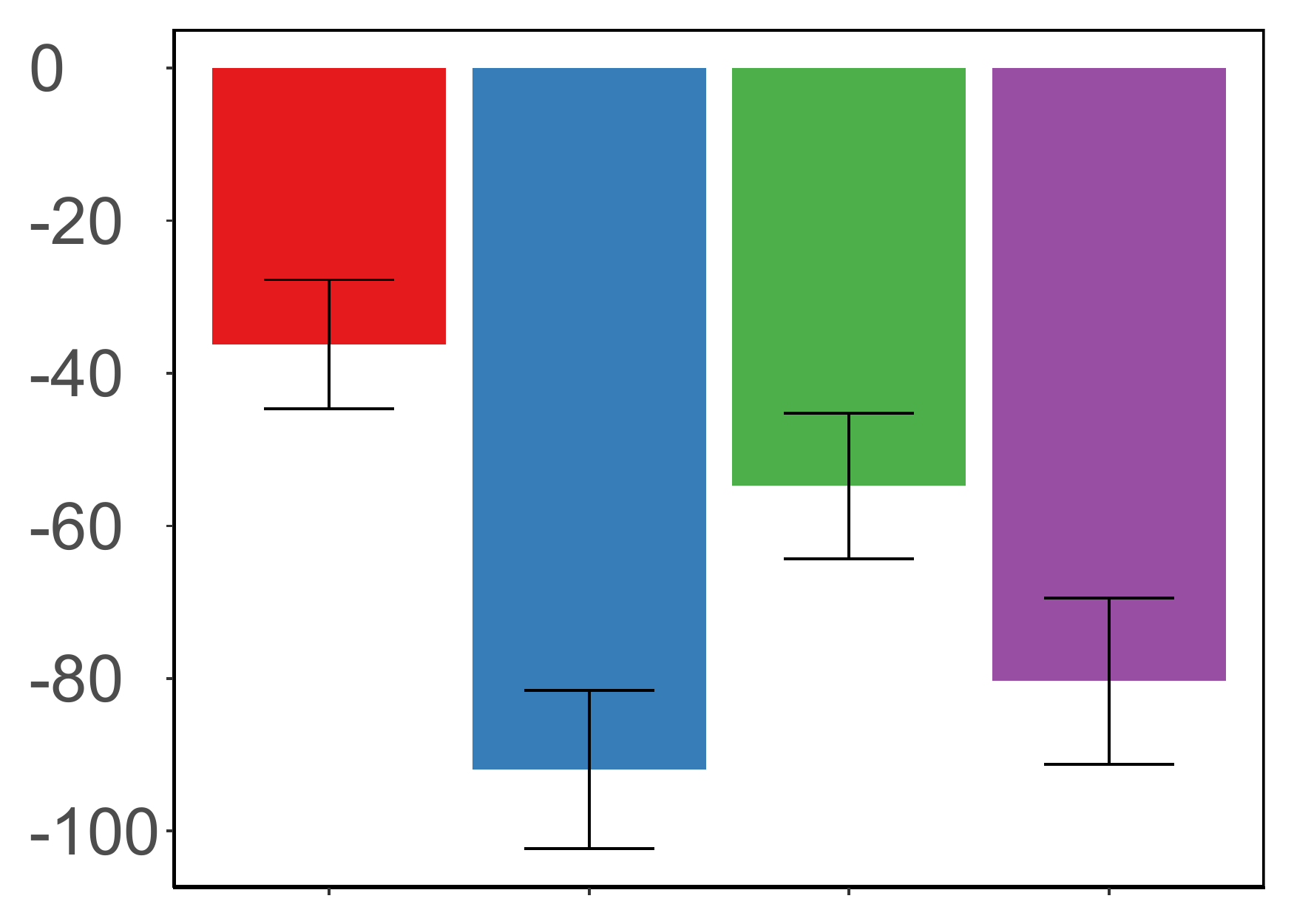} &
\includegraphics[width=\linewidth]{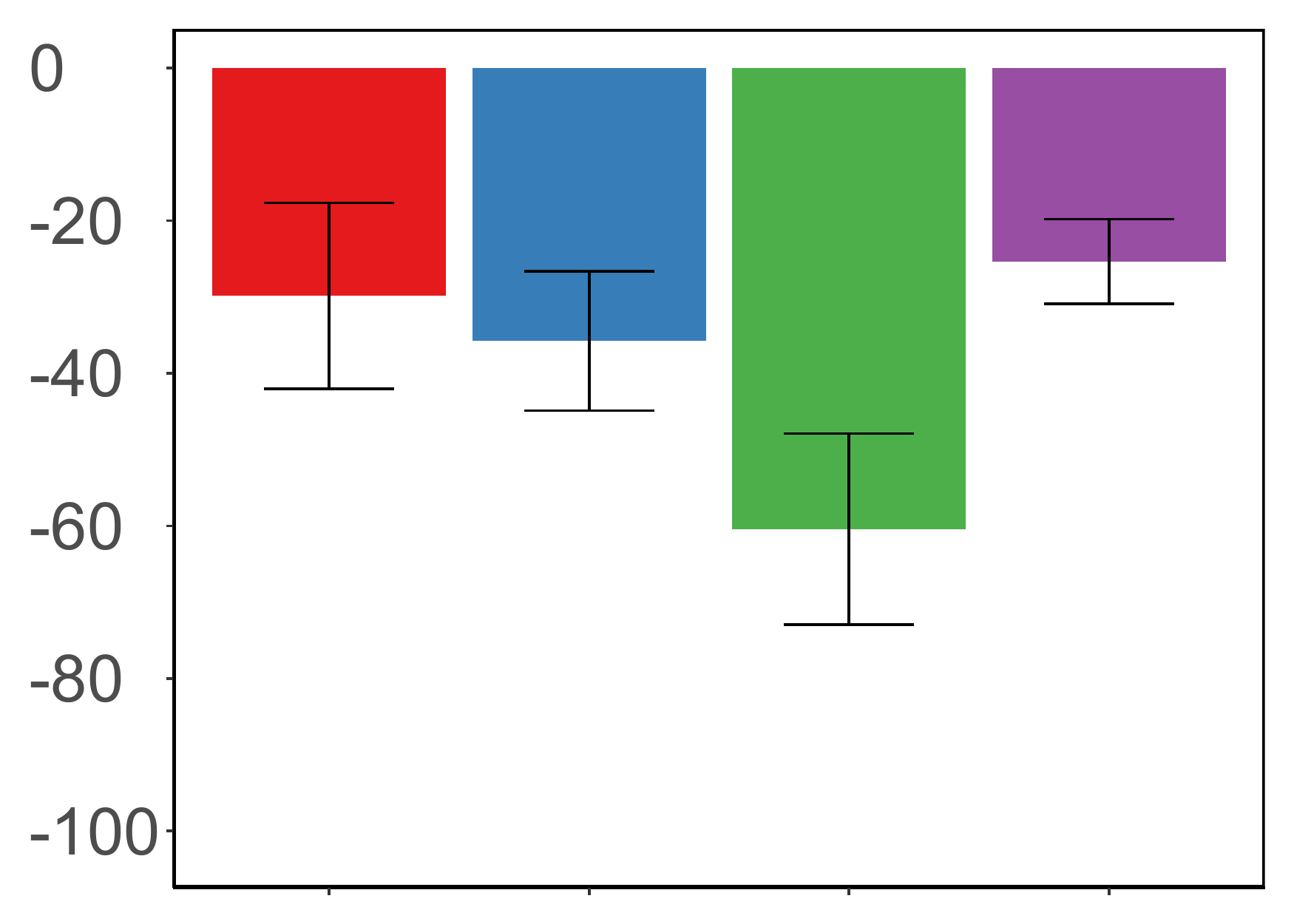} &
\includegraphics[width=\linewidth]{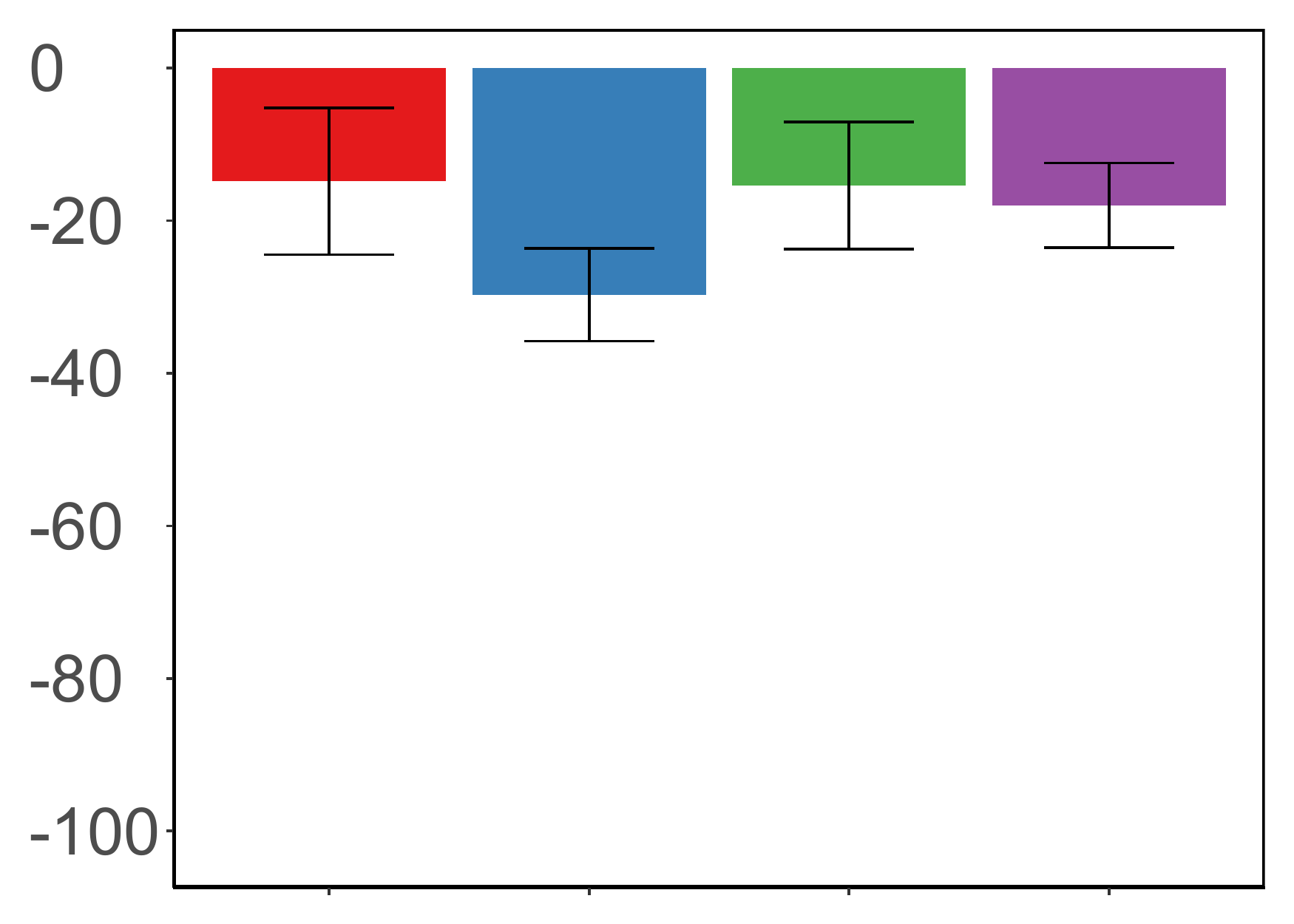} \\
\rotatebox{90}{Law firm} &
\includegraphics[width=\linewidth]{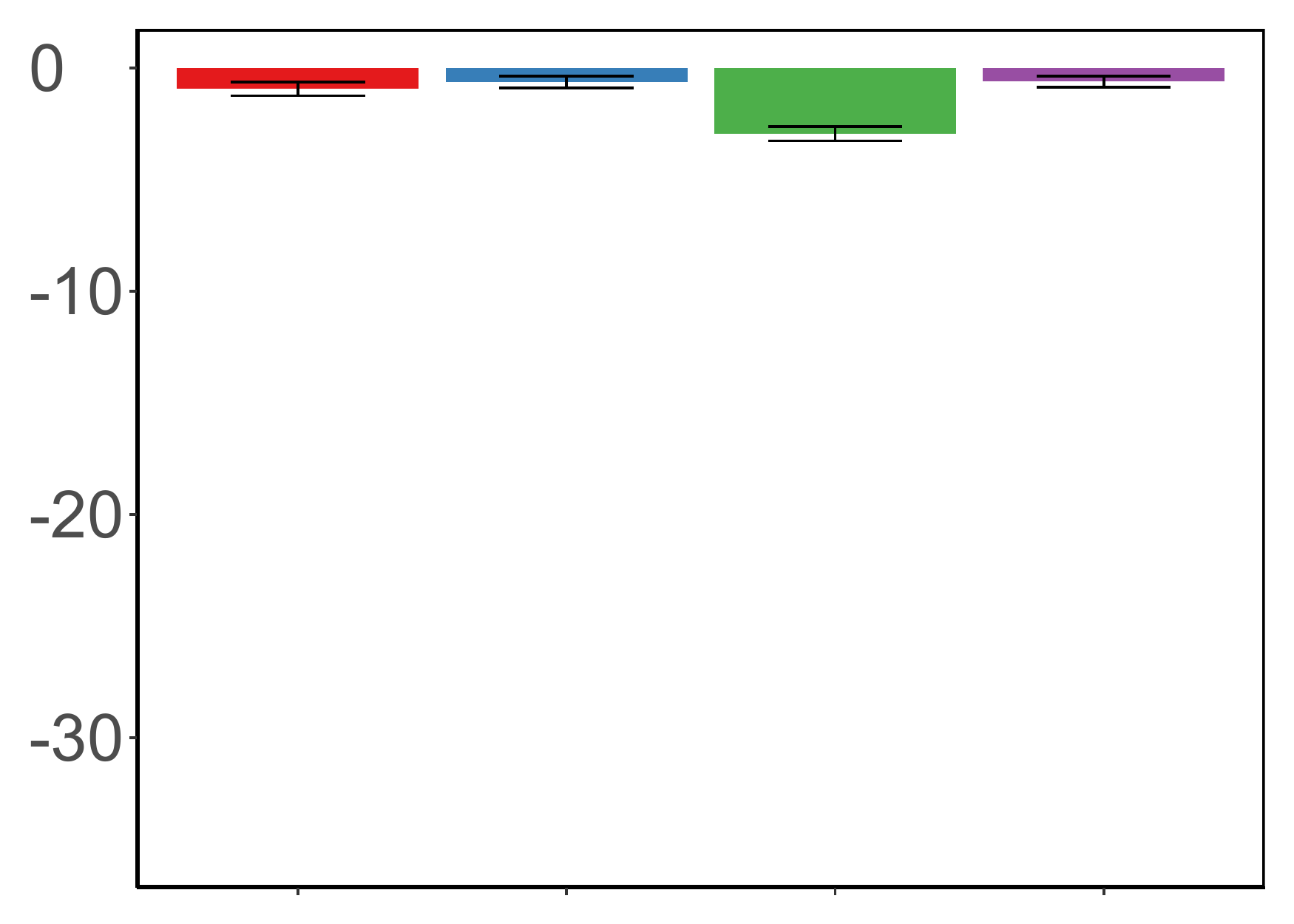} &
\includegraphics[width=\linewidth]{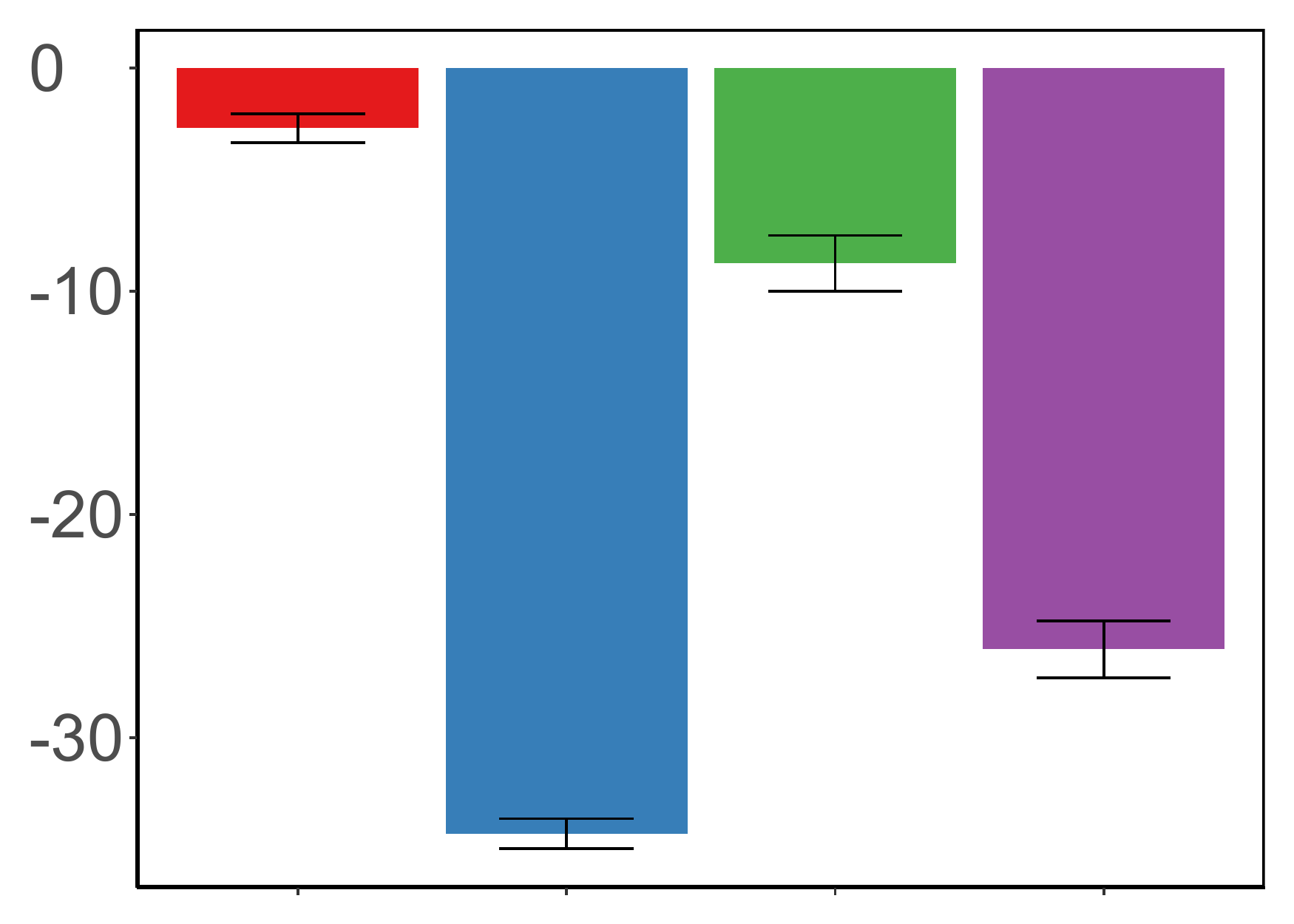} &
\includegraphics[width=\linewidth]{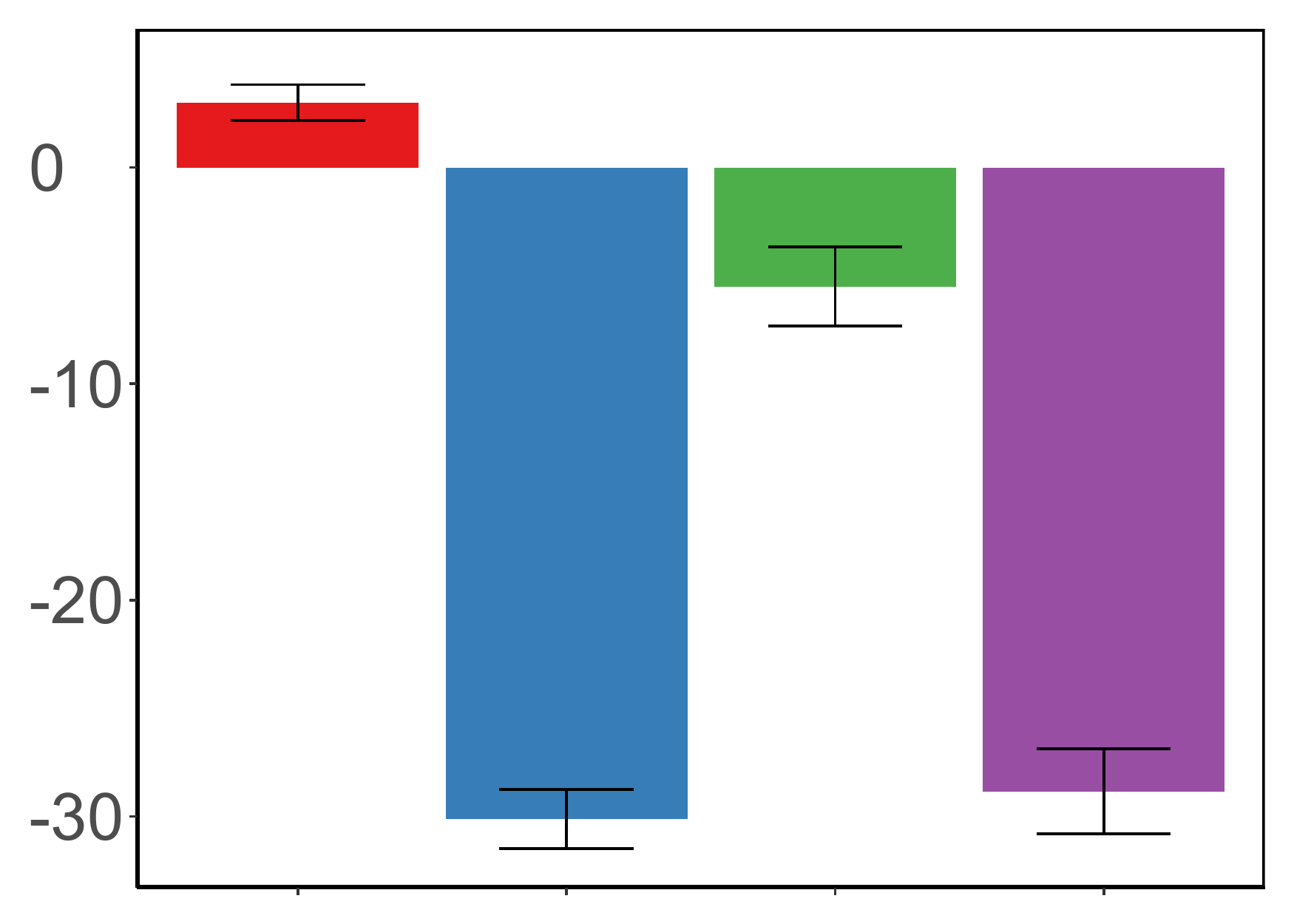} &
\includegraphics[width=\linewidth]{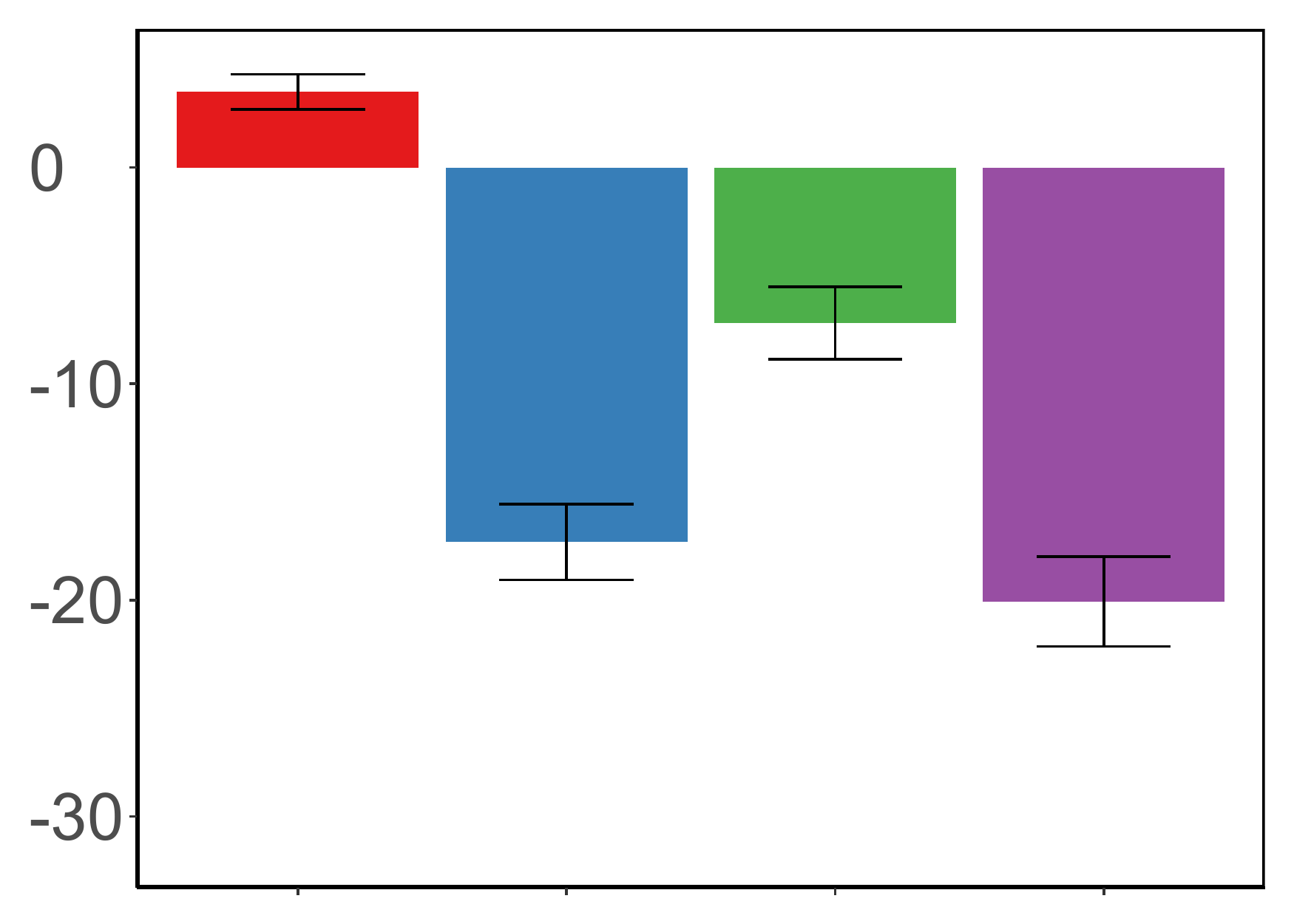} &
\includegraphics[width=\linewidth]{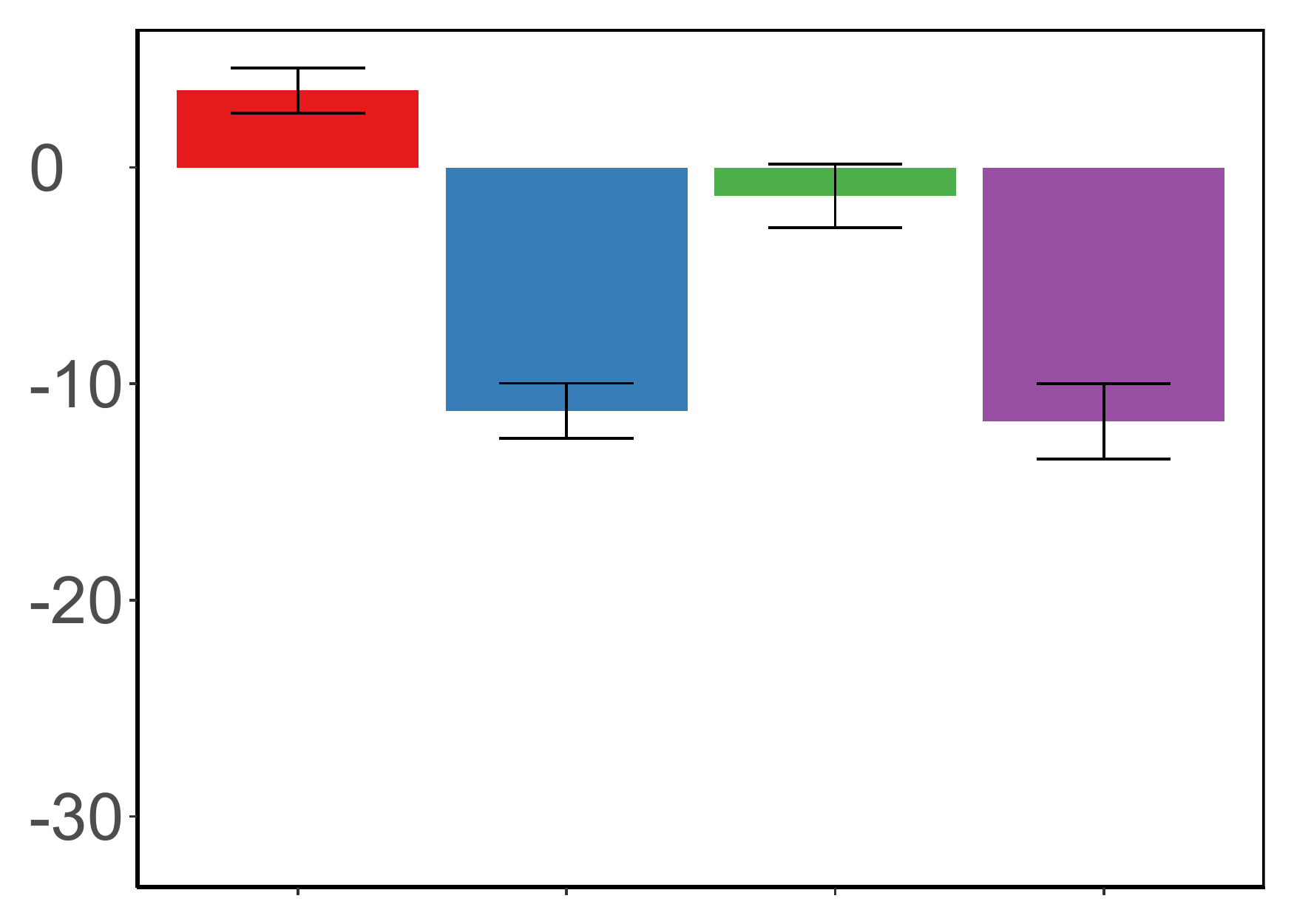} \\
\rotatebox{90}{CS Aarhus} &
\includegraphics[width=\linewidth]{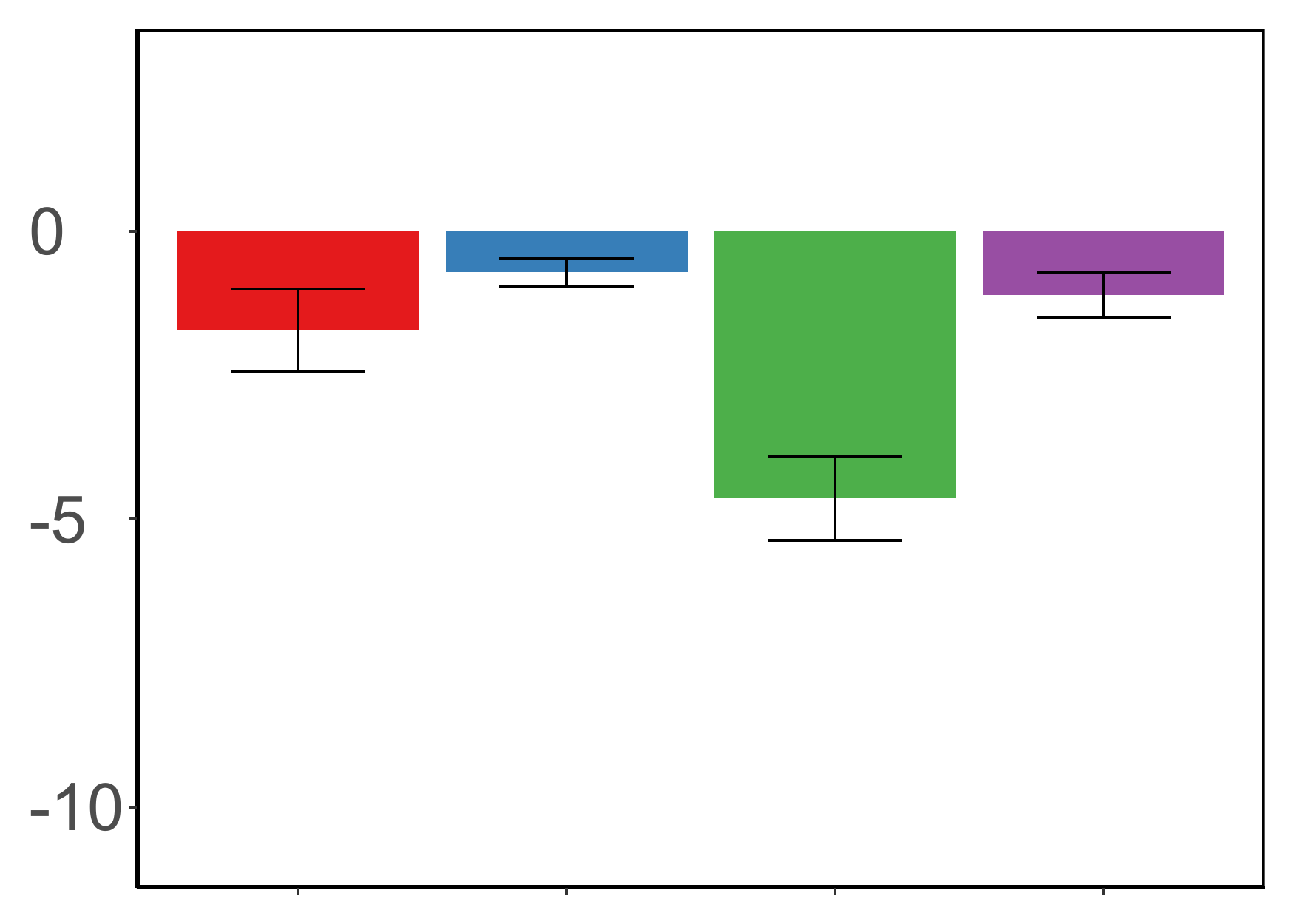} &
\includegraphics[width=\linewidth]{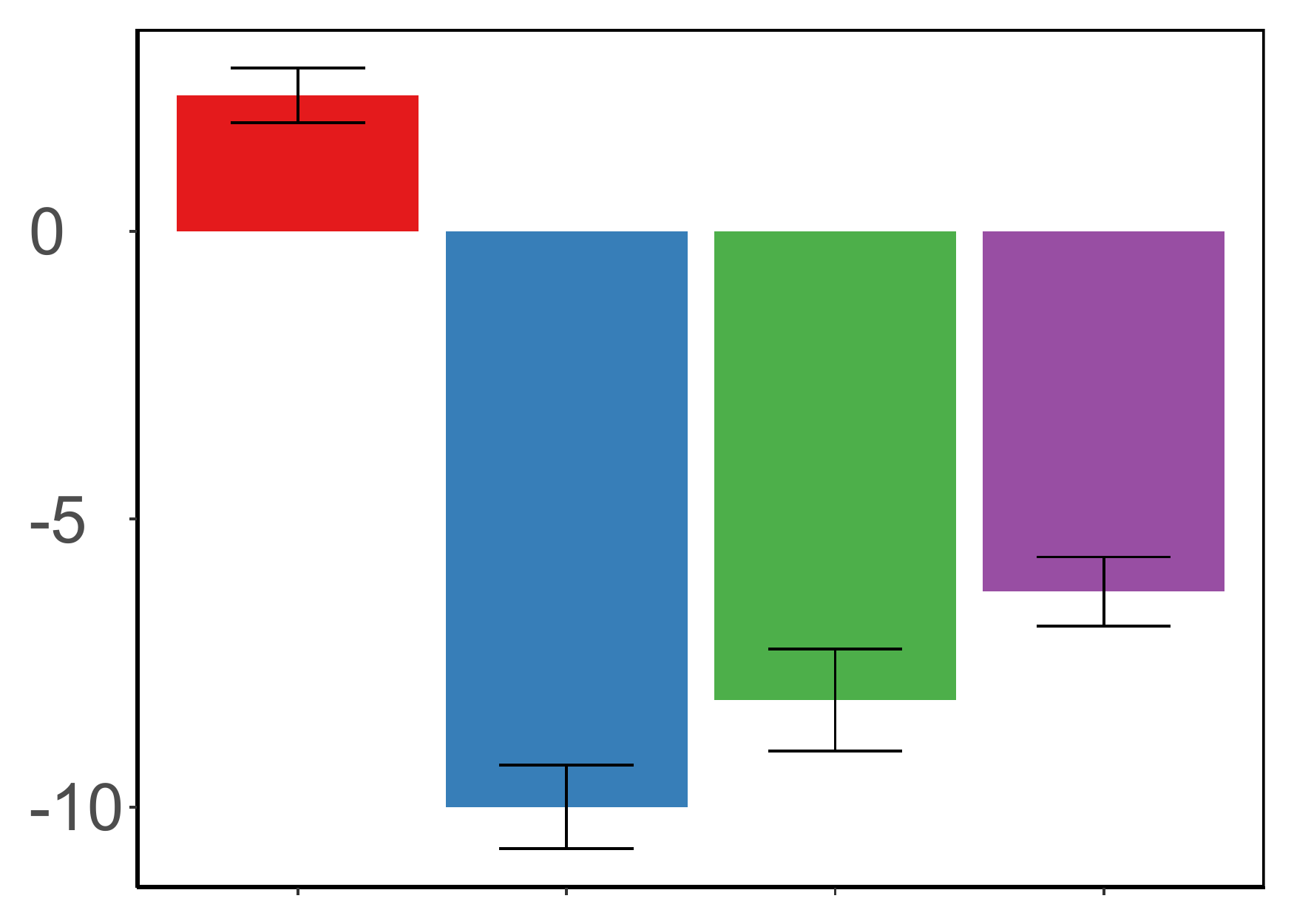} &
\includegraphics[width=\linewidth]{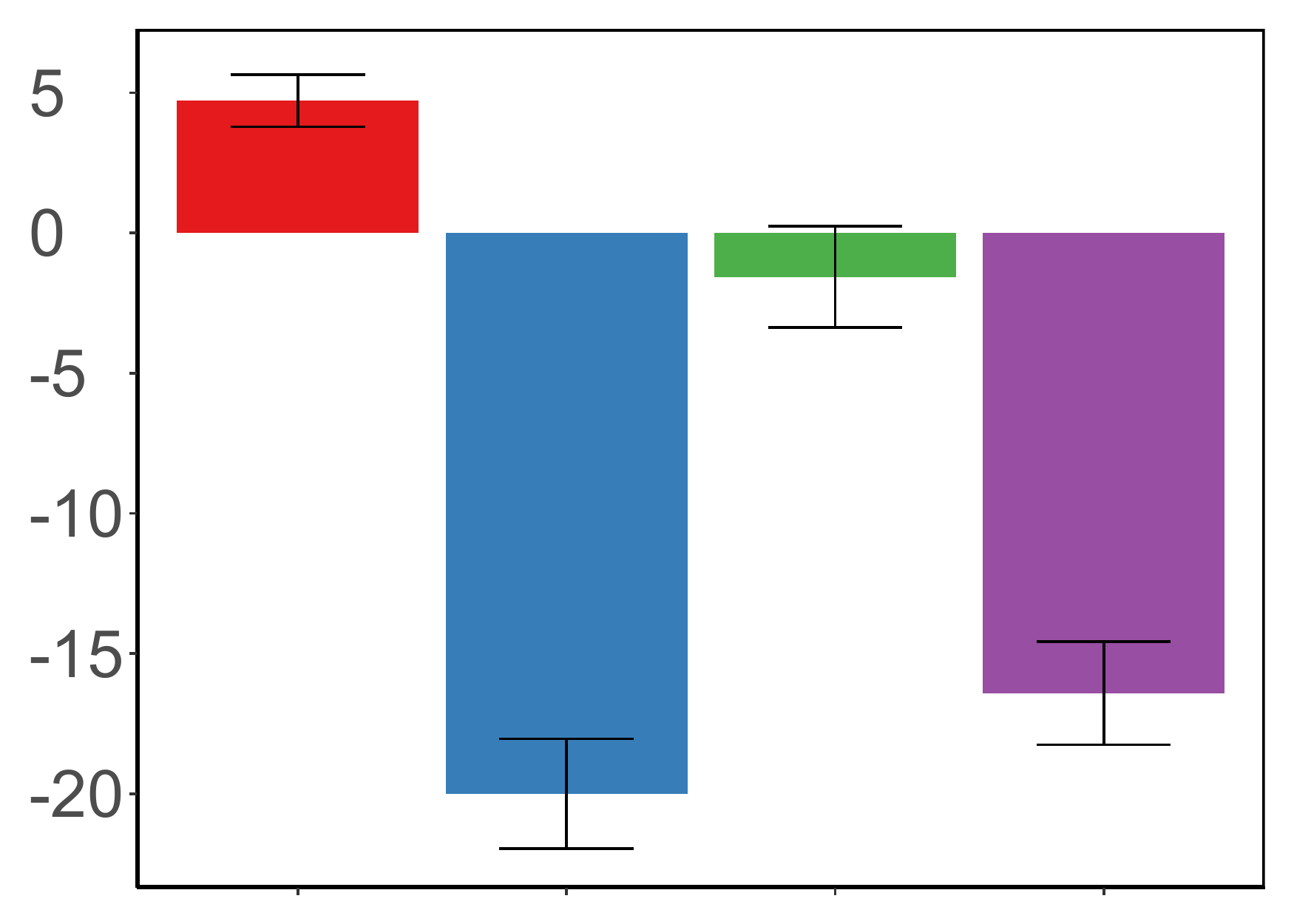} &
\includegraphics[width=\linewidth]{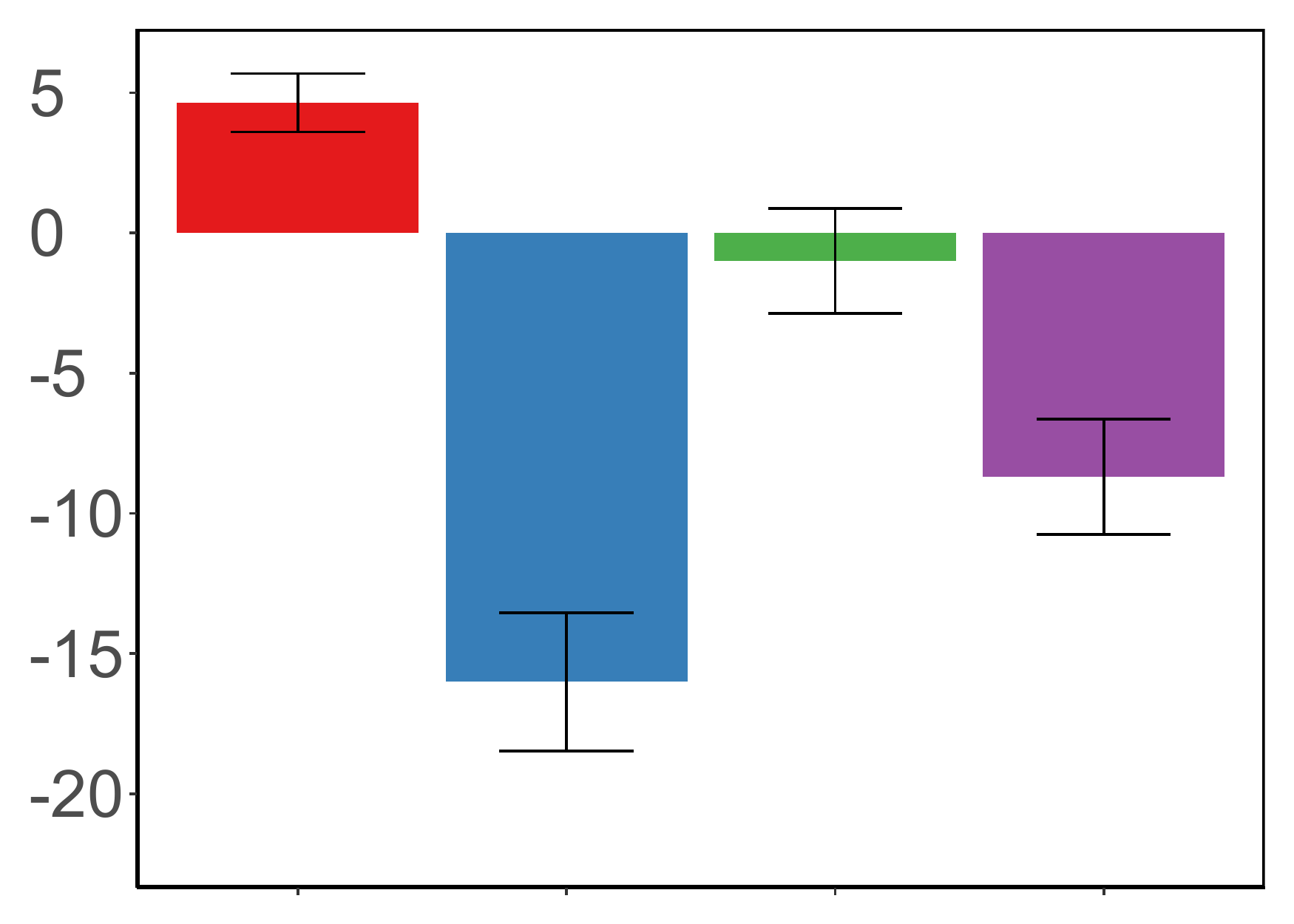} &
\includegraphics[width=\linewidth]{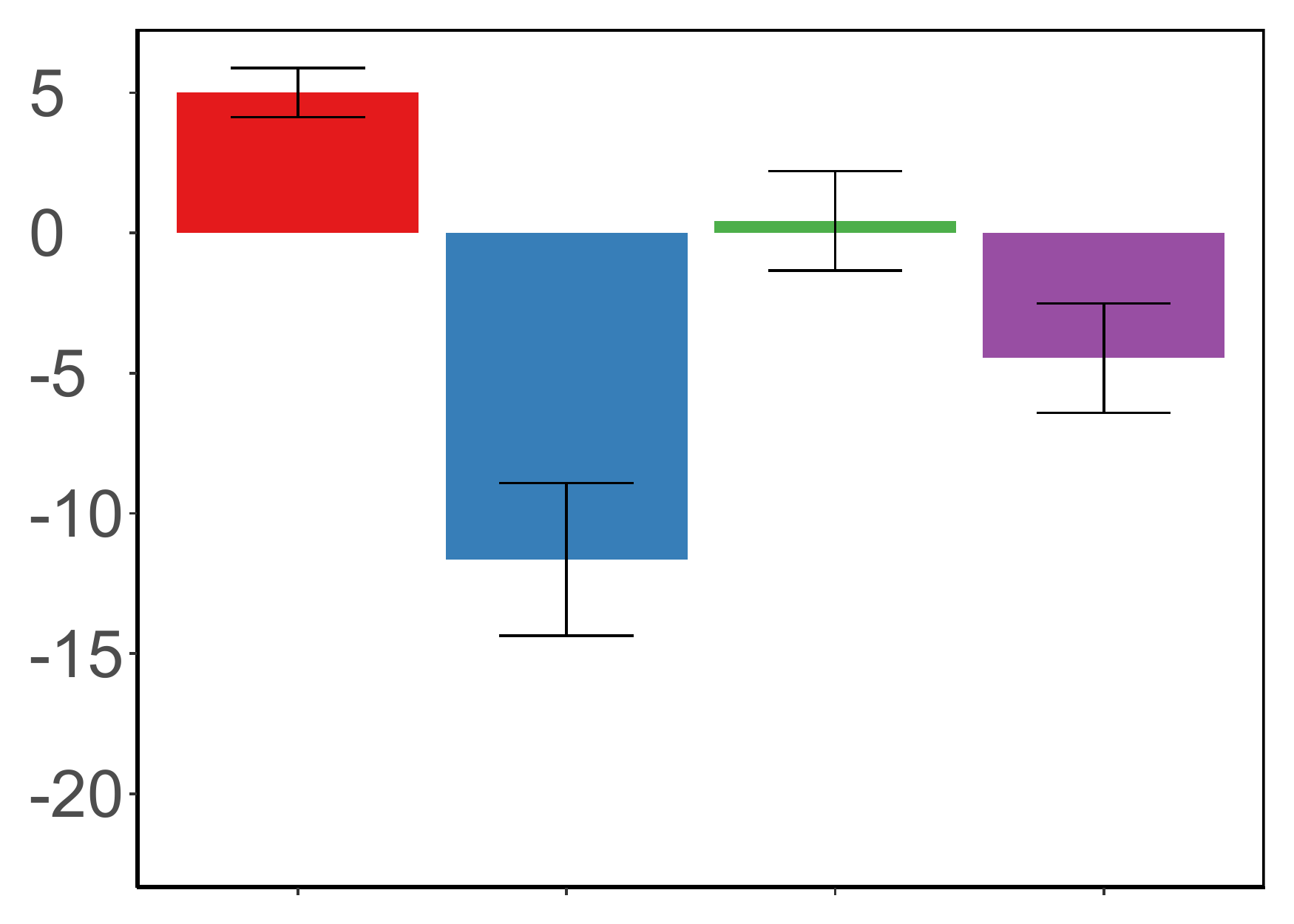} \\
\multicolumn{6}{c}{\includegraphics[width=.35\linewidth]{figures/plots/mhiding/legend-rankDelta}}
\end{tabular}
\caption{
Given different centrality measures and different networks, the figure depicts the average change in centrality ranking of $10$ different evaders as a result of execution of different hiding heuristics.
For the randomly generate networks the experiment is repeated $100$ times, with a new network generated each time.
Error bars represent $95\%$ confidence intervals.
}
\label{fig:simulations-reallife}
\end{figure*}

\subsection{The Simulation Process}
In our simulations, we consider \textit{local} degree, closeness and betweenness centrality, as well as \textit{global} closeness and betweenness centrality. The reason behind excluding global degree centrality is that, as stated in Observation~\ref{thrm:npc-degree-global}, for any given group of contacts, the centrality ranking of the evader does not depend on the way in which connections are distributed across the different layers.
The simulation process is as follows.
For every network, we pick as potential evaders the nodes that are ranked among the top $10$ according to at least one of the five considered centrality measures.
We then simulate the hiding process for each one of those evaders separately.
To this end, we choose the group of contacts to be the neighbors of the evader in the original network. After that, we remove all original edges between the evader and those contacts, and act as if the evader was never connected to those individuals, but rather wants to connect to them while remaining hidden from centrality analysis. Finally, we connect the evader to the contacts using edges chosen by one of our heuristics.
We record the difference between the ranking of the evader in the original, unchanged network, and in the network after running the heuristic. In so doing, we quantify the impact of strategically choosing the relationships to be formed with the group of contacts.
Note that for the local centrality measures, we need to aggregate the centrality scores for each layer into a single ranking for the entire network.
We do so by assigning to each node $v$ the following centrality score: $\frac{1}{\min_{\alpha \in L}r^\alpha(v)}$, where $r^\alpha(v)$ is the ranking of $v$ in layer $\alpha$.

\subsection{Simulation Results}

The results of our simulations are presented in Figures~\ref{fig:simulations-random} and~\ref{fig:simulations-reallife}. Each row corresponds to a network, and each column corresponds to centrality measure.
Each bar represents the change in the evader's ranking after using a particular heuristic (the color of the bar corresponds to the heuristic being used).
A negative change implies that the ranking of the evader decreased, \ie, she became more hidden. In contrast, a positive change implies that the heuristic backfired, \ie, the evader actually became more exposed.

As can be seen, there is no heuristic that dominates the others, i.e., no heuristic is superior against all centrality measures.
The ``All in one'' heuristic proves to be effective in hiding from global closeness centrality in many cases.
Unfortunately, if the network is analyzed with one of the local centrality measures, the evader may become even more exposed.
For every considered centrality measure, either the Density or the Fringe heuristic is among the most effective methods for hiding, and they never make the evader more exposed.
Finally, commenting on the results of the Random heuristic, they demonstrate that it is relatively effective to simply get rid of excess links (\ie, avoid connecting with each node in more than one layer) and spread the remaining connections uniformly.

Our results show also that the global centrality measures are on average much harder to hide from than their local counterparts. This demonstrates the importance of analyzing the entire structure of a multilayer network, rather than focusing on each layer separately.

Regarding the size of the networks used in the simulations, note that the heuristics use only local information and can be easily applied in much larger networks.
However, the cost of computing complete rankings of the multilayer centrality measures, which is necessary for us to present our results, grows quickly with the size of the network.
Hence, we present results for the networks of moderate size.

\section{Conclusions}
\label{sec:conclusions}

We studied the problem of evading centrality analysis in multilayer networks, and analyzed this problem both theoretically and empirically, thereby initiating the study of evading social network analysis tools in multilayer networks. Interesting future directions include developing more sophisticated heuristics for evading centrality measures, and analyzing the problem of evading link-prediction algorithms in multilayer networks.

\section{ Acknowledgments}
Marcin Waniek was supported by the Polish National Science Centre grant 2015/17/N/ST6/03686.
Tomasz Michalak was supported by the Polish National Science Centre grant 2016/23/B/ST6/03599.

\bibliographystyle{abbrv}
\bibliography{bibliography-multilayer}

\end{document}